\documentclass[12pt]{article}
\usepackage{fullpage}
\usepackage{graphicx}
\usepackage[dvipsnames]{xcolor}
\usepackage{caption}
\usepackage{enumerate}

\usepackage{amsmath, amsthm, amssymb}
\usepackage{thmtools, thm-restate}

\usepackage{tikz}
\usetikzlibrary{shapes.geometric, arrows, positioning, calc}

\definecolor{ForestGreen}{rgb}{0.1333,0.5451,0.1333}
\definecolor{DarkRed}{rgb}{0.65,0,0}
\definecolor{Red}{rgb}{1,0,0}
\usepackage[
    linktocpage=true,
    pagebackref=true,
    colorlinks,
    linkcolor=DarkRed,
    citecolor=ForestGreen,
    bookmarks,
    bookmarksopen,
    bookmarksnumbered,
]{hyperref}
\usepackage{cleveref}
\usepackage[T1]{fontenc}

\declaretheorem[numberwithin=section]{theorem}
\declaretheorem[numberlike=theorem, name=Theorem]{thm}

\declaretheorem[numberlike=theorem, name=Lemma]{lem}
\Crefname{lem}{Lemma}{Lemmas}
\declaretheorem[numberlike=theorem]{fact}

\declaretheorem[numberlike=theorem, style=definition]{definition}
\Crefname{def}{Definition}{Definitions}
\declaretheorem[numberlike=theorem, style=definition]{remark}

\usepackage{algorithm}
\usepackage{algpseudocodex}

\newcommand{\INDSTATE}{\Statex\hspace{\algorithmicindent}}

\newcommand{\sm}{\setminus}
\newcommand{\e}{{\rightsquigarrow}}

\newcommand{\poly}{\text{poly}}
\newcommand{\polylog}{\text{polylog}}
\renewcommand{\tt}{\texttt}
\renewcommand{\O}{\mathcal{O}}
\newcommand{\tO}{\tilde{\mathcal{O}}}

\newcommand{\vol}{\mathrm{vol}}
\newcommand{\mc}[1]{\mathcal{#1}}
\newcommand{\R}{\mc{R}}
\newcommand{\B}{\mc{B}}
\newcommand{\C}{\mc{C}}

\renewcommand{\L}{\mc{L}}
\newcommand{\U}{\mc{U}}
\newcommand{\I}{\mc{I}}
\renewcommand{\C}{\mc{C}}
\newcommand{\TRU}{\tt{true}}
\newcommand{\FLS}{\tt{false}}

\newcommand{\SHR}{\tt{Shredders}}
\newcommand{\AKS}{\tt{All-$k$-shredders}}
\newcommand{\LVC}{\tt{LocalVC}}
\newcommand{\fmt}{\texorpdfstring}

\title{Finding Most Shattering Minimum Vertex Cuts of Polylogarithmic Size in Near-Linear Time}

\author{
    Kevin Hua\thanks{
        University of Michigan,
        \tt{kevinhua@umich.edu}
    } \and
    Daniel Li\thanks{
        University of Michigan,
        \tt{dejl@umich.edu}
    } \and
    Jaewoo Park\thanks{
        University of Michigan,
        \tt{pjaewoo@umich.edu}
    } \and
    Thatchaphol Saranurak\thanks{
        University of Michigan,
        \tt{thsa@umich.edu}. Supported by NSF grant CCF-2238138.
    }
}

\date{}
\begin{document}

\pagenumbering{gobble}
\maketitle
\begin{abstract}
    We show the first near-linear time randomized algorithms for listing all minimum vertex cuts of polylogarithmic size that separate the graph into at least \emph{three} connected components (also known as \emph{shredders}) and for finding the \emph{most shattering} one, i.e., the one maximizing the number of connected components. Our algorithms break the quadratic time bound by Cheriyan and Thurimella (STOC'96) for both problems that has stood for more than two decades. Our work also removes a bottleneck to near-linear time algorithms for the vertex connectivity augmentation problem (Jordan '95). Note that it is necessary to list only minimum vertex cuts that separate the graph into at least three components because there can be an exponential number of minimum vertex cuts in general.

    To obtain near-linear time algorithms, we have extended techniques in local flow algorithms developed by Forster et al.\ (SODA'20) to list shredders on a local scale. We also exploit fast queries to a pairwise vertex connectivity oracle subject to vertex failures (Long and Saranurak FOCS'22, Kosinas ESA'23). This is the first application of connectivity oracles subject to vertex failures to speed up a static graph algorithm.
\end{abstract}

\newpage
{
    \hypersetup{linkcolor=DarkRed}
    \tableofcontents
}

\newpage
\pagenumbering{arabic}
\allowdisplaybreaks

\section{Introduction}
Given an $n$-vertex, $m$-edge, undirected graph $G$, a \emph{minimum vertex cut} is a set of vertices whose removal disconnects $G$ of smallest possible size. The \emph{vertex connectivity} of $G$ is the size of any minimum vertex cut. The problem of efficiently computing vertex connectivity and finding a corresponding minimum vertex cut has been extensively studied for more than half a century \cite{Kleitman1969methods, Podderyugin1973algorithm, EvenT75, Even75, Galil80, EsfahanianH84, Matula87, BeckerDDHKKMNRW82, LinialLW88, CheriyanT91, NI92, CheriyanR94, Henzinger97, HenzingerRG00, Gabow06, Censor-HillelGK14}. Let $k$ denote the vertex connectivity of $G$. Recently, an $\O(k^2m)$ time algorithm was shown in \cite{FNSYY20}, which is near-linear when $k = \O(\polylog (n))$. Then, an almost-linear $m^{1+o(1)}$ time algorithm for the general case was finally discovered \cite{LNPSY21,CKLPGS22}. In this paper, we show new algorithms for two closely related problems:
\begin{enumerate}
    \item \label{enu:prob-shattering} Find a \emph{most-shattering} minimum vertex cut, i.e., a minimum vertex cut $S$ such that the number of (connected) components of $G\setminus S$ is maximized over all minimum vertex cuts.
    \item \label{enu:prob-shredder} List all minimum vertex cuts $S$ such that $G \setminus S$ has at least three components.
\end{enumerate}
In the latter problem, the restriction to at least three components is natural for polynomial time algorithms. This is because there are at most $n$ many minimum vertex cuts whose removal results in at least three components \cite{Jor99}, while the total number of minimum vertex cuts can be exponential or, more specifically, at least $2^{k}(n/k)^{2}$ \cite{PY21}.\footnote{Both problems are specific to vertex cuts; recall that every minimum \emph{edge} cut always separates a graph into two components.} Before we discuss the literature on both problems, let us introduce some terminology. We say that a vertex set $S$ is a \emph{separator} if $G \sm S$ is not connected and a \emph{shredder} if $G \sm S$ has at least three components. An $s$-separator ($s$-shredder) is a separator (shredder) of size $s$. In other words, the second problem is to list all $k$-shredders.\footnote{For example, in a tree, vertices with degree at least three are 1-shredders. In the complete bipartite graph $K_{k,k}$ with $k\geq3$, both the left and right halves of the bipartition are $k$-shredders.}

\paragraph{History.}
Most shattering minimum vertex cuts, or most shattering min-cuts for short, have played an important role in the \emph{vertex connectivity augmentation} problem. In this problem, we wish to determine the minimum number of edges required to augment the vertex connectivity of $G$ by one. Let $b(G)$ denote the number of components in $G \setminus S$ when $S$ is a most shattering min-cut. Jordán \cite{Jor95} showed that $b(G)$ is a lower bound for the optimal solution of the augmentation problem. He asked whether there exist efficient algorithms for computing $b(G)$. Without computing $b(G)$ explicitly, he gave an approximation algorithm based on $b(G)$ and another quantity omitted here. 

The first efficient algorithm for computing most shattering min-cuts (Problem~\ref{enu:prob-shattering}) was by Cheriyan and Thurimella \cite{CT99}. They showed an algorithm with $\O(k^2n^2+k^3n^{3/2})$ running time and used it as a subroutine to speed up Jordán's approximation algorithm mentioned above to $\O(\min(k,\sqrt{n})k^{2}n^{2}+kn^{2}\log n)$ time.\footnote{Years later, Végh \cite{Veg10} gave the first exact polynomial time algorithm that runs in $\O(kn^{7})$ time using a different approach.} A more general problem was also studied: given an arbitrary parameter $k'$, we must determine the minimum number of edges required to increase the vertex connectivity of $G$ from $k$ to $k'$. Although polynomial time algorithms are still not known for this problem, Jackson and Jordán \cite{JJ05} showed a fixed-parameter-tractable algorithm that runs in $\O(n^{6}+c_{k}n^{3})$ time, where $c_{k}$ is an exponential value that depends only on $k$. The parameter $b(G)$ is also important in both their algorithm and analysis. 

Cheriyan and Thurimella's algorithm also lists all $k$-shredders in the same time, effectively solving Problem \ref{enu:prob-shredder}. Subsequently, much progress has been made on bounding the number of $k$-shredders. Jordan \cite{Jor99} showed that, for any $k$, the number of $k$-shredders in any $k$-vertex-connected graph is at most $n$. This bound was analyzed and improved for specific values of $k$ in a long line of work \cite{Tsu03, LN07, Ega08, EOT08, Hir10, Tak20, TH20, HTH20}. Faster algorithms for listing $k$-shredders for small values of $k$ are known. For $k=1$, we can list all 1-shredders by listing all articulation points of $G$ with degree at least three. This can be done in linear time, as shown by Tarjan \cite{Tar74}. For $k=2$, listing all 2-shredders can be done in linear time after decomposing $G$ into triconnected components, as shown by Hopcroft and Tarjan \cite{HT73}. For $k=3$, listing all 3-shredders can be done in linear time, as shown by Hegde \cite{Heg06}.

Hegde's motivation behind his algorithm is a surprising connection to the problem of finding an even-length cycle in a directed graph. Vazirani and Yannakakis \cite{VY90} showed that this problem is equivalent to many other fundamental questions, e.g., checking if a bipartite graph has a Pfaffian orientation (see \cite{Kas63,Kas67}) and more. Via the work of \cite{RST99}, Hegde observed that to speed up the algorithm for finding Pfaffian orientations in bipartite graphs beyond $\O(n^{2})$ time, one needs to quickly determine whether a special type of bipartite graph contains a 4-shredder.\footnote{It turns out that this special type of bipartite graph (defined as a \emph{brace} by \cite{RST99}) is 3-connected when it contains at least five vertices, but not necessarily 4-connected. To truly remove this bottleneck, one must at the very least determine whether a 3-connected brace contains a 4-shredder. Therefore, our algorithm does not remove this bottleneck. In the previous version of this paper, we misunderstood Hegde's statement and incorrectly claimed that we removed this bottleneck.} Up until now, the fastest known algorithm for listing 4-shredders in 4-connected graphs is by \cite{CT99}, which still requires $\Omega(n^{2})$ time.

The algorithm by \cite{CT99} for both Problems \ref{enu:prob-shattering} and \ref{enu:prob-shredder} has remained state-of-the-art for more than 20 years. Moreover, its approach inherently requires quadratic time because it makes $\Omega(k^2+n)$ max flow calls. Unfortunately, its quadratic runtime has become a bottleneck for the related tasks mentioned above. Naturally, one may wonder: do subquadratic algorithms exist for both problems?

\paragraph{Our Contribution.}
We answer the above question in the affirmative by showing a randomized algorithm for listing all $k$-shredders and computing a most shattering min-cut in near-linear time for all $k = \O(\polylog(n))$. Our main results are stated below.

\begin{restatable}{thm}{thmaks} \label{thm:aks}
    Let $G = (V, E)$ be an $n$-vertex $m$-edge undirected graph with vertex connectivity $k$. There exists an algorithm that takes $G$ as input and correctly lists all $k$-shredders of $G$ with probability $1 - n^{-97}$ in $\O(m + k^5 n \log^4 n)$ time.
\end{restatable}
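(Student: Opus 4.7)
The plan is a generate-and-verify pipeline. In the generation phase, we produce a set $\mathcal{K}$ of candidate $k$-sized vertex subsets that is guaranteed to contain every $k$-shredder of $G$ and has size $\tO(n \cdot \poly(k))$; in the verification phase, we test each $S \in \mathcal{K}$ for whether $G \sm S$ has at least three components and discard those that do not. The $m$ term in the stated runtime will come from reading the input and preprocessing a pairwise vertex-failure connectivity oracle, while the $k^5 n \log^4 n$ term accounts for the generation step together with the oracle queries and their $\poly(k) \polylog n$ overhead.

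For verification, we preprocess the vertex-failure connectivity oracle of Long--Saranurak or Kosinas, which answers queries of the form ``given $F \subseteq V$ with $|F| \le k$, are $u$ and $v$ connected in $G \sm F$?'' in $\poly(k) \polylog n$ time each after near-linear preprocessing. Given a candidate $S$ of size $k$, the number of components of $G \sm S$ can be determined using $\O(k)$ oracle queries against representative vertices of each putative component; these representatives will be produced as a by-product of candidate generation. Since $|\mathcal{K}| = \tO(n \cdot \poly(k))$, the total verification cost is $\tO(n \cdot \poly(k))$.

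Candidate generation is the main technical step. We extend the local minimum vertex cut algorithm of Forster et al.\ (SODA'20), which, given a seed vertex $v$ and a volume budget $\mu$, finds in $\tO(\mu \cdot \poly(k))$ time some minimum vertex cut whose side containing $v$ has volume at most $\mu$, if one exists. Our extension enumerates \emph{every} such small-side min cut around $v$ by iteratively peeling off a discovered small side, rerunning the local routine on the residual graph, and repeating until no small-side min cut of size $k$ remains at $v$. We run this extension from $\tO(n/\mu)$ random seeds at each doubling scale $\mu = 1, 2, 4, \ldots, \O(\vol(G))$. For every $k$-shredder $S$ with components $C_1, \ldots, C_t$ of $G \sm S$, the condition $t \ge 3$ forces the smallest component $C_i$ to have volume at most $\vol(G)/3$, so at a scale $\mu \ge \vol(C_i)$ some random seed lands in $C_i$ with high probability and our extension emits $S$. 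The total runtime across all scales and seeds telescopes to $\tO(n \cdot \poly(k))$, and by a local form of Jord\'an's theorem the emitted candidate list admits the same size bound.

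The hardest part will be extending Forster et al.'s local vertex cut routine from returning a single cut to enumerating all small-side min cuts at a given seed, while maintaining the $\tO(\mu \cdot \poly(k))$ running time and guaranteeing high-probability completeness (every $k$-shredder $S$ is emitted from at least one seed drawn from $G \sm S$). This likely requires a structural handle on how small-side min cuts at a fixed seed interact --- crossings, nesting, and a local analog of the cactus-style structure for vertex cuts --- combined with an iterative peeling whose residual graphs shrink quickly enough that the repeated local calls stay within budget. A secondary subtlety is ensuring the $\O(k)$ oracle queries in verification actually touch every component of $G \sm S$; this is resolved by having the local generator return, alongside each candidate $S$, a short witness list of vertices drawn from distinct sides of the cut that it has already explored.
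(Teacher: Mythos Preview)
Your pipeline has a genuine gap in the verification step, and your generation step diverges substantially from the paper in a way that leaves the hardest issue unresolved.

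On verification: you assert that ``the number of components of $G \sm S$ can be determined using $\O(k)$ oracle queries against representative vertices of each putative component; these representatives will be produced as a by-product of candidate generation.'' But the local routine at a seed $v$ naturally hands you only \emph{two} representatives: $v$ itself (in the small side) and the far endpoints of the $k$ flow paths (in the large side). To certify that $S$ is a shredder you need a vertex in a \emph{third} component, and nothing in your peeling description produces one. This is precisely the difficulty the paper isolates as the ``unverified set'' $U$ and spends an entire section resolving: it splits on whether some $u \in U$ has degree at most $\nu$ (in which case one scans $N(u)$ and uses the oracle edge by edge, costing $\O(k\nu)$ rather than $\O(k)$) or every vertex of $U$ has high degree (in which case $U$ forms a ``wall'' and a separate global sampling pass recovers the missing component). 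The paper even remarks that a black-box oracle counting components of $G\sm S$ in $\tO(\poly(k))$ time cannot exist under SETH, so your $\O(k)$-query claim cannot be made to work without the extra structural input you have not supplied.

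On generation: the paper does not enumerate small-side min cuts by ``iteratively peeling off a discovered small side and rerunning.'' Instead it ports Cheriyan--Thurimella's bridge/candidate/straddle machinery to the local setting: from the $k$ openly-disjoint paths $\Pi$ out of the seed, every captured shredder either appears as $N(\Gamma)$ for some bridge $\Gamma$ of $\Pi$ (and is listed directly after a pruning pass), or coincides with a single boundary set $U$ where the budgeted BFS stalled. Your peeling idea is a different route, but as you yourself flag, it needs a structural lemma controlling how the small-side min cuts at a fixed seed interact, and you have not stated one. Without it there is no bound on the number of peeling rounds or on $|\mathcal{K}|$; invoking ``a local form of Jord\'an's theorem'' does not help, since Jord\'an bounds the number of \emph{shredders}, not the number of $k$-sized min cuts your peeling would emit before verification.
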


\begin{restatable}{thm}{corshatter} \label{cor:max_sep}
   Let $G = (V, E)$ be an $n$-vertex $m$-edge undirected graph with vertex connectivity $k$. There exists a randomized algorithm that takes $G$ as input and returns a most shattering minimum vertex cut (if one exists) with probability $1 - n^{-97}$. The algorithm runs in $\O(m + k^5n \log^4 n)$ time.
\end{restatable}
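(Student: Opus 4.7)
The plan is to reduce \Cref{cor:max_sep} to \Cref{thm:aks} by listing all candidates and selecting the best one. Observe the dichotomy: a most shattering minimum vertex cut $S^\ast$ either disconnects $G$ into at least three components, in which case $S^\ast$ is itself a $k$-shredder, or into exactly two components, in which case no $k$-shredder exists and every minimum vertex cut is most shattering. Hence it suffices to (a) list all $k$-shredders of $G$, (b) if the resulting list is non-empty, return the one maximizing the number of components of $G \setminus S$, and (c) otherwise return any minimum vertex cut.

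For step (a), I would invoke \Cref{thm:aks}, which produces all $k$-shredders in $\tO(m + k^5 n \log^4 n)$ time with success probability $1 - n^{-97}$. For step (c), I would compute any minimum vertex cut using a standard algorithm: for instance, the $\tO(k^2 m)$ algorithm of \cite{FNSYY20} when $k = \O(\polylog n)$, or the $m^{1+o(1)}$ algorithm of \cite{CKLPGS22} in general. Both fit comfortably within the claimed budget, and the cut returned is trivially most shattering by the dichotomy. The overall failure probability remains $n^{-97}$ since only step (a) is randomized.

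The main obstacle is step (b): computing the number of components of $G \setminus S$ for each listed shredder $S$ within the overall budget. A naive BFS per shredder costs $\O(m)$, and since there can be up to $\Theta(n/(k-1))$ many $k$-shredders by the bound of \cite{Jor99}, this totals $\tO(mn/k)$, which exceeds our target when $m$ is large. My plan is to avoid this blowup by instrumenting the listing algorithm of \Cref{thm:aks}: because its local-flow-based enumeration identifies each $k$-shredder together with at least one ``side'' of the corresponding cut, the routine can be modified to emit, alongside each shredder, the full partition of $V \setminus S$ into components at no additional asymptotic cost, after which the maximization is trivial. As a fallback, the vertex-failure connectivity oracle already constructed inside \Cref{thm:aks} can answer connectivity queries in $G \setminus S$ in $\poly(k)$ time, and the tree-like structure of $k$-shredders from \cite{Jor99} can be exploited to amortize component identification across shredders, keeping the overall runtime within $\tO(m + k^5 n \log^4 n)$.
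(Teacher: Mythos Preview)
Your high-level strategy matches the paper's: instrument the listing routine so that every time a $k$-shredder is emitted, the number of components of $G\setminus S$ is recorded alongside it, then take the max (and fall back to any minimum vertex cut if no shredder exists; the paper handles this exactly as you do). However, your proposal glosses over where the actual work lies, and one of your claims is too strong.

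First, ``emit \ldots\ the full partition of $V\setminus S$ into components at no additional asymptotic cost'' is not achievable and not needed. For an unbalanced shredder the large side $C$ has volume $\Omega(m/k)$, so writing it out for each of up to $\Theta(n)$ shredders would already blow the budget; you only need the \emph{count}. More importantly, knowing ``one side'' of the cut does not give you the count: when a shredder surfaces as an \emph{unverified set} $U$, the local routine knows only $Q_x$ (and perhaps one more witness component), not how many pieces $G\setminus U$ has. The paper devotes Section~\ref{sec:most-shattering} to three separate counting modifications: for shredders found directly by the local algorithm one simply counts bridges $\Gamma$ with $N(\Gamma)=S$ (\Cref{lem:local-mod}); for low-degree unverified sets one scans the $\le \nu$ neighbours of a low-degree vertex of $U$ and uses the DFS-tree internal-component structure underlying the Kosinas oracle to group them into distinct components in $\O(k^2\nu)$ time (\Cref{lem:low-deg-mod}); for high-degree unverified sets one reuses the geometric edge-sampling of \Cref{alg:high-deg} so that every component of $\R$ is hit with high probability, maintaining a set of representatives per $U$ (\Cref{lem:high-deg-mod}). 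None of these is entirely routine, and the low-degree case in particular needs the oracle's internals, not just black-box pairwise queries.

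Second, your fallback cannot work as stated: the paper explicitly notes (opening of Section~\ref{sec:most-shattering}) that an oracle answering ``how many components does $G\setminus S$ have'' in $\tO(\poly(k))$ time is ruled out under SETH by \cite{LS22}, so pairwise connectivity queries alone do not yield a per-shredder count in $\poly(k)$. Your amortization idea via the laminar structure of \cite{Jor99} is plausible but speculative and is not the route the paper takes.
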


\paragraph{Technical Highlights.}
Given recent developments in fast algorithms for computing vertex connectivity, one might expect that some of these modern techniques (e.g. local flow \cite{NSY19,FNSYY20}, sketching \cite{LNPSY21}, expander decomposition \cite{SY22}) will be useful for listing shredders and finding most shattering min-cuts. It turns out that they are indeed useful, but \emph{not enough}.

We have extended the techniques developed for local flow algorithms \cite{FNSYY20} to list $k$-shredders and compute the number of components that they separate. Specifically, our local algorithm lists $k$-shredders that separate the graph in an unbalanced way in time \emph{proportional to the smaller side} of the cut. To this end, we generalize the structural results related to shredders from \cite{CT99} to the local setting. To carry out this approach, we bring a new tool into the area; our algorithm queries a pairwise connectivity oracle subject to vertex failures \cite{LS22,Kos23}. Surprisingly, this is the first application of using connectivity oracles subject to vertex failures to speed up a static graph algorithm.

\paragraph{Future Work.}
Cheriyan and Thurimella also presented a dynamic algorithm for maintaining all $k$-shredders of a graph over a sequence of edge insertions/deletions. The dynamic algorithm has a preprocessing step which runs in $\O((k + \log n)kn^2 + k^3n^{3/2})$ time. It supports edge updates in $\O(m + (\min(k, \sqrt{n}) + \log n) kn)$ time and queries for a most shattering minimum vertex cut in $\O(1)$. Using our algorithm, their preprocessing time is immediately improved by a polynomial factor. We also anticipate that our algorithm could guide the design of a faster vertex connectivity augmentation algorithm.

\paragraph{Organization.}
Our paper is organized as follows. \Cref{sec:technical_overview} describes the bottleneck of the algorithm presented in \cite{CT99} and our high-level casework approach to resolving this bottleneck. We also solve the simpler case and explain why the remaining case requires more sophisticated methods. \Cref{sec:prelims} provides the preliminaries needed throughout the paper. \Cref{sec:review} explains the core algorithm in \cite{CT99} and restates important definitions and terminology. \Cref{sec:local,sec:unverified,sec:unbalanced} are devoted to solving the hard case of this problem introduced in \Cref{sec:technical_overview}. \Cref{sec:all-k-shredders} summarizes the previous sections and shows the algorithm for listing all $k$-shredders. \Cref{sec:most-shattering} provides the algorithm for finding the most shattering min-cut. Lastly, \Cref{sec:conclusion} discusses the overall results and implications for open problems.


\section{Technical Overview} \label{sec:technical_overview}
Let $G = (V, E)$ be an $n$-vertex $m$-edge undirected graph with vertex connectivity $k$. Cheriyan and Thurimella developed a deterministic algorithm called $\AKS(\cdot)$ that takes $G$ as input and lists all $k$-shredders of $G$ in $\O(knm + k^2\sqrt{n}m)$ time. They improved this bound by using the sparsification routine developed in \cite{NI92} as a preprocessing step. Specifically, there exists an algorithm that takes $G$ as input and produces an edge subgraph $G'$ on $\O(kn)$ edges such that all $k$-shredders of $G$ are $k$-shredders of $G'$ and vice versa. The algorithm runs in $\O(m)$ time. Using this preprocessing step, they obtained the bound for listing all $k$-shredders in $\O(m) + \O(k(kn)n + k^2 \sqrt{n}(kn)) = \O(k^2n^2 + k^3n^{3/2})$ time.

In this paper, we resolve a bottleneck of $\AKS(\cdot)$, improving the time complexity of listing all $k$-shredders from $\O(knm + k^2\sqrt{n}m)$ to $\O(k^4 m \log^4 n)$. Using the same sparsification routine, our algorithm runs in $\O(m + k^5 n \log^4 n)$ time.

\subsection{The Bottleneck} \label{sec:bottleneck}
A key subroutine of $\AKS(\cdot)$ is a subroutine called $\SHR(\cdot, \cdot)$ that takes a pair of vertices $(x, y)$ as input and lists all $k$-shredders that separate $x$ and $y$. This subroutine takes $\O(m)$ time plus the time to compute a flow of size $k$, which is at most $\O(km)$ time. The idea behind $\AKS(\cdot)$ is to call $\SHR(\cdot, \cdot)$ multiple times to list all $k$-shredders. $\AKS(\cdot)$ works as follows. Let $Y$ be an arbitrary set of $k + 1$ vertices. Any $k$-shredder $S$ will either separate a pair of vertices in $Y$, or separate a component containing  $Y \sm S$ from the rest of the graph.

For the first case, they call $\SHR(u, v)$ for all pairs of vertices $(u, v)\in Y \times Y$. This takes $\O(k^2)$ calls to $\SHR(\cdot, \cdot)$, which in total runs in $\O(k^3m)$ time. For the second case, we can add a dummy vertex $z$ and connect $z$ to all vertices in $Y$. Notice that any $k$-shredder $S$ separating a component containing $Y \sm S$ from the rest of the graph must separate $z$ and a vertex $v \in V \sm (Y \cup S)$. We can find these $k$-shredders by calling $\SHR(z, v)$ for all $v \in V \sm Y$.

\paragraph{The Expensive Step.}
The bottleneck of Cheriyan's algorithm is listing $k$-shredders that fall into the second case. They call $\SHR(\cdot, \cdot)$ $\Omega(n)$ times as $|V \sm Y| = n - k = \Theta(n)$ for small values of $k$. This step is precisely where the quadratic $\O(knm)$ factor comes from. To bypass this bottleneck, we categorize the problem into different cases and employ randomization.

\subsection{Quantifying a Notion of Balance} \label{sec:balanced}
Let $S$ be a $k$-shredder of $G$. A useful observation is that if we obtain vertices $x$ and $y$ in different components of $G \sm S$, then $\SHR(x, y)$ will list $S$. To build on this observation, one can reason about the relative \emph{sizes} of the components of $G \sm S$ and employ a random sampling approach. Let $C$ denote the ``largest'' component of $G \sm S$. If $C$ does not greatly outsize the remaining components, we can obtain two vertices in different components of $G \sm S$ without too much difficulty. To capture the idea of relative \emph{size} between components of $G \sm S$, we define a quantity called \emph{volume}. This definition is used commonly throughout the literature, but we have slightly altered it here.

\begin{definition}[Volume] \label{def:volume}
    For a vertex set $Q$, we refer to the \emph{volume} of $Q$, denoted by $\vol(Q)$, to denote the quantity
        $$\vol(Q) = |\{(u, v) \in E \mid u \in Q\}|.$$
    That is, $\vol(Q)$ denotes the number of edges $(u, v)$ such that either $u \in Q$ or $v \in Q$. If $\mc{Q}$ is a collection of disjoint vertex sets, then we define the volume of $\mc{Q}$ as the quantity
        $$\vol(\mc{Q}) = \vol\left(\bigcup_{Q \in \mc{Q}} Q\right).$$
\end{definition}

For convenience, we will say that a $k$-shredder $S$ admits \emph{partition} $(C, \R)$ to signify that $C$ is the largest component of $G \sm S$ by volume and $\R$ is the set of remaining components. We will also say $x \in \R$ to denote a vertex in a component of $\R$. We can categorize a $k$-shredder $S$ based on the volume of $\R$.

\begin{restatable}[Balanced/Unbalanced $k$-Shredders]{definition}{defbalanced} \label{def:balanced}
   Let $S$ be a $k$-shredder of $G$ with partition $(C, \R)$. We say $S$ is \emph{balanced} if $ \vol(\R) \geq m/k$. Conversely, we say $S$ is \emph{unbalanced} if $ \vol(\R) < m/k$.
\end{restatable}

Suppose that $S$ is a $k$-shredder with partition $(C, \R)$. Building on our discussion above, if $S$ is balanced, then $C$ does not greatly outsize the rest of the graph (by a factor of $k$ at most). Conversely, if $S$ is unbalanced, then $C$ greatly outsizes the rest of the graph.

\subsection{Listing Balanced \fmt{$k$}--Shredders via Edge Sampling} \label{subsec:balanced}
Listing all balanced $k$-shredders turns out to be straightforward via random edge sampling. Let $S$ be a $k$-shredder that admits partition $(C, \R)$. Suppose that $S$ is balanced. The idea is to sample two edges $(x,x')$ and $(y, y')$ such that $x$ and $y$ are in different components of $G \sm S$. Then, $\SHR(x, y)$ will list $S$. Intuitively, this approach works because we know that each component of $G \sm S$ cannot be too large, as $S$ is a balanced $k$-shredder. 
More precisely, by sampling $\O(k \log n)$ pairs of edges, with high probability, at least one pair of edges will hit different connected components because the biggest component $C$ has volume at most $m - m/k$. The formal statement is given below. Since its proof is standard, we defer it to \Cref{sec:omitted-proofs:2}.

\begin{restatable}{lem}{lembalanced} \label{lem:balanced}
    Let $G = (V,E)$ be an $n$-vertex $m$-edge undirected graph with vertex connectivity $k$. There exists a randomized algorithm that takes $G$ as input and returns a list $\L$ that satisfies the following. If $S$ is a balanced $k$-shredder of $G$, then $S \in \L$ with probability $1 - n^{-100}$. Additionally, every set in $\L$ is a $k$-shredder of $G$. The algorithm runs in $\O(k^2 m \log n)$ time.
\end{restatable}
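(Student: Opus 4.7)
My plan is to realize the lemma with a pure random-edge-pair sampler. The algorithm draws $t = \Theta(k \log n)$ pairs of edges $(e_1^{(\ell)}, e_2^{(\ell)})$ independently and uniformly at random from $E$, and for each pair it enumerates the four endpoint combinations $(x, y)$ (with $x$ an endpoint of $e_1^{(\ell)}$ and $y$ an endpoint of $e_2^{(\ell)}$), calls $\SHR(x, y)$, and appends every returned vertex set to $\L$. Because $\SHR$ only returns genuine $k$-shredders of $G$, the second conclusion of the lemma holds by construction, and the running time will be dominated by the $4t = \O(k \log n)$ calls, each costing $\O(km)$, for a total of $\O(k^2 m \log n)$ as claimed.

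The substance is showing that every balanced $k$-shredder $S$, with partition $(C, \R)$, is caught with probability at least $1 - n^{-100}$. Call a pair $(e_1, e_2)$ a \emph{witness} for $S$ if some endpoint $x$ of $e_1$ and some endpoint $y$ of $e_2$ lie in distinct components of $G \setminus S$; in that case the call $\SHR(x, y)$ must return $S$. It therefore suffices to prove that a single random pair is a witness for $S$ with probability $\Omega(1/k)$, after which independence across $t = \Theta(k \log n)$ pairs drives the failure probability below $n^{-100}$.

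For the per-pair estimate, I will use that no edge of $G$ crosses between distinct components of $G \setminus S$: every edge $e$ sits inside $C^* \cup S$ for at least one component $C^*$, and the probability that a uniformly random edge is captured by a fixed $C^*$ equals $(\vol(C^*) + e(S))/m$. A pair $(e_1, e_2)$ fails to witness $S$ precisely when some single $C^*$ captures both edges, so a union bound over components gives
\[
    \Pr[(e_1, e_2) \text{ is not a witness}] \;\leq\; \sum_{C^*}\!\left(\frac{\vol(C^*) + e(S)}{m}\right)^{\!2}.
\]

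I expect the main obstacle to be squeezing this sum down: there may be $\Theta(n)$ components of $G \setminus S$, so a crude bound of the form $r \cdot (1 - 1/k)^2$ is vacuous. The saving grace is that $C$ is the \emph{largest} component, which gives the convexity-type estimate $\sum_{C^*} \vol(C^*)^2 \leq \vol(C) \cdot \sum_{C^*} \vol(C^*) \leq \vol(C) \cdot m$; combined with $e(S) \leq \binom{k}{2}$ and $m \geq kn/2$ (from $k$-connectivity), the right-hand side collapses to $\vol(C)/m + o(1)$. Finally, the balance condition $\vol(\R) \geq m/k$, together with the identity $\vol(C) + \vol(\R) + e(S) = m$, yields $\vol(C)/m \leq 1 - 1/k$, so each pair is a witness with probability at least $\tfrac{1}{2k}$ for $n$ sufficiently large; the small-$n$ regime can be dispatched by brute force.
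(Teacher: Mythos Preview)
Your proposal is correct and follows the same high-level scheme as the paper: sample $\Theta(k\log n)$ random edge pairs and feed their endpoints to $\SHR$. The only algorithmic difference is cosmetic (you try all four endpoint combinations, the paper takes one random endpoint from each edge), and indeed your witness event contains the paper's success event, so your per-pair success probability is at least the paper's $1/(8k)$.

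Where you diverge is in the probability analysis, and your route is lossier than necessary. The paper argues directly: a random endpoint $x$ avoids $S$ with probability at least $1/4$ (this uses only $|E(S,S)|\le k^2\le m/2$, i.e.\ $m\ge 2k^2$), and conditioned on $x$ landing in some component $Q_x$, a second random endpoint lands outside $Q_x\cup S$ with probability at least $\vol(\R)/(2m)\ge 1/(2k)$ by the balance assumption; multiplying gives $1/(8k)$ immediately. Your union bound over components instead has to absorb the cross terms $2e(S)/m$ and $r\cdot e(S)^2/m^2$. Under $m\ge kn/2$ these are of order $k/n$ and $k^2/n$, so to push them below $1/(2k)$ you actually need roughly $n\gtrsim k^3$, not merely ``$n$ sufficiently large'' with an unqualified $o(1)$. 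In the paper's regime $k=O(\polylog n)$ this is of course fine, but you should make the quantitative threshold explicit, or simply swap in the paper's two-line conditional argument, which is shorter and only needs the milder hypothesis $m\ge 2k^2$.
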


\subsection{Listing Unbalanced \fmt{$k$}--Shredders via Local Flow}
Unbalanced $k$-shredders are more difficult to list than balanced $k$-shredders. For balanced $k$-shredders $S$ with partition $(C, \R)$, we can obtain two vertices in different components of $G \sm S$ without much difficulty. This is no longer the case for unbalanced $k$-shredders because $\vol(\R)$ may be arbitrarily small. If we sample two vertices, most sampled vertices will end up in $C$. Furthermore, the probability of hitting two distinct components is no longer boostable within a polylogarithmic number of sampling rounds. What this dilemma implies is that we must spend at least a linear amount of time just to collect samples. More critically, we must spend a sublinear amount of time processing an \emph{individual} sample to make any meaningful improvement. This time constraint rules out the possibility of calling $\SHR(\cdot, \cdot)$ per sample. Instead, we must develop a \emph{localized} version of $\SHR(\cdot, \cdot)$ that spends time relative to a parameter of our choice, instead of a global value such as $m$ or $n$.

In detail, consider an unbalanced $k$-shredder $S$ with partition $(C, \R)$. Observe that there must exist a power of two $2^i$ such that $2^{i-1} < \vol(\R) \leq 2^i$. If we sample $\tO(m/2^i)$ edges $(x, y)$, we will sample a vertex $x \in \R$ with high probability due to the classic hitting set lemmas. The goal is to spend only $\tO(\poly(k) \cdot 2^i)$ time processing each sample to list $S$. Suppose that such a local algorithm exists. Although we do not know the exact power of two $2^i$, we know that $\vol(\R) < \frac{m}{k}$. Hence, we can simply try all powers of two up to $\frac{m}{k}$. This gives us the near-linear runtime bound:
\begin{align*}
    \sum_{i=0}^{\lceil \log \frac{m}{k} \rceil} \tO\left(\frac{m}{2^i}\right) \cdot \tO(\poly(k) \cdot 2^i) & = \sum_{i=0}^{\lceil \log \frac{m}{k} \rceil} \tO(\poly(k) \cdot m) \\
    & = \tO(\poly(k) \cdot m).
\end{align*}
Thus, to handle unbalanced $k$-shredders, we introduce a new structural definition.

\begin{restatable}[Capture]{definition}{defcapture} \label{def:capture}
    Let $S$ be an unbalanced $k$-shredder of a graph $G = (V,E)$ with partition $(C, \R)$. Consider an arbitrary tuple $(x, \nu, \Pi)$, where $x$ is a vertex in $V$, $\nu$ is a positive integer, and $\Pi$ is a set of paths. We say that the tuple $(x, \nu, \Pi)$ \emph{captures} $S$ if the following holds.
    \begin{enumerate}
        \item $x$ is in a component of $\R$.
        \item $\frac{1}{2}\nu < \vol(\R) \leq \nu$.
        \item $\Pi$ is a set of $k$ openly-disjoint simple paths, each starting from $x$ and ending at a vertex in $C$, such that the sum of lengths over all paths is at most $k^2 \nu$.
    \end{enumerate}
\end{restatable}

At a high level, we will spend some time constructing random tuples $(x, \nu, \Pi)$ in the hopes that one of the tuples captures $S$. The main result is stated below.

\begin{restatable}{lem}{lemlocal} \label{lem:local}
    Let $G$ be a graph with vertex connectivity $k$.
    Let $(x, \nu, \Pi)$ be a tuple where $x$ is a vertex, $\nu$ is a positive integer, and $\Pi$ is a set of paths. There exists a deterministic algorithm that takes $(x, \nu, \Pi)$ as input and outputs a list $\L$ of $k$-shredders and one set $U$ such that the following holds. If $S$ is a $k$-shredder that is captured by $(x, \nu, \Pi)$, then $S \in \L$ or $S = U$. The algorithm runs in $\O(k^2 \nu \log \nu)$ time.
\end{restatable}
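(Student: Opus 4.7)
The plan is to treat $\Pi$ as a vertex-capacitated flow of value $k$ from $x$ to a virtual super-sink attached to the endpoints of $\Pi$ (all of which lie in $C$), and to iteratively enumerate the chain of min vertex cuts separating $x$ from this sink via a local BFS in the vertex-split residual graph. First I would build the portion of the residual graph induced by $\Pi$ in $\O(k^2\nu)$ time. Then I would run a BFS from $x$: whenever the BFS stalls because every frontier vertex $v^{\text{in}}$ has its vertex-capacity edge $v^{\text{in}} \to v^{\text{out}}$ saturated, the set $S'$ of such $v$'s forms a min vertex cut. I would record $S'$, release it by manually marking each $v^{\text{out}}$ reachable, and continue the BFS. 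Iterating until the explored volume surpasses $\O(k^2\nu)$ or the BFS terminates yields a chain $S_1 \prec S_2 \prec \cdots \prec S_j$ of min vertex cuts with strictly increasing source sides $D_1 \subsetneq \cdots \subsetneq D_j$.

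Under the capture conditions, $S$ is itself a min vertex cut separating $x$ from $C$, so $S$ must appear as some $S_i$ in this chain with $D_i$ equal to $D$, the component of $G\setminus S$ containing $x$. The capture bounds also ensure that the BFS can afford to reach $S$: the exploration is confined to $D$ (of volume at most $\nu$) together with the prefixes of the paths in $\Pi$ up to $S$ (of total length at most $k^2\nu$), so the total cost of the iteration is $\O(k^2\nu)$, with the $\log\nu$ overhead arising from auxiliary bookkeeping data structures. For each $S_i$ with $i<j$, I would verify locally whether $S_i$ is a $k$-shredder by inspecting the observed ``gap'' region $(D_{i+1}\setminus D_i)\setminus S_i$ together with the local flow structure, generalizing the shredder characterization of Cheriyan--Thurimella to this local setting. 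Verified shredders are placed in $\L$, while the outermost cut $S_j$ is output as the single unverified set $U$, since its shredder-hood depends on the structure of $G$ past our explored region.

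The main obstacle is precisely this local shredder-verification step: the original Cheriyan--Thurimella characterization naturally reads all of $G\setminus S_i$ to count its components, and one must carefully isolate structural features of a shredder that are observable from only the $D_j$-side view. In particular, one must argue that every non-outermost cut in the chain admits such a local certificate, so that only $S_j$ can genuinely fail verification. I expect this to be the technical heart of the proof, developed across the paper's subsequent sections on local algorithms, unverified sets, and unbalanced shredders.
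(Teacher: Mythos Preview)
Your residual-BFS ``release'' enumeration does not list all min vertex cuts separating $x$ from the endpoints of $\Pi$, so the claim that a captured shredder $S$ must appear as some $S_i$ in your chain is false. When you release $S_i$ by marking every $v^{\text{out}}$ with $v\in S_i$ reachable, you force all of $S_i$ into the source side; but the true next min cut in the lattice may reuse a vertex of $S_i$, and your procedure skips over it. Concretely, take $k=2$, $\Pi=\{x\text{-}a\text{-}b\text{-}y,\ x\text{-}c\text{-}d\text{-}y\}$, and one off-path vertex $e$ with edges $(a,e)$ and $(d,e)$. The min $x$--$y$ vertex cuts are exactly $\{a,c\}\prec\{a,d\}\prec\{b,d\}$, and $\{a,d\}$ is a captured $2$-shredder since $G\setminus\{a,d\}$ has components $\{x,c\}$, $\{b,y\}$, $\{e\}$. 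Your BFS stalls first at $\{a,c\}$; releasing both $a$ and $c$ lets the search run straight to $\{b,d\}$, never producing $\{a,d\}$, because $a$ lies in both $\{a,c\}$ and $\{a,d\}$ and you have already pushed $a^{\text{out}}$ into the reachable set.

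The paper's algorithm does not enumerate min cuts at all. It walks along each $\pi\in\Pi$ from $x$ and, at each vertex $u$, runs a budgeted BFS (at most $\nu$ edges per path) to discover the \emph{bridges} of $\Pi$ attached to $u$, i.e., the components of $G\setminus\Pi$ and the non-path edges between path vertices. Any bridge whose attachment set is a $k$-tuple is recorded as a candidate. The structural fact driving correctness (\Cref{lem:preserve-components}) is that for a captured shredder $S$ with partition $(C,\R)$, every component of $\R$ other than the one containing $x$ is itself a bridge of $\Pi$ with attachment set exactly $S$; hence $S$ is found as a candidate before the $\nu$-budget runs out (\Cref{lem:reach}). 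False candidates are then removed by the straddling test of \Cref{lem:prune}. The unverified set $U$ is simply the $k$ vertices (one per path) at which the budget was exhausted; it is the only output whose shredder status cannot be certified locally, playing the role you intended for your outermost $S_j$, but arising from budget exhaustion rather than min-cut enumeration.
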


What is most important about this result is that we have constructed an algorithm that can identify $k$-shredders on a \emph{local} scale, i.e., proportional to an input volume parameter $\nu$. The idea behind \Cref{lem:local} is to modify $\SHR(\cdot, \cdot)$ using recent advancements in local flow algorithms.

A major caveat concerns the set $U$. We can think of $U$ as an \emph{unverified set}. Essentially, because the algorithm in \Cref{lem:local} is time-limited by a volume parameter $\nu$, we may discover a set $U$ but not have enough time to verify whether $U$ is a $k$-shredder.

\subsection{Verification via Pairwise Connectivity Oracles}
In our algorithm, we will obtain a list of $k$-shredders and a list of unverified sets. The union of these two sets will include all $k$-shredders of $G$ with high probability. However, within the list of unverified sets, there may be some false $k$-shredders. To filter out the false $k$-shredders, we utilize a pairwise connectivity oracle subject to vertex failures developed by Kosinas in \cite{Kos23}. The idea is that to determine whether an unverified set $S$ is a $k$-shredder, we will make some connectivity queries between vertex pairs $(u, v)$ to determine whether $u$ is connected to $v$ in $G \sm S$. These local queries and additional structural observations will help us determine whether $G \sm S$ contains at least three components.

\subsection{Summary}
We have combined the $\SHR(\cdot, \cdot)$ subroutine with local flow algorithms in \cite{FNSYY20} to construct an algorithm that lists $k$-shredders on a local scale. However, this local algorithm potentially lists an unverified set that may not be a $k$-shredder. To filter out unverified sets that are not $k$-shredders, we used a pairwise connectivity oracle subject to vertex failures developed in \cite{Kos23} and new structural insights.

\section{Preliminaries} \label{sec:prelims}
This paper concerns finite, undirected, and unweighted graphs with vertex connectivity $k$. We use standard graph-theoretic definitions. Let $G = (V,E)$ be a graph with vertex connectivity $k$. For a vertex subset $S \subseteq V$, we use $G \sm S$ to denote the subgraph of $G$ induced by removing all vertices in $S$. A \emph{connected component} (component for short) of a graph refers to any maximally-connected subgraph, or the vertex set of such a subgraph. Suppose that $S$ is a $k$-shredder of $G$. The \emph{largest component} of $G \sm S$ is the component with the greatest \emph{volume}, where volume is defined in \Cref{def:volume}. We break ties arbitrarily. For convenience, we will say that a $k$-shredder $S$ admits partition $(C, \R)$ to signify that $C$ is the largest component of $G \sm S$ and $\R$ is the set of remaining components. We will also write $x \in \R$ to indicate a vertex in a component of $\R$.

For a vertex subset $Q \subseteq V$, we define the set of neighbors of $Q$ as the set $N(Q) = \{v \in V \sm Q \mid (u, v) \in E , u \in Q \}$. For vertex subsets $Q_1, Q_2 \subseteq V$, we use $E(Q_1, Q_2)$ to denote the set of edges with one endpoint in $Q_1$ and one endpoint in $Q_2$. Specifically, $E(Q_1, Q_2) = \{(u, v) \in E \mid u \in Q_1, v \in Q_2\}$.

Let $\pi$ be a path in $G$.  We refer to the \emph{length} of a path $\pi$ as the number of edges in $\pi$. Say $\pi$ starts at a vertex $x \in V$. We say the \emph{far-most endpoint} of $\pi$ to denote the other endpoint of $\pi$. Although all paths we refer to are undirected, our usage of the far-most endpoint will be unambiguous. Let $x$ and $y$ be two vertices used by $\pi$. We denote $\pi[x \e y]$ as the sub-path of $\pi$ from $x$ to $y$. We denote $\pi(x \e y)$ as the sub-path $\pi[x \e y]$ \emph{excluding} the vertices $x$ and $y$. We say two paths $\pi_1, \pi_2$ are \emph{openly-disjoint} if they share no vertices except their endpoints. Let $\Pi$ be a set of paths. We say that $\Pi$ is openly disjoint if all pairs of paths in $\Pi$ are openly disjoint. We say $v \in \Pi$ to denote a vertex used by one of the paths in $\Pi$ and $(u, v) \in \Pi$ to denote an edge used by one of the paths in $\Pi$.

\section{Cheriyan and Thurimella's Algorithm} \label{sec:review}
We will use $\SHR(\cdot, \cdot)$ as a subroutine and extend it to a \emph{localized setting}. To do this, we need to review the terminology and ideas presented in \cite{CT99}.

\begin{restatable}[\cite{CT99}]{thm}{thmpairwise} \label{thm:pairwise-shredders}
    Let $G$ be an $n$-vertex $m$-edge undirected graph with vertex connectivity $k$. Let $x$ and $y$ be two distinct vertices. There exists a deterministic algorithm $\SHR(\cdot, \cdot)$ that takes $(x, y)$ as input and returns all $k$-shredders of $G$ separating $x$ and $y$ in $\O(km)$ time.
\end{restatable}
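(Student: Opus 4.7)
The plan is to solve the problem with a single $(x,y)$ vertex-flow computation followed by a structured post-processing pass on the residual graph. First I would compute a maximum vertex-disjoint flow from $x$ to $y$ via the standard vertex-splitting reduction (each $v$ becomes $v^{\mathrm{in}} \to v^{\mathrm{out}}$ of unit capacity), running at most $k+1$ iterations of augmenting-path BFS at $O(m)$ each. If the flow value exceeds $k$, the local $(x,y)$-vertex connectivity is strictly greater than the global value $k$, so no $k$-shredder can separate $x$ from $y$ and we return the empty list. Otherwise we obtain $k$ internally vertex-disjoint $x$-$y$ paths $P_1, \ldots, P_k$ in $O(km)$ time, and every $k$-shredder $S$ separating $x$ and $y$ is a minimum $(x,y)$-vertex cut, hence contains exactly one vertex of each $P_i$.

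To enumerate candidate minimum cuts I would invoke the lattice structure of minimum $(x,y)$-vertex cuts (à la Picard--Queyranne, adapted to the vertex-capacitated setting). From the residual graph of the max flow, build an auxiliary digraph on the split representatives of the path vertices by orienting each residual arc along the direction of potential flow re-routing. Contracting its strongly connected components yields a DAG whose closed sets are in bijection with minimum $(x,y)$-vertex cuts. Constructing this DAG takes $O(m)$ time via Tarjan's SCC algorithm.

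The final step filters the candidate cuts down to those whose removal leaves at least three components. Besides the $x$-component and the $y$-component of $G \setminus S$, any additional component must be ``trapped'' between two consecutive levels of the lattice, and I claim this corresponds precisely to a non-trivial SCC whose boundary selects one vertex per path $P_i$. Each such SCC thus yields a single shredder candidate, verifiable in time proportional to its local size by a BFS restricted to the trapped region. Since Jordán's bound caps the number of $k$-shredders at $n$ and the SCC contraction already exposes all candidates, the total filtering cost is $O(m + kn) = O(km)$.

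The main obstacle is the structural claim behind the filtering step: proving a bijection between $k$-shredders separating $x$ from $y$ and the non-trivial SCCs of the residual-based auxiliary digraph. The forward direction uses submodularity of the vertex-cut function to argue that any third component of $G \setminus S$ must be enclosed by two tight minimum cuts and therefore collapses to such an SCC. The converse requires checking that the trapped region is not accidentally glued to the $x$- or $y$-side by a graph edge outside the flow paths, which one handles by noting that any such edge would either create an augmenting path (contradicting maximality) or merge the two SCCs (contradicting non-triviality). Once this correspondence is established, the routine is a clean max-flow plus SCC plus scan, running in $O(km)$ total time.
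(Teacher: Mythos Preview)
Your route is genuinely different from the paper's, and the gap sits exactly where you flagged it. After obtaining the $k$ openly-disjoint $x$--$y$ paths $\Pi$, the paper does \emph{not} pass to the residual graph or invoke the Picard--Queyranne lattice. It works directly with the connected components of $G\setminus\Pi$: a \emph{candidate} is any component $Q$ with $N(Q)$ meeting each path exactly once (a $k$-tuple), and the decisive lemma is that such an $N(Q)$ is a $k$-shredder separating $x$ and $y$ if and only if no \emph{bridge} of $\Pi$ (a component of $G\setminus\Pi$ or an off-path edge between path vertices) \emph{straddles} it, meaning has attachments strictly on both the $x$-side and the $y$-side of $N(Q)$ along the paths. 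All candidates are then pruned at once by radix-sorting their position $k$-tuples and doing a single interval scan over the bridges; this costs $O(kn+m)$ after the flow.

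Your asserted bijection between $k$-shredders and ``non-trivial SCCs whose boundary selects one vertex per path'' is not well-posed and, as stated, need not hold. A single SCC in the contracted residual DAG has no canonical boundary that is a minimum cut; cuts come from downward-closed \emph{unions} of SCCs. More concretely, consider a component $Q$ of $G\setminus\Pi$ with $N(Q)$ a $k$-tuple, together with an unrelated bridge $\Gamma$ that straddles $N(Q)$. Then $N(Q)$ is not even an $(x,y)$-separator, so it is absent from the lattice; yet whether $Q$ sits as its own SCC, or gets merged with path segments via the cycle that $\Gamma$ closes in the residual graph, depends on the precise positions of $\Gamma$'s attachments relative to $N(Q)$. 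Sorting this out is exactly the content of the straddling criterion, and your local-BFS verification sketch does not replace it: deciding that a trapped region is not glued to the $x$- or $y$-side by an off-path edge elsewhere in the graph is a global test, not one bounded by the region's size. So the SCC route may be completable, but as written it omits the step that does the real work, whereas the paper's bridge-and-straddle argument handles it with one sort and one linear pass.
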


At a high level, $\SHR(\cdot, \cdot)$ works as follows. Let $x$ and $y$ be two vertices in a $k$-vertex-connected graph $G$. Firstly, we use a flow algorithm to obtain a set $\Pi$ of $k$ openly-disjoint simple paths from $x$ to $y$. Observe that any $k$-shredder $S$ of $G$ separating $x$ and $y$ must contain exactly one vertex from each path of $\Pi$. More crucially, at least one component of $G \sm S$ must also be a component of $G \sm \Pi$. That is, for some component $Q$ of $G \sm \Pi$, we have $N(Q) = S$. This is the main property that $\SHR(\cdot, \cdot)$ exploits. It lists potential $k$-shredders by finding components $Q$ of $G \sm \Pi$ such that $|N(Q)| = k$ and $N(Q)$ consists of exactly one vertex from each path in $\Pi$ (see \Cref{fig:shredders-xy}).
\begin{figure}[!ht]
    \centering
    \includegraphics[scale=0.4]{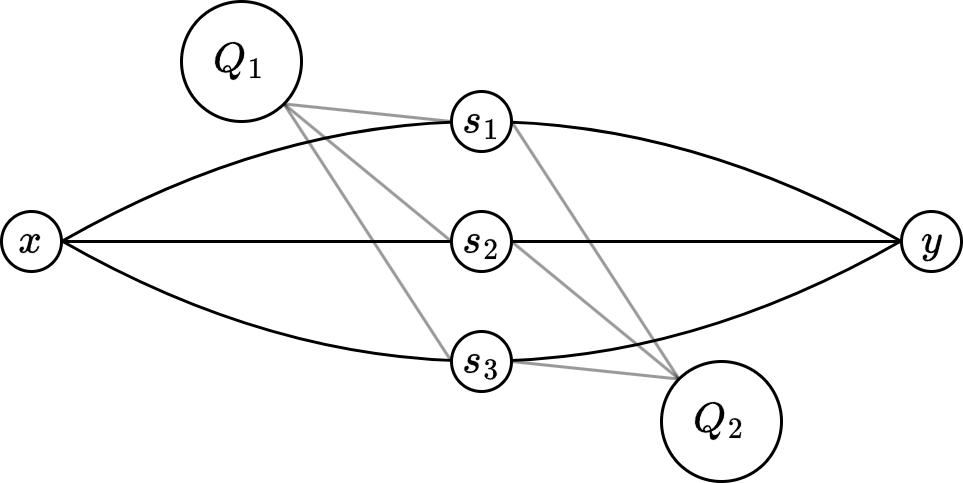}
    \caption{Let $\Pi$ be the set of three openly-disjoint simple paths from $x$ to $y$. A call to $\SHR(x, y)$ will identify $N(Q_1) = N(Q_2)$ as potential $3$-shredders.}
    \label{fig:shredders-xy}
\end{figure}
We can state these facts formally with the following definitions.

\begin{definition}[Bridge]
    A \emph{bridge} $\Gamma$ of $\Pi$ is a component of $G \sm \Pi$ or an edge $(u, v) \in E$ such that $(u, v) \notin \Pi$, but $u \in \Pi$ and $v \in \Pi$.
\end{definition}

If $\Gamma$ is an edge, we define $\vol(\Gamma) = 1$. Otherwise, $\vol(\Gamma)$ is defined according to \Cref{def:volume}. For a set of bridges $\B$, we define $\vol(\B) = \sum_{\Gamma \in \B} \vol(\Gamma)$.

\begin{definition}[Attachments] \label{def:attachments}
     Let $\Gamma$ be a bridge of $\Pi$. If $\Gamma$ is a component of $G \sm \Pi$, then the set of \emph{attachments} of $\Gamma$ is the vertex set $N(\Gamma)$. Otherwise, if $\Gamma$ is an edge $(u, v) \in E$, the set of \emph{attachments} of $\Gamma$ is the vertex set $\{u, v\}$.
\end{definition}

For convenience, we refer to a single vertex among the attachments of $\Gamma$ as an attachment of $\Gamma$. For a path $\pi \in \Pi$ and bridge $\Gamma$ of $\Pi$, we denote $\pi(\Gamma)$ as the set of attachments of $\Gamma$ that are in $\pi$. If $Q$ is an arbitrary vertex set, then we use $\pi(Q)$ to denote the set $Q \cap \pi$. If $\pi(Q)$ is a singleton set, then we may use $\pi(Q)$ to represent the unique vertex in $Q \cap \pi$. In all contexts, it will be clear what object the notation is referring to.

\begin{definition}[$k$-Tuple]
    A set $S$ is called a $k$-tuple with respect to $\Pi$ if $|S| = k$ and for all $\pi \in \Pi$, we have $|S \cap \pi| = 1$.
\end{definition}

We adopt the following notation as in $\SHR(\cdot, \cdot)$. If a component $\Gamma$ of $G \sm \Pi$ is such that $N(\Gamma)$ forms a $k$-tuple with respect to $\Pi$, then 
the set $S = N(\Gamma)$ is called a \emph{candidate $k$-shredder}, or \emph{candidate} for short. In general, not all candidates are true $k$-shredders. To handle this, $\SHR(\cdot, \cdot)$ performs a \emph{pruning} phase by identifying fundamental characteristics of false candidates. To identify false candidates, we formalize some of the key definitions in \cite{CT99}.

\begin{definition}[$\delta_{\pi}$]
    Let $\pi$ be a path starting from a vertex $x$. For a vertex $v \in \pi$, we define $\delta_{\pi}(v)$ as the distance between $x$ and $v$ along path $\pi$.
\end{definition}

\begin{definition}[$\preceq, \succeq$] \label{def:order}
    Let $\Gamma_1, \Gamma_2$ be two bridges of $\Pi$. We say that $\Gamma_1 \preceq \Gamma_2$ if for all paths $\pi \in \Pi$, for all pairs $(u, v) \in \pi(\Gamma_1) \times \pi(\Gamma_2)$ we have $\delta_\pi(u) \leq \delta_\pi(v)$. We use an identical definition for $\Gamma_1 \succeq \Gamma_2$ by replacing $\delta_\pi(u) \leq \delta_\pi(v)$ with $\delta_{\pi}(u) \geq \delta_{\pi}(v)$.
\end{definition}

\begin{definition}[Straddle] \label{def:straddle}
     Let $\Gamma_1, \Gamma_2$ be two bridges of $\Pi$. We say $\Gamma_1$ \emph{straddles} $\Gamma_2$ (or $\Gamma_2$ straddles $\Gamma_1$) if there exist paths (not necessarily distinct) $\pi_1, \pi_2 \in \Pi$ such that there exists a vertex pair $(u_1, u_2) \in \pi_1(\Gamma_1) \times  \pi_1(\Gamma_2)$ satisfying $\delta_{\pi_1}(u_1) < \delta_{\pi_1}(u_2)$ and $(v_1, v_2) \in \pi_2(\Gamma_1) \times \pi_2(\Gamma_2)$ satisfying $\delta_{\pi_2}(v_1) > \delta_{\pi_2}(v_2)$.
\end{definition}

For a visual description of \Cref{def:straddle}, see figure \Cref{fig:def-straddle}. It will also be useful to notice that the statement $\Gamma_1$ straddles $\Gamma_2$ is equivalent to $\Gamma_1 \not \preceq \Gamma_2$ and $\Gamma_1 \not \succeq \Gamma_2$. Furthermore, \Cref{def:order,def:straddle} are still well-defined even for candidates (i.e. we can use the $\preceq, \succeq$ operators to compare bridges to candidates and candidates to candidates).

\begin{figure}[!ht]
    \centering
    \includegraphics[scale=0.5]{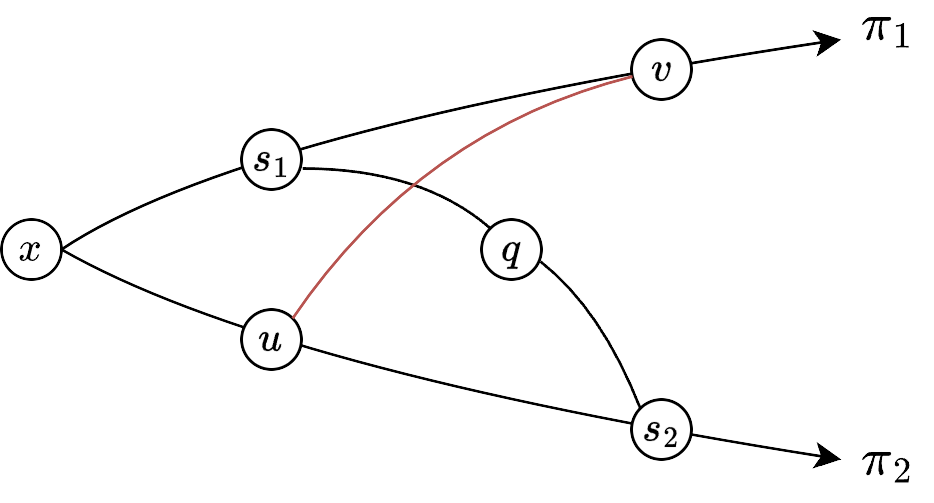}
    \caption{Here we have $\Pi = \{\pi_1,  \pi_2\}$. The candidate $\{s_1, s_2\}$ is straddled by the edge $(u, v)$ because $\delta_{\pi_2}(u) < \delta_{\pi_2}(s_2)$ and $\delta_{\pi_1}(v) > \delta_{\pi_1}(s_1)$.}
    \label{fig:def-straddle}
\end{figure}

\paragraph{Candidate Pruning.}
The key is that a candidate $S$ reported by $\SHR(x, y)$ is a $k$-shredder of $G$ if and only if no bridge of $\Pi$ straddles $S$. Hence, to prune false candidates, $\SHR(x, y)$ finds all bridges of $\Pi$ that are straddled using standard techniques such as radix sort and interval merging. These notions are formalized in the following two lemmas.

\begin{restatable}{lem}{lemxystraddle} \label{lem:xy-straddle}
    Let $x$ and $y$ be two distinct vertices and let $\Pi$ denote a set of $k$ openly disjoint simple paths from $x$ to $y$. Let $S$ be a candidate with respect to $\Pi$. Then, $S$ is a $k$-shredder separating $x$ and $y$ if and only if no bridge of $\Pi$ straddles $S$.
\end{restatable}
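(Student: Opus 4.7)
The plan is to prove the two directions of the biconditional separately. Throughout, fix the notation $s_i = \pi_i(S)$ for the unique $S$-vertex on each path $\pi_i \in \Pi$, and let $\Gamma_0$ denote the component of $G \sm \Pi$ that witnesses $S = N(\Gamma_0)$ as a candidate. Observe once and for all that $\Gamma_0$ is vertex-disjoint from $\Pi$ and all of its external neighbors lie in $S$, so deleting $S$ from $G$ always isolates $\Gamma_0$ as a component of $G \sm S$; this will supply the ``third component'' needed at the end.

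For the forward direction I argue by contrapositive: assume some bridge $\Gamma$ of $\Pi$ straddles $S$. Unpacking \Cref{def:straddle} with $\Gamma_1 = \Gamma$ and $\Gamma_2 = S$ yields paths $\pi_1, \pi_2 \in \Pi$ (possibly equal) and attachments $u \in \pi_1(\Gamma)$, $v \in \pi_2(\Gamma)$ with $\delta_{\pi_1}(u) < \delta_{\pi_1}(s_1)$ and $\delta_{\pi_2}(v) > \delta_{\pi_2}(s_2)$. Concatenating $\pi_1[x \e u]$, a $u$-to-$v$ walk through $\Gamma$ (either the chord edge $(u,v)$, or a walk via the component $\Gamma$, whose interior is disjoint from $\Pi$), and $\pi_2[v \e y]$ produces an $x$-$y$ walk avoiding $S$ entirely: the prefix stops strictly before $s_1$, the suffix starts strictly after $s_2$, and bridge vertices are not on $\Pi \supseteq S$. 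Hence $x$ and $y$ remain connected in $G \sm S$, so $S$ does not separate them.

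For the reverse direction, assume no bridge of $\Pi$ straddles $S$. I will partition $V \sm S$ into an $x$-block, a $y$-block, and isolated pieces. Place into the $x$-block $X$ the vertex $x$, every vertex of $\pi_i(x \e s_i)$ for $i = 1, \ldots, k$, and every bridge $\Gamma' \neq \Gamma_0$ satisfying $\Gamma' \preceq S$ but not $\Gamma' \succeq S$; define the $y$-block $Y$ symmetrically. Every remaining bridge then satisfies both $\Gamma' \preceq S$ and $\Gamma' \succeq S$, i.e.\ has all its attachments in $S$, and thus becomes an isolated component after $S$ is deleted; the non-straddling hypothesis guarantees that these buckets cover all bridges. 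The main claim is that no edge of $G \sm S$ joins $X$ to $Y$, which I would verify by a case analysis on edge type (internal to a single path, chord between paths, or between a path vertex and a bridge interior). Any hypothetical $X$-to-$Y$ edge is either a chord that itself straddles $S$, or an attachment pair of a common bridge that witnesses straddling of that bridge, contradicting the assumption.

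I expect the main obstacle to be the boundary subtlety where a bridge has some attachments exactly at some $s_i \in S$ and others strictly to one side. Here the non-strict inequalities in \Cref{def:order} are essential: such a bridge still belongs to a single block (say $X$), and its attachments located at $s_i$ create no $X$-to-$Y$ crossing edge in $G \sm S$ because $s_i$ is deleted. Once this case is handled, the partition argument shows that $x$ and $y$ lie in distinct components of $G \sm S$; combined with the isolated component $\Gamma_0$, we obtain at least three components, so $S$ is a $k$-shredder separating $x$ from $y$.
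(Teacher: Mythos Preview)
Your proposal is correct. The forward direction is essentially identical to the paper's: both argue by contrapositive, concatenating $\pi_1[x \e u]$, a walk through the straddling bridge, and $\pi_2[v \e y]$ to exhibit an $x$--$y$ walk in $G \sm S$.

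For the reverse direction you take a genuinely different route. The paper argues by contradiction on the existence of an $x$--$y$ path $P$ in $G \sm S$: it picks the last vertex $u$ on $P$ lying strictly before $S$ on its path and the first subsequent vertex $v$ lying strictly after $S$ on its path, then observes that the open segment $P(u \e v)$ is disjoint from $\Pi$ and hence sits inside a single bridge whose attachments include $u$ and $v$, forcing that bridge to straddle $S$. Your approach is instead a direct structural partition: classify path-segments and non-straddling bridges into an $x$-block, a $y$-block, and fully-$S$-attached isolated pieces, and verify edge-by-edge that no $X$--$Y$ edge survives in $G \sm S$. The paper's path-tracing argument is shorter and avoids the case analysis; your block partition is more explicit about the component structure of $G \sm S$ and makes the role of the non-strict inequalities in \Cref{def:order} (for bridges with some attachments exactly on $S$) transparent. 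You also make the ``third component'' $\Gamma_0$ explicit, which the paper leaves implicit.
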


\begin{restatable}{lem}{lemprune} \label{lem:prune}
    Let $x$ be a vertex in $G$ and let $\Pi$ denote a set of $k$ openly-disjoint simple paths starting from $x$. Let $\ell$ denote the sum of the lengths over all paths in $\Pi$. Let $\C, \B$ be a set of candidates of $\Pi$ and a set of bridges of $\Pi$, respectively. There exists a deterministic algorithm that, given $(\C, \B, \Pi)$ as input, lists all candidates $S \in \C$ such that $S$ is not straddled by another candidate in $\C$ nor by any bridge in $\B$. The algorithm runs in $\O(k |\C| \log |\C| + \ell + \vol(\B))$ time.
\end{restatable}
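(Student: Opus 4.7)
The plan is to attack \Cref{lem:prune} in three stages: preprocess $\Pi$, $\C$, and $\B$ into numerical position vectors; detect candidate-versus-candidate straddling via coordinate sorting with prefix/suffix aggregates; and detect bridge-versus-candidate straddling via per-path sweeps combined with interval merging.

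For preprocessing, I would walk each $\pi \in \Pi$ once to tag every vertex $v \in \pi$ with its path index and the distance $\delta_\pi(v)$ from $x$, taking $\O(\ell)$ time. For each bridge $\Gamma \in \B$, I would enumerate the edges of $\Gamma$ to compute $a_i(\Gamma) := \min_{u \in \pi_i(\Gamma)} \delta_{\pi_i}(u)$ and $b_i(\Gamma) := \max_{u \in \pi_i(\Gamma)} \delta_{\pi_i}(u)$ on every path $\pi_i$ with $\pi_i(\Gamma) \neq \emptyset$, costing $\O(\vol(\B))$ in total. Each candidate $S$ is similarly reduced to its position vector $p(S) = (p_1(S), \ldots, p_k(S))$ where $p_i(S) = \delta_{\pi_i}(S \cap \pi_i)$, costing $\O(k |\C|)$.

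For candidate-versus-candidate straddling, I would lex-sort $\C$ by $p$, compute the componentwise prefix-max $M_j = \max_{i \leq j} p(S_i)$ and suffix-min $m_j = \min_{i \geq j} p(S_i)$, and verify $M_{j-1} \leq p(S_j) \leq m_{j+1}$ componentwise. The lex-order ensures that any earlier candidate $S_i$ cannot satisfy $S_i \succeq S_j$ (since their first differing coordinate has $S_i$'s value strictly smaller), so the prefix-max check is precisely the $\preceq$ test required, and symmetrically for the suffix check. This stage takes $\O(k |\C| \log |\C|)$, dominated by the sort.

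For bridge-versus-candidate straddling, note that $\Gamma$ straddles $S$ iff $\exists i : a_i(\Gamma) < p_i(S)$ and $\exists j : b_j(\Gamma) > p_j(S)$. Using the per-path sorted candidate lists, for each bridge $\Gamma$ and each attached path $\pi_i$, the candidates with $p_i > a_i(\Gamma)$ form a contiguous suffix and those with $p_i < b_i(\Gamma)$ a contiguous prefix. Single-path straddling (both conditions met on the same $\pi_i$) is handled by interval-merging the intervals $[a_i(\Gamma), b_i(\Gamma)]$ on each path and flagging any candidate whose $p_i$ lies strictly inside a merged interval. Cross-path straddling --- where the left reach and the right reach come from different paths of the same bridge --- requires propagating bridge identifiers through the per-path sweeps so that a bridge whose leftmost attachment falls left of $p_i$ on some $\pi_i$ can be matched against its own rightmost attachment on any $\pi_j$ where that attachment falls right of $p_j$. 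Charging this propagation to $\vol(\Gamma)$ per bridge keeps the stage within $\O(\vol(\B) + \ell + k |\C|)$ time.

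The main obstacle is precisely the cross-path case of bridge-versus-candidate straddling: a naive per-pair check costs $\O(|\C| \cdot |\B|)$, which is far too slow. The fix is essentially a radix-sort-and-interval-merging scheme in the spirit of \cite{CT99}, carefully arranged so that each bridge contributes only $\O(\vol(\Gamma))$ work while each candidate contributes only $\O(k \log |\C|)$. Summing across the three stages yields the claimed $\O(k |\C| \log |\C| + \ell + \vol(\B))$ runtime.
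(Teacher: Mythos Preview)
Your candidate-versus-candidate stage is essentially the paper's argument (lex-sort, then detect inversions via prefix-max/suffix-min), so that part is fine.

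The gap is in the bridge-versus-candidate stage. Your plan treats it as a standalone problem on all of $\C$, and your handling of the ``cross-path'' case (``propagating bridge identifiers through the per-path sweeps'') is not a concrete algorithm: for a single attachment $a_i(\Gamma)$, the set of candidates $S$ with $p_i(S) > a_i(\Gamma)$ can have size $\Theta(|\C|)$, so tagging them with $\Gamma$'s identifier already costs $\Theta(|\C|)$ per attachment, i.e.\ $\Theta(|\C|\cdot \vol(\B))$ overall, not $\O(\vol(\B))$. Nothing you wrote explains how to avoid this blowup. (Your single-path interval-merging idea also has a correctness issue: merging $[a_i(\Gamma_1),b_i(\Gamma_1)]$ and $[a_i(\Gamma_2),b_i(\Gamma_2)]$ across \emph{different} bridges can flag a candidate whose $p_i$ sits at the join of the two intervals but is straddled by neither bridge.)

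The paper's fix is a structural observation you are missing: once you discard candidates straddled by other candidates, the survivors are \emph{totally ordered} under $\preceq$ (pairwise non-straddling is exactly the statement that $\preceq$ linearly orders them). On this chain $S_1 \preceq \cdots \preceq S_r$, for any bridge $\Gamma$ the set $\{j : S_j \preceq \Gamma\}$ is a prefix and $\{j : \Gamma \preceq S_j\}$ is a suffix, so the candidates straddled by $\Gamma$ form a single contiguous interval $(lo_\Gamma,hi_\Gamma)$ of indices. Computing $lo_\Gamma, hi_\Gamma$ costs $\O(\vol(\Gamma))$ by scanning $\Gamma$'s attachments and looking up, per path, the last chain index whose position does not exceed that attachment (after an $\O(k|\C|+\ell)$ preprocessing of the chain along each path). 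Then one just unions the intervals over all $\Gamma\in\B$ in $\O(|\C|+|\B|)$ time. The chain structure is what collapses your cross-path difficulty into a one-dimensional interval problem; without stating and using it, the bridge stage does not meet the time bound.
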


Note that \Cref{lem:prune} generalizes $\SHR(\cdot, \cdot)$ in the sense that it does not require that all paths in $\Pi$ were between a pair of vertices $(x, y)$. This will be particularly useful in the later sections. Because the proofs of \Cref{lem:xy-straddle,lem:prune} directly follow from the ideas from \cite{CT99}, we defer their proofs along with the proof of \Cref{thm:pairwise-shredders} to \Cref{sec:omitted-proofs:4}.

\section{Locally Listing \fmt{$k$-}-Shredders or an Unverified Set} \label{sec:local}
Suppose we have obtained a tuple $(x, \nu, \Pi)$ that \emph{captures} a $k$-shredder $S$. Recall \Cref{def:capture} for the definition of capture. The method for constructing $(x, \nu, \Pi)$ that captures $S$ is via random sampling and is deferred to \Cref{sec:unbalanced}. In this section, we show a local algorithm that, assuming $S$ is captured by $(x, \nu, \Pi)$, puts $S$ in the returned list $\cal{L}$ or reports $S$ as an unverified set $U$. How we handle the unverified set will be explained in \Cref{sec:unverified}.

\lemlocal*

To prove \Cref{lem:local}, we first state some structural properties of shredders when captured by $(x,\nu,\Pi)$ in \Cref{sec:local structure}. Then, we describe the algorithm of in \Cref{sec:local algorithm} and analyze it in \Cref{sec:local analysis}.

\subsection{Structures of Local Shredders} \label{sec:local structure}
The following lemmas translate the structural properties of shredders from \Cref{sec:review} to the setting when they are captured by the tuple $(x,\nu,\Pi)$. They will be needed in the analysis of our algorithm.

\begin{fact} \label{fact:attachment-shredder}
     Let $S$ be a $k$-shredder with partition $(C, \R)$ captured by $(x, \nu, \Pi)$. For every path $\pi \in \Pi$, we have $| S \cap \pi | = 1$.
\end{fact}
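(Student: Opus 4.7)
The plan is to verify the two inequalities $|S\cap\pi|\geq 1$ and $|S\cap\pi|\leq 1$ separately for each $\pi\in\Pi$; both will follow immediately from the capture conditions together with the fact that $S$ separates the components of $G\setminus S$.

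For the lower bound, I would invoke conditions (1) and (3) of \Cref{def:capture}: the starting vertex $x$ lies in a component of $\R$, while each path $\pi\in\Pi$ terminates at a vertex in $C$. Since $C$ is a component of $G\setminus S$ distinct from every component of $\R$, and $S$ cuts them apart, any walk from $x$ to the far-endpoint of $\pi$ in $G$ must meet $S$. Hence $|S\cap\pi|\geq 1$ for every $\pi\in\Pi$.

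For the upper bound, I would sum over all $k$ paths. Note that $x\in\R$ and the far-endpoints lie in $C$, so none of the endpoints of any $\pi$ belong to $S$. Because the paths in $\Pi$ are openly disjoint, their internal vertices are pairwise disjoint, and so each vertex of $S$ contributes to the count for at most one path. This yields $\sum_{\pi\in\Pi}|S\cap\pi|\leq |S|=k$. Combined with the lower bound, which gives $\sum_{\pi\in\Pi}|S\cap\pi|\geq k$ (using $|\Pi|=k$), equality must hold term-by-term, forcing $|S\cap\pi|=1$ for every $\pi$.

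There is no substantive obstacle: the statement is essentially bookkeeping that unpacks \Cref{def:capture}. The only things to check are that $x$ and the far-endpoints are outside $S$ (which is immediate since $S$ is disjoint from every component of $G\setminus S$, and $x$ resides in one such component while the far-endpoints reside in $C$) and that openly-disjoint paths share no internal vertices. Neither the volume bound in condition (2) nor the length bound $k^2\nu$ from condition (3) is needed for this particular fact; they will come into play only later when bounding the running time of the local algorithm.
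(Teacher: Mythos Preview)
Your proposal is correct and follows essentially the same approach as the paper: both argue that each path must meet $S$ (since it runs from $\R$ to $C$) and then use a pigeonhole count ($k$ paths, $k$ vertices of $S$, openly-disjoint paths share no internal vertices) to force exactly one intersection per path. Your write-up is slightly more explicit in checking that $x$ and the far-endpoints lie outside $S$ so that any vertex of $S\cap\pi$ is internal, but the underlying argument is identical.
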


\begin{proof}
    Observe that all paths in $\Pi$ must cross $S$ because $x \notin C$ and each path starts from $x$ and ends in $C$. Moreover, each path in $\Pi$ must cross exactly one unique vertex of $S$ because there are only $k$ vertices in $S$ and the paths are openly-disjoint. Hence, it is impossible for all paths to be openly-disjoint if one path uses more than one vertex of $S$.
\end{proof}

The following statement is the local analog to the statement discussed below \Cref{thm:pairwise-shredders}.

\begin{lem} \label{lem:preserve-components}
    Let $S$ be a $k$-shredder with partition $(C, \R)$ captured by $(x, \nu, \Pi)$. Let $Q_x$ denote the component in $\R$ containing $x$. Every component in $\R \sm \{Q_x\}$ is a component of $G \sm \Pi$.
\end{lem}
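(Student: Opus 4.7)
The plan is to show two things in sequence: first, that every $Q \in \R \sm \{Q_x\}$ is vertex-disjoint from $V(\Pi)$, and second, that in $G \sm \Pi$ the set $Q$ is both connected and isolated from the rest of $V \sm V(\Pi)$.

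The first observation I would make is that $S \subseteq V(\Pi)$. Since each of the $k$ openly-disjoint paths in $\Pi$ starts at $x \in Q_x \subseteq V \sm C$ and ends in $C$, every path must cross $S$. By \Cref{fact:attachment-shredder}, each path contains exactly one vertex of $S$, and since the paths are openly-disjoint and $|S| = k = |\Pi|$, each vertex of $S$ lies on a distinct path. Thus $S \subseteq V(\Pi)$.

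Next I would show $Q \cap V(\Pi) = \emptyset$ for every $Q \in \R \sm \{Q_x\}$. Suppose, toward contradiction, that some $q \in Q$ lies on a path $\pi \in \Pi$. Let $s_\pi \in S$ be the unique vertex of $S$ on $\pi$. Then either $q$ precedes $s_\pi$ on $\pi$ (in which case $\pi[x \e q]$ is a walk from $x$ to $q$ avoiding $S$, so $q$ is reachable from $x$ in $G \sm S$) or $q$ follows $s_\pi$ on $\pi$ (in which case $\pi[q \e v_\pi]$ is a walk from $q$ to the endpoint $v_\pi \in C$ avoiding $S$). In the first case $Q = Q_x$, and in the second case $Q = C$; both contradict $Q \in \R \sm \{Q_x\}$. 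Hence $Q$ is disjoint from $V(\Pi)$.

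Finally, I would argue that $Q$ is a component of $G \sm \Pi$. Connectivity is immediate: the induced subgraph $G[Q]$ is unchanged after deleting any vertices outside $Q$, and $Q$ is connected in $G \sm S$ by assumption, so $Q$ remains connected in $G \sm \Pi$. For maximality, consider any edge $(q, w) \in E$ with $q \in Q$. Since $Q$ is a component of $G \sm S$, either $w \in Q$ or $w \in S$. But $S \subseteq V(\Pi)$, so $w \in S$ means $w \notin V(G \sm \Pi)$. Thus $Q$ has no neighbors outside itself in $G \sm \Pi$, and $Q$ is a connected component of $G \sm \Pi$.

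The main technical point is the case analysis in the second step, but since $\pi$ passes through exactly one $S$-vertex, the trace of $\pi$ either before or after $s_\pi$ gives a direct path certifying $q$ is in $Q_x$ or $C$; everything else is bookkeeping using $S \subseteq V(\Pi)$.
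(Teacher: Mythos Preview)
Your proof is correct and follows essentially the same approach as the paper: first show $Q\cap V(\Pi)=\varnothing$, then use $N(Q)\subseteq S\subseteq V(\Pi)$ to conclude $Q$ is a maximal connected set in $G\sm\Pi$. Your case analysis on whether $q$ precedes or follows $s_\pi$ along $\pi$ is a slightly more explicit version of the paper's one-line appeal to \Cref{fact:attachment-shredder} (``such a path would necessarily use more than one vertex of $S$''), but the underlying idea is identical.
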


\begin{proof}
    By \Cref{fact:attachment-shredder}, $\Pi$ cannot contain a vertex in any component of $\R \sm \{Q_x\}$, as such a path would necessarily use more than one vertex of $S$. Thus, for an arbitrary component $Q \in \R \sm \{Q_x\}$, we have $\Pi \cap Q = \varnothing$. It follows that $Q \subseteq Q'$ for some component $Q'$ of $G \sm \Pi$.
    
    We show that $Q' \subseteq Q$. To see this, suppose that $Q' \sm Q \neq \varnothing$. Notice $N(Q') = S$, so $\Gamma$ must contain a vertex of $S$. However $S \subseteq \Pi$, which contradicts the fact that $Q'$ is a component of $G \sm \Pi$. Therefore, we have $Q \subseteq Q'$ and $Q' \subseteq Q$, which implies $Q = Q'$.
\end{proof}

Next, we prove one direction of the ``if and only if'' statement of  \Cref{lem:xy-straddle} in the local setting. Another direction requires knowing how our algorithm constructs candidate $k$-shredders and will be proved later in \Cref{lem:local:no-straddle-is-shredder}.

\begin{lem} \label{lem:shredders-not-straddled}
    Let $S$ be a $k$-shredder with partition $(C, \R)$ captured by $(x, \nu, \Pi)$. Then, no bridge of $\Pi$ straddles $S$. 
\end{lem}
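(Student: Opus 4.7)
The plan is to derive a contradiction from the assumption that some bridge $\Gamma$ of $\Pi$ straddles $S$. First I would establish a structural observation about each individual path $\pi \in \Pi$. By \Cref{fact:attachment-shredder}, $\pi$ contains exactly one vertex $s_\pi$ of $S$. Since $\pi$ starts at $x \in Q_x$, ends at some vertex $y_\pi \in C$ (as $(x,\nu,\Pi)$ captures $S$), and uses no vertex of $S$ other than $s_\pi$, every vertex $v$ of $\pi$ with $\delta_\pi(v) < \delta_\pi(s_\pi)$ is connected to $x$ in $G \sm S$ via the sub-path $\pi[x \e v]$, hence lies in $Q_x$; symmetrically, every vertex with $\delta_\pi(v) > \delta_\pi(s_\pi)$ lies in $C$.

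Next I would unpack \Cref{def:straddle} applied to $\Gamma$ and $S$. Using that $\pi(S) = \{s_\pi\}$ for every $\pi \in \Pi$, the straddling condition yields paths $\pi_1, \pi_2 \in \Pi$ together with attachments $u_1 \in \pi_1(\Gamma)$ satisfying $\delta_{\pi_1}(u_1) < \delta_{\pi_1}(s_{\pi_1})$ and $v_1 \in \pi_2(\Gamma)$ satisfying $\delta_{\pi_2}(v_1) > \delta_{\pi_2}(s_{\pi_2})$. The structural observation above then forces $u_1 \in Q_x$ and $v_1 \in C$, so $\Gamma$ has attachments in two distinct components of $G \sm S$.

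Finally I would split on the type of bridge. If $\Gamma$ is an edge-bridge $(a,b)$, then $\{a,b\}$ is exactly its set of attachments, and since $Q_x \cap C = \varnothing$ we have $u_1 \neq v_1$, so $\{u_1,v_1\} = \{a,b\}$. Thus $(a,b)$ is an edge of $G$ with $u_1 \in Q_x$ and $v_1 \in C$, neither endpoint lying in $S$, contradicting the fact that $Q_x$ and $C$ are distinct components of $G \sm S$. If instead $\Gamma$ is a component of $G \sm \Pi$, then $\Gamma \subseteq V \sm \Pi \subseteq V \sm S$ is a connected subset of $G \sm S$ and is therefore contained in a single component $W$ of $G \sm S$. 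Because $u_1 \notin S$ is adjacent in $G$ to some vertex of $\Gamma \subseteq W$, we have $u_1 \in W$, forcing $W = Q_x$; the symmetric argument applied to $v_1$ forces $W = C$, contradicting $Q_x \neq C$.

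The only real obstacle is the bookkeeping in the first two paragraphs: translating the purely combinatorial ordering condition from \Cref{def:straddle} into membership of attachments in $Q_x$ versus $C$. Once that translation is in hand, both sub-cases collapse to the triviality that no edge of $G$ can join two distinct components of $G \sm S$ without passing through $S$. Note that \Cref{lem:preserve-components} is not strictly needed for this proof, though the same intuition underpins both results.
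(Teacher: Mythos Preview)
Your proof is correct and follows essentially the same approach as the paper: both argue by contradiction that a straddling bridge would connect $Q_x$ to $C$ in $G \sm S$. The paper phrases this as an explicit path construction $\pi_1[x \e u] \cup P \cup \pi_2[v \e z]$ from $x$ to a vertex $z \in C$, whereas you phrase it as a component-membership argument for the attachments $u_1, v_1$, but the underlying idea is identical.
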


\begin{proof} 
    Suppose that there exists a bridge $\Gamma$ of $\Pi$ that straddles $S$. Let $z$ be a vertex in $C$. We will construct a path from $x$ to $z$ in $G \sm S$. This will contradict $S$ admitting partition $(C, \R)$ because $x \in \R$ and $z \in C$.
    
    Since $\Gamma$ straddles $S$, there exist paths $\pi_1, \pi_2$ in $\Pi$ and vertices $u \in \pi_1(\Gamma), v \in \pi_2(\Gamma)$ such that $\delta_{\pi_1}(u) < \delta_{\pi_1}(\pi_1(S))$ and $\delta_{\pi_2}(v) > \delta_{\pi_2}(\pi_2(S))$. First, we show there exists a path $P$ from $u$ to $v$ in $G \sm S$. If $\Gamma$ is the edge $(u, v)$, the path $P = \{u, v\}$ exists in $G \sm S$ as $u \notin S$ and $v \notin S$. Otherwise, $\Gamma$ is a component $Q$ of $G \sm \Pi$ and both $u$ and $v$ are in $N(Q)$. This means there exist edges $(u, u')$ and $(v, v')$ in $G \sm S$ such that $u'$ and $v'$ are in $Q$. Because $Q$ is a component of $G \sm \Pi$, $v$ and $v'$ must be connected in $G \sm S$, as $S \subseteq \Pi$. Therefore, $u$ and $v$ are connected in $G \sm S$ via the vertices $u'$ and $v'$.
    
    Using $P$ as a subpath, we can construct a $x \e z$ path in $G \sm S$ as follows. Consider the path obtained by joining the intervals: $\pi_1[x \e u] \cup P \cup \pi_2[v \e z]$. Note that $\pi_1[x \e u]$ exists in $G \sm S$ because $\delta_{\pi_1}(u) < \delta_{\pi_1}(\pi_1(S))$, so no vertex of $S$ lives in the interval $\pi_1[x \e u]$. Similarly, $\pi_2[v \e z]$ exists in $G \sm S$ because $\delta_{\pi_2}(v) > \delta_{\pi_2}(\pi_2(S))$. Finally, as $P$ exists in $G \sm S$, the joined path from $x$ to $z$ exists in $G \sm S$.
\end{proof}

\subsection{Algorithm and Runtime}
\label{sec:local algorithm}
\paragraph{Outline.}
At a high level, \Cref{alg:local} works as follows. Let $(x, \nu, \Pi)$ be a tuple given as input to the algorithm. For each path $\pi \in \Pi$, we can traverse the path from $x$ to the far-most endpoint of $\pi$. At a given vertex $u \in \pi$, we can explore the bridges of $\Pi$ attached to $u$ using a breadth-first search (BFS). While doing so, we maintain a list of bridges of $\Pi$. We also maintain a list of candidates by checking whether the attachment set of a bridge forms a $k$-tuple. Among the list of candidates, false candidates are then pruned correctly and efficiently via \Cref{lem:shredders-not-straddled} and \Cref{lem:prune}. So far, we have not deviated from $\SHR(\cdot, \cdot)$ (see \Cref{fig:capture}).
\begin{figure}[!ht]
    \centering
    \includegraphics[scale=0.22]{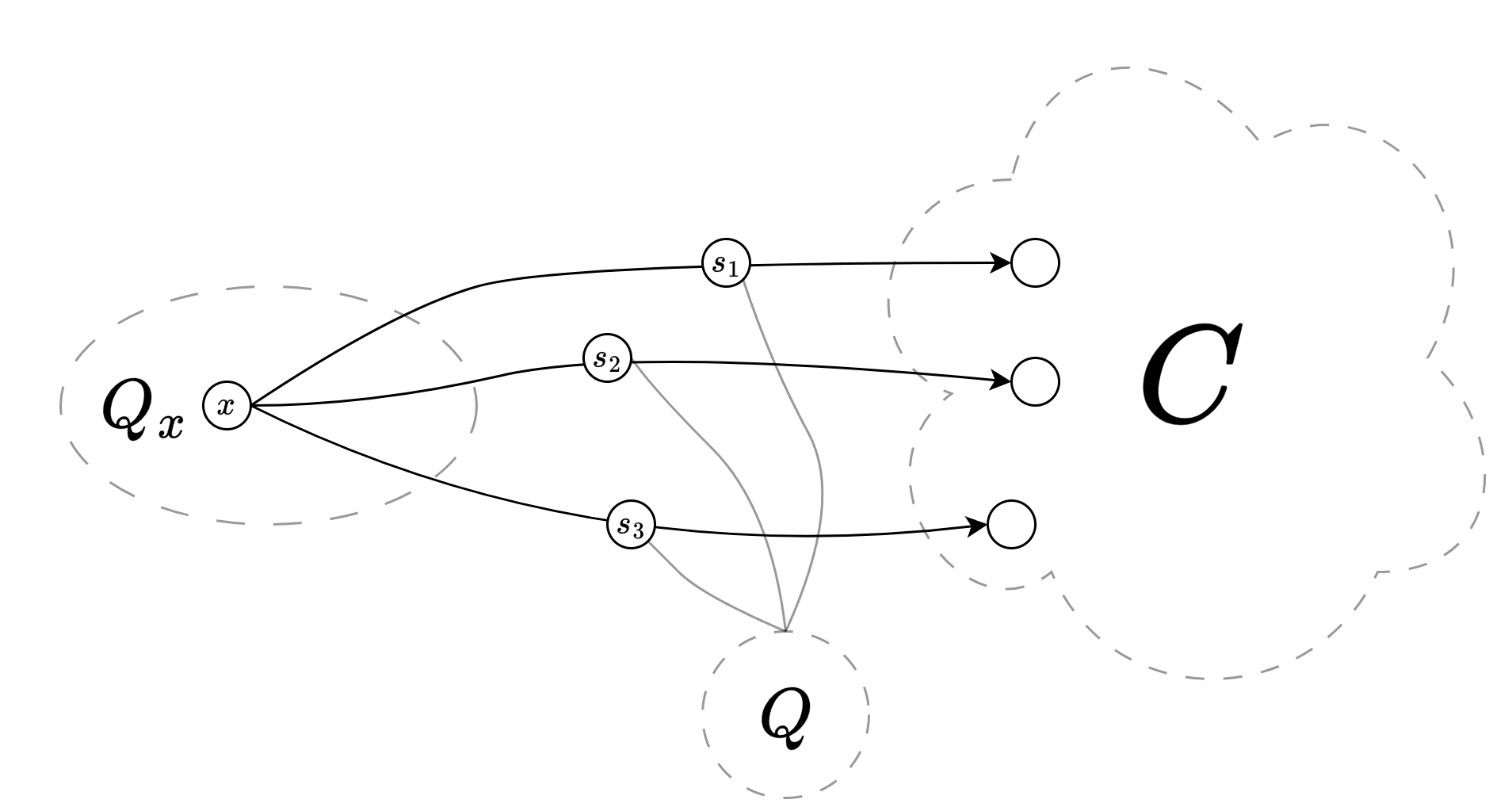}
    \caption{Suppose the $k$-shredder $S = \{s_1, s_2, s_3\}$ is captured by $(x, \nu, \Pi)$. We can recover the component $Q$ by exploring bridges of $\Pi$ using BFS.}
    \label{fig:capture}
\end{figure}
However, our modification imposes a restriction on the number of edges we are allowed to explore via BFS. We can explore at most $\nu$ edges along each path. Suppose we are processing a vertex $u$ along path $\pi$. If the number of edges explored for this path exceeds $\nu$ while exploring bridges of $\Pi$ attached to $u$, we terminate early and mark $u$ as an \textit{unverified vertex}. Intuitively, the unverified vertex $u$ serves as a boundary of exploration. It signifies that bridges attached to $u$ were not fully explored, so we flag $u$ and treat it with extra caution during a later step. At the end of processing all the paths in $\Pi$, we will have obtained a list of bridges of $\Pi$, a list of candidate $k$-shredders of $\Pi$, and an \textit{unverified set} consisting of all the unverified vertices. The final step is to black box \Cref{lem:prune} to prune false candidates among the list of candidate $k$-shredders. Consider \Cref{alg:local}.

\begin{algorithm}[!ht]
\caption{Finding the unbalanced $k$-shredders with small components containing $x$}
\label{alg:local}
\begin{algorithmic}[1]
    \Require{$x$ - a sample vertex \\
        $\nu$ - volume parameter \\
        $\Pi$ - a set of $k$ openly disjoint paths starting from $x$
    }
    \Ensure{$\L$ - a list of $k$-shredders captured by $(x, \nu, \Pi)$ \\
        $U$ - an unverified set
    }

    \State $\L \leftarrow \varnothing$
    \State $U \leftarrow \varnothing$
    
    \For {each path $\pi \in \Pi$} \label{alg:local:all-paths}
        \State reset explored edges and vertices to null
        \For {each vertex $u \in \pi$ in increasing distance from $x$} \label{alg:local:explore-path}
            \If {$u$ is the far-most endpoint of $\pi$} \label{alg:local:endpoint}
                \State $U \leftarrow U \cup \{u\}$
                \State \textbf{break}
            \EndIf
            
            \State explore all unexplored bridges of $\Pi$ attached to $u$ with BFS: \label{alg:local:bfs}
            \INDSTATE terminate early the moment more than $\nu$ edges are explored \label{alg:local:explore-limit}

            \If {BFS terminated early}
                \State $U \leftarrow U \cup \{u\}$ \label{alg:local:term-early-unverified}
                \State \textbf{break}
            \Else \label{alg:local:not-term-early}
                \For{each bridge $\Gamma$ explored by BFS} \label{alg:local:process-bridges}
                    \If {the attachment set $A$ of $\Gamma$ forms a $k$-tuple of $\Pi$}
                        \State $\L \leftarrow \L \cup \{A\}$ \label{alg:local:mark-candidate}
                    \EndIf
                \EndFor
            \EndIf
        \EndFor
    \EndFor

    \label{alg:local:initial-candidates}
    \If {$x \in U$} \label{alg:local:x-in-U}
        \State \textbf{return} $(\varnothing, \varnothing)$
    \EndIf

    \item[]
    \State prune false $k$-shredders in $\L \cup \{U\}$ using \Cref{lem:prune} \label{alg:local:prune}
    \State $F \leftarrow$ the set of far-most endpoints of all paths in $\Pi$ 
    \If {$U$ was not pruned \textbf{and} $U \cap F = \varnothing$} \label{alg:local:U-prune}
        \State \textbf{return} $(\L, U)$
    \Else
        \State \textbf{return} $(\L, \varnothing)$
    \EndIf
\end{algorithmic}
\end{algorithm}

\paragraph{Small Remarks.}
A few lines of \Cref{alg:local} were added to handle some special cases. Specifically, lines \ref{alg:local:endpoint}, \ref{alg:local:x-in-U}, and \ref{alg:local:U-prune}. The justification of these lines will become clear in the following pages, but we will provide basic reasoning here.

The first thing to remember is that unlike $\SHR(\cdot, \cdot)$, the paths in $\Pi$ may have different far-most endpoints. The danger arises from the following case. Suppose there is a component $Q$ of $G \sm \Pi$ such that $N(Q)$ is equal to the set of endpoints of the paths in $\Pi$. We should not report $N(Q)$ as a candidate $k$-shredder because there does not necessarily exist a vertex $y \notin Q$ such that $x$ and $y$ are disconnected in $G \sm N(Q)$. Contrast this with $\SHR(\cdot, \cdot)$, where all the $k$ paths in $\Pi$ are between two vertices $x$ and $y$. In $\SHR(\cdot, \cdot)$, we know that any component $Q$ in $G \sm \Pi$ and its attachments set $N(Q)$ cannot contain $y$. If $N(Q)$ is not straddled by any bridge, then there are three components of $G \sm N(Q)$: the component containing $x$, the component containing $y$, and $Q$.

In our case, however, each path starts from $x$ but can end at an arbitrary vertex. There is no common endpoint $y$. The fix for this is the inclusion of line \ref{alg:local:endpoint}. If we reach an endpoint of a path, we should immediately mark the vertex as unverified and consider the endpoints later. Intuitively, because $(x, \nu, \Pi)$ captures $S$, we do not need to explore the far-most endpoints of $\Pi$ because they live in $C$. After all, our goal is to list $S$, which does not contain any of the path endpoints.

Line \ref{alg:local:x-in-U} covers a trivial case. What may occur is that the bridges of $\Pi$ attached to $x$ have total volume greater than $\nu$. In this case, we will terminate the BFS due to line \ref{alg:local:explore-limit} when processing $x$ and never find a candidate $k$-shredder. Right before line \ref{alg:local:x-in-U} we will have $U = \{x\}$. In this case, we can terminate immediately and return empty sets because no candidate $k$-shredders were found. We should think of this case as a ``bad input'' tuple scenario where $(x, \nu, \Pi)$ did not capture anything.

Line \ref{alg:local:U-prune} covers the case where the unverified set contains an endpoint of some path in $\Pi$. We need not report $U$ in this case because any $k$-shredder $S$ that $(x, \nu, \Pi)$ captures must use vertices strictly preceding each far-most endpoint of all paths. This may seem like an arbitrary restriction, but it will be useful to include line \ref{alg:local:U-prune} in the proof of \Cref{lem:unverified-set}.


\paragraph{Running Time.}
Before showing the correctness of this algorithm in the next subsection, we analyze the running time here.

\begin{lem} \label{lem:local:time}
    Algorithm \ref{alg:local} on input $(x, \nu, \Pi)$ runs in $\O(k^2 \nu \log \nu)$ time.
\end{lem}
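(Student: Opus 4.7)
The plan is to decompose the runtime of \Cref{alg:local} into three parts: (a) walking each path $\pi \in \Pi$ and running BFS to explore bridges (lines~\ref{alg:local:all-paths}--\ref{alg:local:explore-limit}), (b) extracting candidate $k$-tuples from each newly discovered bridge (lines~\ref{alg:local:process-bridges}--\ref{alg:local:mark-candidate}), and (c) the final pruning call to \Cref{lem:prune} on line~\ref{alg:local:prune}. The main obstacle will be to correctly charge the BFS work against the $\nu$-cap in line~\ref{alg:local:explore-limit} to obtain tight bounds on $|\C|$ and $\vol(\B)$ to feed into \Cref{lem:prune}; this requires carefully interpreting ``unexplored bridges'' as state maintained across successive BFS calls within a single path.

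For part (a), the outer loop iterates $k$ times. Walking along a single path $\pi$ in increasing distance from $x$ costs $O(|\pi|)$, so summing over $\Pi$ gives $O(\ell)\leq O(k^2\nu)$ where $\ell=\sum_{\pi\in\Pi}|\pi|$ (bounded by $k^2\nu$ in the capture setting; for longer inputs one can cap the walk at $k^2\nu$ total steps without missing any captured shredder). For BFS, the ``unexplored bridges'' qualifier in line~\ref{alg:local:bfs} together with the per-path state reset at the top of line~\ref{alg:local:all-paths} ensure that the $\nu$ cap in line~\ref{alg:local:explore-limit} is cumulative across the BFS calls made within one iteration of the outer loop, so the total BFS edges explored per path is at most $\nu+O(1)$. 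Over $k$ paths this contributes $O(k\nu)$. For part (b), each explored bridge is inspected once; using a vertex-to-path lookup built once in $O(\ell)=O(k^2\nu)$ time, each attachment set of size at most $k$ can be tested for being a $k$-tuple in $O(k)$ time. Since each bridge has volume at least one, at most $O(\nu)$ bridges are explored per path, yielding $O(k\nu)$ tests per path and $O(k^2\nu)$ in total. Consequently $|\C|:=|\L\cup\{U\}|=O(k\nu)$ and $\vol(\B)=O(k\nu)$, where $\B$ is the collection of bridges discovered during the BFS phase.

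For part (c), I would apply \Cref{lem:prune} with the above parameters, giving
\[
    O\bigl(k|\C|\log|\C| + \ell + \vol(\B)\bigr) = O\bigl(k^2\nu\log(k\nu) + k^2\nu + k\nu\bigr) = O(k^2\nu\log\nu),
\]
absorbing $\log k$ into $\log\nu$ using $\nu=\Omega(k)$, which holds whenever $(x,\nu,\Pi)$ could possibly capture a shredder because $\vol(\R)\geq k$ in the $k$-connected graph $G$ (every vertex has degree at least $k$); for even smaller $\nu$ the total work trivially fits within the claimed bound. Summing the three contributions yields the desired $\O(k^2\nu\log\nu)$ running time.
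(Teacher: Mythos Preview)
Your decomposition into (a) path-walking and BFS, (b) candidate extraction, and (c) the pruning call via \Cref{lem:prune} matches the paper's proof exactly, and your accounting for parts (a) and (b) is correct.

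The one substantive difference is in bounding $|\C|$. You obtain $|\C|=O(k\nu)$ by counting $O(\nu)$ bridges per path over $k$ paths, which forces you to argue $\log(k\nu)=O(\log\nu)$ via $\nu=\Omega(k)$. The paper instead observes that any bridge $\Gamma$ whose attachment set is a $k$-tuple must have $\vol(\Gamma)\geq k$ (it has $k$ distinct neighbours in $\Pi$), so the $O(k\nu)$ total edges explored across all paths can account for at most $k\nu/k=O(\nu)$ candidate-producing bridges. This yields $|\C|=O(\nu)$ directly, and the pruning cost becomes $O(k\nu\log\nu+k^2\nu+k\nu)=O(k^2\nu\log\nu)$ with no appeal to $\nu=\Omega(k)$. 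Your fallback claim that for $\nu<k$ ``the total work trivially fits within the claimed bound'' is not literally true (for small constant $\nu$ and large $k$, $k^2\nu\log(k\nu)$ is not $O(k^2\nu\log\nu)$), so the paper's tighter count on $|\C|$ is the cleaner fix. In the algorithm's actual invocations this edge case is immaterial, but for the lemma as stated you should adopt the $|\C|=O(\nu)$ argument.
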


\begin{proof}
    For each path, we explore a total of $\O(\nu)$ volume. This is because we keep track of exploration and terminate early whenever we explore more than $\nu$ edges. Because there are $k$ paths, the entire \textbf{for} loop starting on line \ref{alg:local:all-paths} can be implemented using hash sets in $\O(k \nu)$ time.

    Now we analyze the rest of the algorithm. As argued above, the total number of edges we explore over all paths is at most $k \nu$. Since each candidate $k$-shredder corresponds with a distinct bridge of $\Pi$ whose attachments form a $k$-tuple, there can be at most $\frac{k \nu}{k} = \O(\nu)$ candidate $k$-shredders in $\L$. Let $\B$ denote the set of all bridges explored. Any bridge of $\Pi$ at the very least requires one edge, so the volume of all bridges put together is at most $\O(k \nu)$. Since the sum of lengths over all paths is at most $k^2 \nu$, \Cref{lem:prune} implies we can implement line \ref{alg:local:prune} in $\O(k |\L| \log |\L| + k^2\nu + \vol(\B)) = \O(k\nu \log \nu + k^2 \nu + k \nu) = \O(k^2 \nu \log \nu)$ time.
\end{proof}

\subsection{Correctness} \label{sec:local analysis}
Now we prove the correctness portion of \Cref{lem:local}. These lemmas effectively streamline the argument of \cite{CT99} and show that the structural properties of $\SHR(\cdot, \cdot)$ extend to our localized setting. We will prove two core lemmas.

\begin{restatable}{lem}{lemlocalidentify} \label{lem:local:identify}
    Let $S$ be a $k$-shredder captured by $(x, \nu, \Pi)$. Then, \Cref{alg:local} will either identify $S$ as a candidate $k$-shredder or identify $S$ as the unverified set. Specifically, either $S \in \L$ or $S = U$ by the end of the algorithm.
\end{restatable}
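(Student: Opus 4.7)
The plan is to track, for each path $\pi \in \Pi$, whether the algorithm's BFS fully explores a specific component $Q$ of $G \sm \Pi$ associated with $S$. First I would set up notation: by \Cref{fact:attachment-shredder}, each path $\pi \in \Pi$ contains a unique vertex $s_\pi = \pi(S)$ of $S$. Let $Q_x$ denote the component of $\R$ containing $x$, and pick any $Q \in \R \sm \{Q_x\}$ (one exists since a shredder produces at least two non-$C$ components). By \Cref{lem:preserve-components}, $Q$ is a component of $G \sm \Pi$, and $k$-connectivity forces $N(Q) = S$ (any proper subset would separate $Q$ using fewer than $k$ vertices). I would also observe that the strict prefix of $\pi$ from $x$ up to (but not including) $s_\pi$, which I call $\pi_{\mathrm{pre}}$, lies entirely in $Q_x$, since it is a walk in $G \sm S$ starting at $x \in Q_x$.

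Next I would bound the BFS cost on $\pi_{\mathrm{pre}}$. Every bridge of $\Pi$ attached to a vertex of $\pi_{\mathrm{pre}}$ is either a component bridge contained in $Q_x$ (because neighbors outside $S$ of a $Q_x$-vertex must stay in $Q_x$) or an edge bridge $(u,v)$ with $u \in \pi_{\mathrm{pre}} \subseteq Q_x$. In either case, every edge counted by the BFS during $\pi_{\mathrm{pre}}$ is incident to $Q_x$, so the cumulative count is at most $\vol(Q_x) \le \vol(\R) \le \nu$. Hence BFS cannot terminate early within $\pi_{\mathrm{pre}}$, and the bridge $Q$ (whose only attachments are in $S$, hence disjoint from $\pi_{\mathrm{pre}}$) is not touched there.

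With these facts in hand, the case analysis is straightforward. For each path $\pi$, either (A) the BFS while processing $s_\pi$ fully explores $Q$, or (B) it does not. In case (A), line \ref{alg:local:process-bridges} identifies the attachment set $N(Q) = S$ as a $k$-tuple (using \Cref{fact:attachment-shredder} again) and inserts $S$ into $\L$. In case (B), since $Q$ can only be discovered while processing $s_\pi$, the cumulative count must have exceeded $\nu$ during that step, so the exit vertex for $\pi$ (the one appended to $U$) is exactly $s_\pi$. Thus if any single path falls into case (A), then $S \in \L$ before pruning; if every path falls into case (B), then $U = \{s_\pi : \pi \in \Pi\} = S$.

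The final step is to verify that pruning preserves $S$ and that the guard at line \ref{alg:local:U-prune} admits $U$ when $U = S$. By \Cref{lem:shredders-not-straddled} no bridge of $\Pi$ straddles $S$, and since every candidate in $\L$ is the attachment set of some bridge, no candidate in $\L$ straddles $S$ either. For $U$ itself, treated as a candidate by the pruning step, each path contributes at most one vertex to $U$, and that vertex lies at $\delta_\pi$-distance at least $\delta_\pi(s_\pi)$ from $x$ (it is $s_\pi$, a later vertex on $\pi$, or the far-most endpoint), so the straddling condition fails. Hence $S$ survives pruning in case (A). In case (B), the set $U = S$ likewise survives pruning, and since the far-most endpoints $F$ all lie in $C$ and are therefore disjoint from $S$, the condition $U \cap F = \varnothing$ is satisfied and $U$ is returned. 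The most delicate point is case (B), where one must combine the $\pi_{\mathrm{pre}}$ volume bound with the fact that $Q$'s only attachment on $\pi$ is $s_\pi$ to conclude that failure to fully explore $Q$ forces the exit vertex to be exactly $s_\pi$, rather than some earlier vertex in $\pi_{\mathrm{pre}}$.
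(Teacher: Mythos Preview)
Your approach is essentially the paper's: both arguments use \Cref{lem:preserve-components} to obtain a bridge $Q$ of $\Pi$ with $N(Q)=S$, bound the exploration cost on the prefix $\pi_{\mathrm{pre}}$ by $\vol(Q_x)\le\nu$ to conclude the walk reaches $s_\pi$ on every path (this is \Cref{lem:reach} in the paper), then split into the two cases and invoke \Cref{lem:shredders-not-straddled} for the pruning step. Your extra care about $U$ not straddling $S$ and about $U\cap F=\varnothing$ is correct and a bit more explicit than the paper.

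There is, however, a small but genuine slip in your case split. You define case (A) as ``the BFS while processing $s_\pi$ fully explores $Q$'' and then claim that line~\ref{alg:local:process-bridges} fires and inserts $S$ into $\L$. But line~\ref{alg:local:process-bridges} is reached only when the BFS at $s_\pi$ does \emph{not} terminate early (line~\ref{alg:local:not-term-early}); it is perfectly possible that the BFS finishes scanning $Q$ and then exceeds $\nu$ edges on some \emph{other} bridge attached to $s_\pi$, in which case $s_\pi$ is marked unverified and nothing is added to $\L$ on that path. So ``fully explores $Q$'' does not imply $S\in\L$. The correct dichotomy is the one the paper uses: either the BFS at $s_\pi$ completes without early termination on some path (then all bridges at $s_\pi$, including $Q$, are processed and $S\in\L$), or it terminates early at $s_\pi$ on every path (then $U=\{s_\pi:\pi\in\Pi\}=S$). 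Your prefix bound already guarantees the unverified vertex on each path is at distance $\ge\delta_\pi(s_\pi)$, which is exactly what drives this clean split; you just need to phrase (A) as ``does not terminate early at $s_\pi$'' rather than ``fully explores $Q$''.
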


\begin{restatable}{lem}{lemlocalvalidity} \label{lem:local:validity}
    All sets in $\L$ at the end of \Cref{alg:local} are $k$-shredders.
\end{restatable}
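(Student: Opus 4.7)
The plan is to show that each $S \in \L$ surviving the pruning on line~\ref{alg:local:prune} is a true $k$-shredder by combining the pruning guarantee of \Cref{lem:prune} with a structural argument about bridges of $\Pi$ that the algorithm never fully explored, and then appealing to a converse-direction local analog of \Cref{lem:shredders-not-straddled}.

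First, I would observe a strict-precedence property for how candidates enter $\L$. A set is added to $\L$ only in the \texttt{else} branch on line~\ref{alg:local:not-term-early}, which is reached only when the BFS at the current vertex does \emph{not} terminate early. So if $S = N(\Gamma)$ ends up in $\L$, the BFS fully exploring $\Gamma$ was initiated at some vertex $u^*$ lying strictly before the unverified vertex $u_{\pi^*}$ on some path $\pi^* \in \Pi$. Because that BFS only considers bridges attached to its starting vertex and $\Gamma$ has the unique attachment $v_{\pi^*} := S \cap \pi^*$ on $\pi^*$ (by \Cref{fact:attachment-shredder}, treating $S$ as a $k$-tuple), we must have $u^* = v_{\pi^*}$, yielding the strict inequality $\delta_{\pi^*}(v_{\pi^*}) < \delta_{\pi^*}(u_{\pi^*})$.

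Next, \Cref{lem:prune} applied to the inputs $(\L \cup \{U\}, \B, \Pi)$, where $\B$ is the set of fully-explored bridges, guarantees that the surviving $S$ is not straddled by any bridge in $\B$ nor by any candidate in $\L \cup \{U\}$, in particular not by $U$ itself. Note that $U$ is a $k$-tuple of $\Pi$: the algorithm exits on line~\ref{alg:local:x-in-U} whenever $x \in U$, and otherwise exactly one vertex per path is added to $U$, and these lie on openly-disjoint paths. Since $U$ does not straddle $S$, we have $U \preceq S$ or $U \succeq S$; the strict inequality above rules out $U \preceq S$, so $U \succeq S$ on every path. Moreover, I would argue that every unexplored bridge $\Gamma'$ satisfies $\Gamma' \succeq U$: if some attachment $w$ of $\Gamma'$ on a path $\pi$ had $\delta_\pi(w) < \delta_\pi(u_\pi)$, then $w$ would be a processed vertex whose BFS completed without early termination, forcing $\Gamma'$ to have been fully explored via $w$---a contradiction. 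Chaining $\Gamma' \succeq U \succeq S$, we see that no unexplored bridge straddles $S$ either, and combined with the pruning guarantee, no bridge of $\Pi$ straddles $S$ at all.

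Finally, I would invoke the forthcoming \Cref{lem:local:no-straddle-is-shredder} (the local converse of \Cref{lem:shredders-not-straddled}): a candidate $k$-tuple $S$ unstraddled by any bridge of $\Pi$ is a true $k$-shredder. The main obstacle is proving this converse in the local setting. Unlike the $\SHR(\cdot,\cdot)$ case of \Cref{lem:xy-straddle}, where the common endpoints $x$ and $y$ directly witness two of the components of $G \sm S$, here the paths in $\Pi$ end at different vertices, so we must exhibit $\geq 3$ components of $G \sm S$ explicitly: $\Gamma$ itself (isolated in $G\sm S$ since $N(\Gamma)=S$), the component containing $x$, and a third component derived from the after-cut portions of the paths, whose separation from $x$'s component follows from the non-straddling of $S$ combined with \Cref{fact:attachment-shredder}.
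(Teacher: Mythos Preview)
Your proposal is correct and follows essentially the same approach as the paper. The paper packages your strict-precedence observation as \Cref{fact:local:candidate-precedes-U}, your argument that unexplored bridges cannot straddle a surviving $S$ is the contrapositive of the paper's \Cref{lem:local:prune-straddled} (the paper shows that if some bridge straddles $S$ then either an explored bridge or $U$ itself straddles $S$, so $S$ is pruned), and both then invoke \Cref{lem:local:no-straddle-is-shredder} exactly as you outline.
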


It is straightforward to check that these two lemmas directly imply the algorithm correctness portion of \Cref{lem:local}.

\subsubsection{Identification of Captured Shredders}
First we prove \Cref{lem:local:identify}. Let $S$ be a $k$-shredder captured by $(x, \nu, \Pi)$. We will prove that the algorithm will walk along each path and reach the vertices of $S$. Then, we prove that $S$ will either be marked as a candidate $k$-shredder or the unverified set.

\begin{lem} \label{lem:reach}
    Let $S$ be a $k$-shredder captured by $(x, \nu, \Pi)$ with partition $(C, \R)$. Then, $S \preceq U$.
\end{lem}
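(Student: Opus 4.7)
The plan is to prove the contrapositive by a volume-accounting argument that is localized to a single path $\pi \in \Pi$. Fix any $\pi \in \Pi$, and let $s_\pi$ be the unique vertex of $\pi \cap S$ (unique by \Cref{fact:attachment-shredder}). It suffices to show that every $u \in \pi \cap U$ satisfies $\delta_\pi(s_\pi) \leq \delta_\pi(u)$. I will assume for contradiction that some $u \in \pi \cap U$ has $\delta_\pi(u) < \delta_\pi(s_\pi)$ and derive that \Cref{alg:local} cannot have placed $u$ in $U$.

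The first key step is a containment statement: the prefix $\pi[x \e s_\pi)$ lies entirely inside the component $Q_x \in \R$ that contains $x$. This is immediate because $\pi$ starts at $x \in Q_x$, meets $S$ for the first time at $s_\pi$, and every vertex strictly before $s_\pi$ on $\pi$ is joined to $x$ by a sub-path of $\pi$ in $G \sm S$. In particular, if the hypothetical bad $u$ exists, then $u \in Q_x$.

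The second and main step is to bound the total volume of bridges of $\Pi$ with an attachment on $\pi[x \e s_\pi)$. I would split by the two types of bridges. A component-bridge $\Gamma$ attached to some $v \in \pi[x \e s_\pi) \subseteq Q_x$ is disjoint from $\Pi \supseteq S$, so $\Gamma$ is a connected subgraph of $G \sm S$ adjacent to $Q_x$, forcing $\Gamma \subseteq Q_x$; hence every edge counted in $\vol(\Gamma)$ is incident to $Q_x$. An edge-bridge $(v, w)$ with $v \in \pi[x \e s_\pi)$ has $v \in Q_x$, so the edge is again incident to $Q_x$ (this remains valid if $w \in S$ or if $w$ lies on another path; the one escape scenario, $w \in C$, is ruled out because $Q_x$ and $C$ are distinct components of $G \sm S$ and so no edge of $G$ can cross between them). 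Using that distinct bridges are edge-disjoint, summing gives
\begin{equation*}
    \sum_{\Gamma \text{ attached to } \pi[x \e s_\pi)} \vol(\Gamma) \;\leq\; \vol(Q_x) \;\leq\; \vol(\R) \;\leq\; \nu,
\end{equation*}
where the final inequality is the capture condition from \Cref{def:capture}.

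To conclude, observe that the hypothetical $u$ is not the far-most endpoint of $\pi$ (which lies in $C$ and satisfies $\delta_\pi(\cdot) > \delta_\pi(s_\pi)$), so $u$ must have been added to $U$ on line \ref{alg:local:term-early-unverified}, i.e., because the path-$\pi$ BFS explored more than $\nu$ edges while processing $u$. But every edge the BFS explores on path $\pi$ lies inside a bridge whose attachment set meets $\pi[x \e u] \subseteq \pi[x \e s_\pi)$, so the cumulative exploration is bounded by the display above, which is at most $\nu$. This contradicts early termination at $u$ and completes the proof. I expect the main obstacle to be exactly the bridge-volume bound: ruling out edge-bridges that jump across $\Pi$ into the ``$C$-side'' requires the observation that $Q_x$ and $C$ are separated by $S$ in $G$, which is where all the structural content of capture gets used.
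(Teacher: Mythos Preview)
Your proof is correct and follows essentially the same argument as the paper: fix a path $\pi$, observe that the prefix before $s_\pi$ lies in $Q_x$, and bound the total BFS exploration by $\vol(Q_x) \leq \vol(\R) \leq \nu$, so the algorithm cannot terminate early before reaching $s_\pi$. Your explicit case split on bridge type just unpacks the paper's one-line assertion that ``all edges explored for the vertices in $P$ must contain an endpoint in $Q_x$''; the parenthetical about the ``escape scenario'' $w \in C$ is unnecessary, since $v \in Q_x$ already forces the edge to be incident to $Q_x$ regardless of where $w$ lies.
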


\begin{proof}
    Fix a path $\pi \in \Pi$. Let $P = \{v \in \pi \mid \delta_{\pi}(v) < \delta_{\pi}(\pi(S))\}$. It suffices to prove that the algorithm will not terminate early when exploring bridges attached to $P$. Let $Q_x$ denote the component of $\R$ containing $x$. Every vertex in $P$ must be in $Q_x$ because they precede $\pi(S)$ along $\pi$. Because all vertices of $S$ are in $\Pi$, we cannot reach a component of $\R \sm \{Q_x\}$ when exploring bridges of $\Pi$ attached to $P$. It follows that all edges explored for the vertices in $P$ must contain an endpoint in $Q_x$. Because $Q_x \in \R$ we have $\vol(Q_x) \leq \vol(\R) \leq \nu$. Line \ref{alg:local:explore-limit} ensures the algorithm will not terminate early. This means that for all paths $\pi \in \Pi$, the unverified vertex $u \in \Pi$ satisfies $\delta_{\pi}(\pi(S)) \leq \delta_{\pi}(u)$. Hence, $S \preceq U$.
\end{proof}

\begin{proof}[Proof of \Cref{lem:local:identify}]
    \Cref{lem:preserve-components} implies that there exists a component $Q$ of $G \sm S$ that is also a component of $G \sm \Pi$. This means that $Q$ is a bridge of $\Pi$ whose attachment set is $S$. From \Cref{lem:reach} we know that there are two cases. Either $\delta_\pi(\pi(S)) = \delta_\pi(\pi(U))$ for all paths $\pi \in \Pi$, or $\delta_\pi(\pi(S)) < \delta_\pi(\pi(U))$ for some path $\pi \in \Pi$. In the former case, we have that $S = U$. In the latter case, line \ref{alg:local:bfs} and lines \ref{alg:local:process-bridges}-\ref{alg:local:mark-candidate} imply we must have fully explored all bridges of $\Pi$ attached to a vertex of $S$. Because $Q$ is a bridge of $\Pi$ attached to all vertices of $S$, we will put $N(Q) = S$ as a set in $\L$. \Cref{lem:shredders-not-straddled} implies that no bridge of $\Pi$ straddles $S$, which means $S$ will not be pruned from $\L$ on line \ref{alg:local:prune}.
\end{proof}

\subsubsection{Algorithm Validity}
So far we have proved that all $k$-shredders captured by $(x, \nu, \Pi)$ will be returned by \Cref{alg:local}. However, this is not enough to show correctness; we also need to show that every set returned by the algorithm is a $k$-shredder. To do this, we will first prove that if a set $S \in \L$ is not straddled by a bridge of $\Pi$, then it is a $k$-shredder. Then, we will prove that straddled sets in $\L$ are pruned on line \ref{alg:local:prune}. The following fact is trivial but useful for the next two lemmas.

\begin{fact} \label{fact:local:candidate-precedes-U}
    Let $S$ be a set remaining in $\L$ after line \ref{alg:local:prune}. Then, there exists a path $\pi \in \Pi$ such that $\delta_{\pi}(\pi(S)) < \delta_{\pi}(u)$, where $u$ is the unverified vertex along $\pi$.
\end{fact}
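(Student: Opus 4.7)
The plan is to trace exactly when and why $S$ was first inserted into $\L$, and to deduce the strict inequality from the order in which the algorithm processes vertices along paths.

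First, I would observe that line \ref{alg:local:prune} only deletes sets from $\L$ (and never inserts into it), so any $S$ remaining in $\L$ after pruning was placed there on line \ref{alg:local:mark-candidate} during some iteration of the main loop. By inspection of that line, $S$ is the attachment set of some bridge $\Gamma$ of $\Pi$ whose attachments form a $k$-tuple, and the BFS call that explored $\Gamma$ did \emph{not} terminate early: otherwise control would have taken the \textbf{break} branch on line \ref{alg:local:term-early-unverified} instead of entering the \textbf{else} branch on line \ref{alg:local:not-term-early} where bridges are turned into candidates.

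Next, I would identify the vertex from which this BFS was launched. Let $\pi \in \Pi$ and $v \in \pi$ be such that the BFS exploring $\Gamma$ was launched while processing $v$ on line \ref{alg:local:bfs}. Because $\Gamma$ is a bridge attached to $v$, the vertex $v$ lies in the attachment set of $\Gamma$, which equals $S$; hence $v \in S \cap \pi$. Since $S$ is a $k$-tuple with respect to $\Pi$, the intersection $S \cap \pi$ is the singleton $\{\pi(S)\}$, giving $\pi(S) = v$.

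Finally, I would compare $v$ with the unverified vertex $u$ recorded on path $\pi$. The inner loop on line \ref{alg:local:explore-path} iterates over vertices of $\pi$ in order of increasing $\delta_{\pi}$ and exits only by adding some vertex to $U$ (either on the endpoint branch guarded by line \ref{alg:local:endpoint}, or via line \ref{alg:local:term-early-unverified} upon early termination of BFS). Since BFS at $v$ succeeded, $v$ was \emph{not} added to $U$ at that iteration and the loop continued past $v$; therefore the unique unverified vertex $u$ of $\pi$ satisfies $\delta_{\pi}(v) < \delta_{\pi}(u)$. Combining with $\pi(S) = v$ yields $\delta_{\pi}(\pi(S)) < \delta_{\pi}(u)$, as required. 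The whole argument is essentially bookkeeping about the order in which the outer \textbf{for} loop adds items to $\L$ versus $U$, so I do not foresee any substantive obstacle.
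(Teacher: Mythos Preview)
Your proof is correct and follows essentially the same approach as the paper's own argument: both deduce from $S$ being inserted on line \ref{alg:local:mark-candidate} that the BFS at the insertion vertex succeeded, hence that vertex (which is $\pi(S)$ for its path $\pi$) strictly precedes the unverified vertex added later on that path. Your version simply spells out in more detail why the insertion vertex must lie in $S$ and why the loop necessarily continues past it.
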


\begin{proof}
    Since $S$ was put in $\L$, there must exist a vertex $s \in S$ such that the algorithm explored all bridges of $\Pi$ attached to $s$. This is guaranteed by lines \ref{alg:local:term-early-unverified} and \ref{alg:local:not-term-early}. This implies that $s$ precedes the unverified vertex along its path.
\end{proof}

Next we prove that candidates that are not straddled by any bridges of $\Pi$ are $k$-shredders. This can be viewed as a local version of one direction of \Cref{lem:xy-straddle}.

\begin{lem} \label{lem:local:no-straddle-is-shredder}
    Let $S$ be a set put in $\L$ at some point on line \ref{alg:local:mark-candidate}. If no bridge of $\Pi$ straddles $S$, then $S$ is a $k$-shredder.
\end{lem}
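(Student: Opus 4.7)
The plan is to exploit the no-straddling hypothesis to partition $V \setminus S$ into three regions, each of which sits in a distinct component of $G \setminus S$, and to verify each region is nonempty.

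Let $\Gamma$ be the bridge of $\Pi$ whose attachment set equals $S$ (the bridge that placed $S$ in $\L$ on line \ref{alg:local:mark-candidate}). Since $S$ is a $k$-tuple, each path $\pi \in \Pi$ meets $S$ in a unique vertex $s_\pi$. I classify every bridge $\Gamma'$ of $\Pi$ according to its position relative to $S$: $\B_L$ consists of bridges with $\Gamma' \preceq S$ but $\Gamma' \not\succeq S$ (some attachment strictly before $s_\pi$, none strictly after); $\B_R$ is defined symmetrically; and $\B_M$ collects bridges whose attachments all lie in $S$. Since no bridge straddles $S$, this classification is exhaustive. Now let $L$ be the union of path-vertices strictly preceding $S$ together with the interiors of $\B_L$ bridges, define $R$ symmetrically, and let $M$ be the union of interiors of $\B_M$ bridges; these sets partition $V \setminus S$.

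The main step is to show that no edge of $G \setminus S$ crosses between two of these regions, proceeding by a case analysis on the endpoints. An edge with both endpoints in $\Pi$ is either a path-edge (crossing from $L$ to $R$ on the same path would skip $s_\pi$, impossible by adjacency) or an edge-bridge, which directly straddles $S$ via \Cref{def:straddle}, either across two paths or as a chord of a single path applied with $\pi_1 = \pi_2$. An edge with one endpoint in $\Pi$ and the other in a bridge interior forces that bridge to have the path-endpoint as an attachment; if the two sides are $L$ and $R$, this bridge would simultaneously have attachments strictly before and strictly after $S$ and hence straddle $S$, a contradiction. An edge between two bridge interiors must lie within a single bridge, which then could not be simultaneously in $\B_L$ and $\B_R$. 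Separations involving $M$ are simpler, since $\B_M$ bridges attach only to $S$, so any edge leaving $M$ leads into $S$.

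For nonemptiness: $L$ contains $x$ (in the case $k \geq 2$, we have $x \notin S$ automatically because $x$ lies on every path and a $k$-tuple contains at most one representative per path; the degenerate $k = 1$ case reduces to listing articulation points and is handled separately); $R$ is nonempty by \Cref{fact:local:candidate-precedes-U}, which produces a path $\pi^*$ along which $s_{\pi^*}$ strictly precedes the unverified vertex, so the vertex immediately after $s_{\pi^*}$ on $\pi^*$ lies in $R$; and $M$ contains the interior of $\Gamma$, which is nonempty in the principal case that $\Gamma$ is a component of $G \setminus \Pi$. Combining $|S| = k$ with the existence of three distinct components of $G \setminus S$ yields that $S$ is a $k$-shredder. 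The main obstacle will be the straddling argument during the separation step, since \Cref{def:straddle} must be applied precisely, especially when the candidate cross-region edge is a chord of a single path; a secondary subtlety is the edge-bridge case where $\Gamma$ itself is merely an edge (forcing $k = 2$), in which $\Gamma$ has empty interior so nonemptiness of $M$ would have to be derived from another component-bridge with all attachments in $S$.
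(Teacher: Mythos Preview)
Your approach is correct and genuinely different from the paper's. The paper argues by contradiction: it fixes the component $Q$ with $N(Q)=S$, picks (via \Cref{fact:local:candidate-precedes-U}) a path $\pi$ on which $\pi(S)$ is not the far endpoint $z$, and shows that any $x\e z$ path in $G\setminus S$ would contain a subpath whose interior avoids $\Pi$ and whose two anchor vertices witness a straddle of $S$; hence $x$, $z$, and $Q$ lie in three distinct components. Your route instead partitions $V\setminus S$ directly into $L$, $R$, $M$ using the trichotomy $\Gamma'\preceq S$, $\Gamma'\succeq S$, or both, and eliminates cross edges by a case analysis. The paper's argument is shorter because it packages the entire case analysis into a single ``extract a straddling bridge from the offending path'' step, whereas your partition makes the structure of $G\setminus S$ completely explicit. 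Either is fine; your version is closer in spirit to the classical laminar treatment of separators along a flow.

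Two remarks on the loose ends you flagged. First, the edge-bridge worry for $M$ is a non-issue: line~\ref{alg:local:mark-candidate} only adds $N(\Gamma)$ when $\Gamma$ is a component of $G\setminus\Pi$ (this is exactly what ``candidate'' means in the paper, and the paper's own proof opens with $S=N(Q)$ for a component $Q$), so $\Gamma$ always has nonempty interior and $M\neq\varnothing$ holds unconditionally. Second, your invocation of \Cref{fact:local:candidate-precedes-U} is legitimate even though that fact is stated for sets ``remaining after line~\ref{alg:local:prune}'': its proof only uses that $S$ was put in $\L$ at line~\ref{alg:local:mark-candidate}, which is precisely your hypothesis, and the paper's own proof of the present lemma uses it the same way.
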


\begin{proof}
    We prove $S$ is a $k$-shredder by showing $G \sm S$ has at least three components. Firstly, we have $S = N(Q)$ for some component $Q$ of $G \sm \Pi$. Fix a path $\pi \in \Pi$ such that the vertex $\pi(S)$ is not the far-most endpoint of $\pi$. Such a path must exist by \Cref{fact:local:candidate-precedes-U}. Let $z$ denote the far-most endpoint of $\pi$. It suffices to show that $x$ is not connected to $z$ in $G \sm S$. Then, because $x \notin Q$ and $z \notin Q$, there are at least three components of $G \sm S$: the component containing $x$, the component containing $z$, and $Q$.

    Suppose for the sake of contradiction that there exists a simple path $P$ from $x$ to $z$ in $G \sm S$. Our plan is to show that $S$ is straddled by a bridge of $\Pi$, contradicting the assumption that no bridge straddles $S$. To construct such a bridge, let $u$ denote the last vertex along $P$ such that $u \in \pi_1$ for some path $\pi_1 \in \Pi$ and $\delta_{\pi_1}(u) < \delta_{\pi_1}(\pi_1(S))$. Such a vertex must exist because $x \in P$ and satisfies $\delta_{\pi}(x) < \delta_{\pi}(\pi(S))$. Similarly, let $v$ denote the first vertex along $P$ after $u$ such that $v \in \pi_2$ for some path $\pi_2 \in \Pi$ and $\delta_{\pi_2}(v) > \delta_{\pi_2}(\pi_2(S))$. Such a vertex must exist because $z \in P$ and satisfies $\delta_{\pi}(z) > \delta_{\pi}(\pi(S))$.

    Consider the open subpath $P(u \e v)$ (see \Cref{fig:subpath-bridge}). We claim that every vertex in $P(u \e v)$ is not in $\Pi$. To see this, suppose that there exists a vertex $w \in P(u \e v)$ such that $w \in \pi_3$ for some path $\pi_3 \in \Pi$. Then, $\delta_{\pi_3}(w) = \delta_{\pi_3}(\pi_3(S))$ because otherwise, it would contradict the definition of $u$ or $v$. This implies $w = \pi_3(S)$ because the vertex distance $\delta_{\pi_3}(\pi_3(S))$ from $x$ along $\pi_3$ is unique. However, this contradicts the definition of $P$ because $P$ is a path in $G \sm S$: it cannot contain $\pi_3(S)$.
    
    \begin{figure}[!ht]
        \centering
        \includegraphics[scale=0.3]{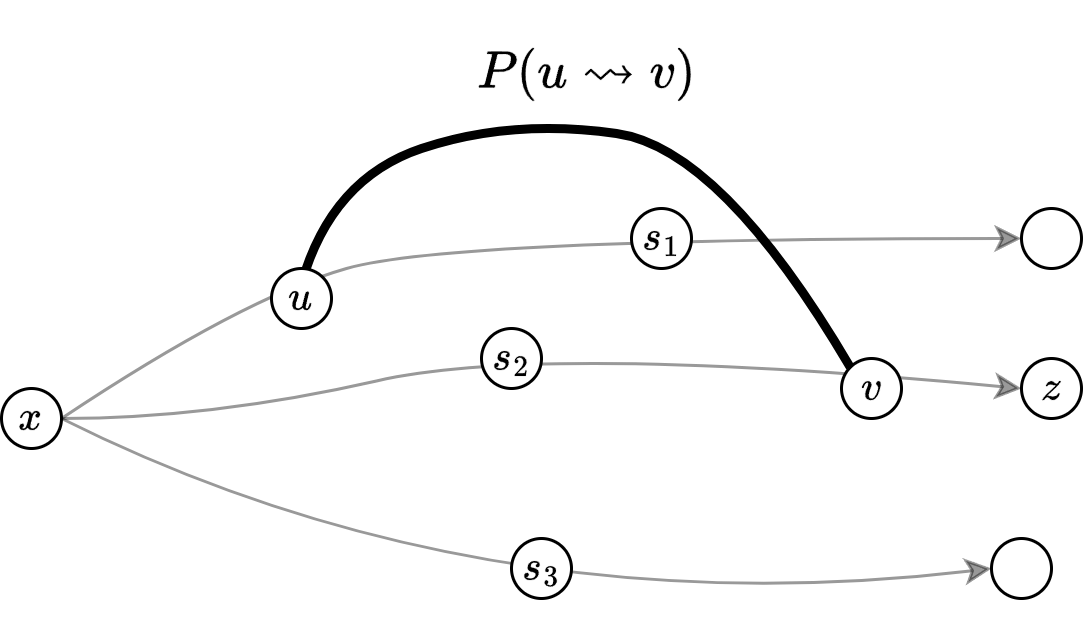}
        \caption{The existence of an $x \e z$ path implies the existence of vertices $u$ and $v$ (it is possible that $x = u$ and $v = z$), which form the attachments of a straddling bridge.}
        \label{fig:subpath-bridge}
    \end{figure}
    
    Hence, the inner vertices in the open subpath $P(u \e v)$ exist as a bridge $\Gamma$ of $\Pi$ with attachment set $\{u, v, \ldots \}$. If there are no vertices in the subpath, then $\Gamma$ is the edge $(u, v)$. More importantly, we have that $\delta_{\pi_1}(u) < \delta_{\pi_1}(\pi_1(S))$ and $\delta_{\pi_2}(v) > \delta_{\pi_2}(\pi_2(S))$. Therefore, $\Gamma$ straddles $S$, arriving at the desired contradiction.
\end{proof}

\begin{lem} \label{lem:local:prune-straddled}
    Let $S$ be a candidate $k$-shredder detected by \Cref{alg:local} on input $(x, \nu, \Pi)$. If there exists a bridge of $\Pi$ that straddles $S$, then $S$ will be pruned from the list of candidates. Similarly, if there exists a bridge that straddles the unverified set $U$, then $U$ will be pruned.
\end{lem}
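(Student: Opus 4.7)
The plan is to prove both halves of \Cref{lem:local:prune-straddled} via a single structural key observation: for any bridge $\Gamma$ of $\Pi$, if $\Gamma$ has an attachment $w \in \pi(\Gamma)$ on some path $\pi \in \Pi$ with $\delta_\pi(w) < \delta_\pi(\pi(U))$, then $\Gamma$ was fully explored by the algorithm during the processing of $\pi$ and therefore belongs to the bridge set $\B$ that \Cref{alg:local} implicitly supplies to \Cref{lem:prune} on line~\ref{alg:local:prune}. To establish this, I would trace through the processing of $\pi$: since $w$ strictly precedes the unverified vertex $\pi(U)$ along $\pi$, the algorithm reaches $w$ without breaking out of the inner loop. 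At $w$, either $\Gamma$ was already fully explored earlier on $\pi$, or the BFS on line~\ref{alg:local:bfs} begins exploring $\Gamma$; in the latter case the BFS cannot terminate early, for otherwise we would have $\pi(U) = w$, contradicting $\delta_\pi(w) < \delta_\pi(\pi(U))$. Hence the BFS completes, fully exploring $\Gamma$, which is then added to $\B$.

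With the key observation in hand, the case of a straddled $U$ is immediate: the straddling witness $w_1 \in \pi_1(\Gamma)$ with $\delta_{\pi_1}(w_1) < \delta_{\pi_1}(\pi_1(U))$ triggers the key observation on $\pi_1$, placing $\Gamma \in \B$, so \Cref{lem:prune} prunes $U$ using $\Gamma$. For a straddled candidate $S \in \L$, I would split on whether $\Gamma \in \B$. If yes, \Cref{lem:prune} prunes $S$ directly using $\Gamma$. Otherwise, the contrapositive of the key observation forces every attachment of $\Gamma$ on every path $\pi$ to satisfy $\delta_\pi \geq \delta_\pi(\pi(U))$. Applied to the straddling witness $w_1 \in \pi_1(\Gamma)$ with $\delta_{\pi_1}(w_1) < \delta_{\pi_1}(\pi_1(S))$, this yields $\delta_{\pi_1}(\pi_1(U)) \leq \delta_{\pi_1}(w_1) < \delta_{\pi_1}(\pi_1(S))$; i.e., $\pi_1(U)$ strictly precedes $\pi_1(S)$ on $\pi_1$. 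On the other hand, \Cref{fact:local:candidate-precedes-U} provides a path $\pi'$ along which $\pi'(S)$ strictly precedes $\pi'(U)$. Together, these two inequalities show that $U$ itself straddles $S$, and since $U$ belongs to the candidate set $\L \cup \{U\}$ supplied to \Cref{lem:prune}, $S$ is pruned via $U$.

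The main obstacle will be articulating the key observation carefully in light of the per-path exploration-state reset on line~\ref{alg:local:all-paths} (so that ``$\Gamma \in \B$'' must be read cumulatively across all path iterations) and ruling out the subtle scenario in which the BFS at $w$ exhausts its budget while exploring bridges other than $\Gamma$; the latter possibility is also precluded, since early termination would again force $\pi(U) = w$, contradicting $\delta_\pi(w) < \delta_\pi(\pi(U))$. A minor bookkeeping point is that $U$ is used as a straddling candidate by \Cref{lem:prune}: since the straddling relation depends only on the attachments $\pi(U)$ along each path, this usage is well-defined even when $U$ does not literally form a $k$-tuple.
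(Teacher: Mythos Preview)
Your proposal is correct and follows essentially the same approach as the paper. Your ``key observation'' is precisely the contrapositive of the paper's dichotomy ``either some attachment of $\Gamma$ was fully processed (so $\Gamma$ is among the explored bridges), or $U \preceq A$''; the subsequent case split for $S$ and the use of \Cref{fact:local:candidate-precedes-U} to exhibit $U$ straddling $S$ mirror the paper exactly, and your treatment of the straddled-$U$ case is a slight streamlining of the paper's (the paper phrases it as a two-case argument whose second case is immediately contradictory, whereas you go directly via the straddling witness).
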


\begin{proof}
    Let $S$ be the candidate straddled by a bridge $\Gamma$ of $\Pi$. For a visual example of this proof, see \Cref{fig:detect-straddle}. Let $A$ denote the attachment set of $\Gamma$. If any vertex in $A$ has been fully processed without terminating early, then $\Gamma$ will be detected on line \ref{alg:local:bfs}. In this case, $S$ will be pruned by line \ref{alg:local:prune}.

    Hence, we can assume that no vertex of $A$ has been fully explored, i.e. $U \preceq A$. We will show that $U$ straddles $S$, in which case $S$ will be pruned by line \ref{alg:local:prune}. First, by \Cref{fact:local:candidate-precedes-U}, there must exist a path $\pi_1 \in \Pi$ that satisfies $\delta_{\pi_1}(\pi_1(S)) < \delta_{\pi_1}(\pi_1(U))$. Because $\Gamma$ straddles $S$, there must exist a path $\pi_2 \in \Pi$ and a vertex $v \in A \cap \pi_2$ such that $\delta_{\pi_2}(v) < \delta_{\pi_2}(\pi_2(S))$. Furthermore, because $U \preceq A$, this implies that $\delta_{\pi_2}(\pi_2(U)) \leq \delta_{\pi_2}(v) < \delta_{\pi_2}(\pi_2(S))$. Now, we have $\delta_{\pi_1}(\pi_1(S)) < \delta_{\pi_1}(\pi_1(U))$ and $\delta_{\pi_2}(\pi_2(U)) < \delta_{\pi_2}(\pi_2(S))$, which implies $U$ straddles $S$. Line \ref{alg:local:prune} implies that $S$ will be pruned from the list of candidates.
    
    \begin{figure}[!ht]
        \centering
        \includegraphics[scale=0.26]{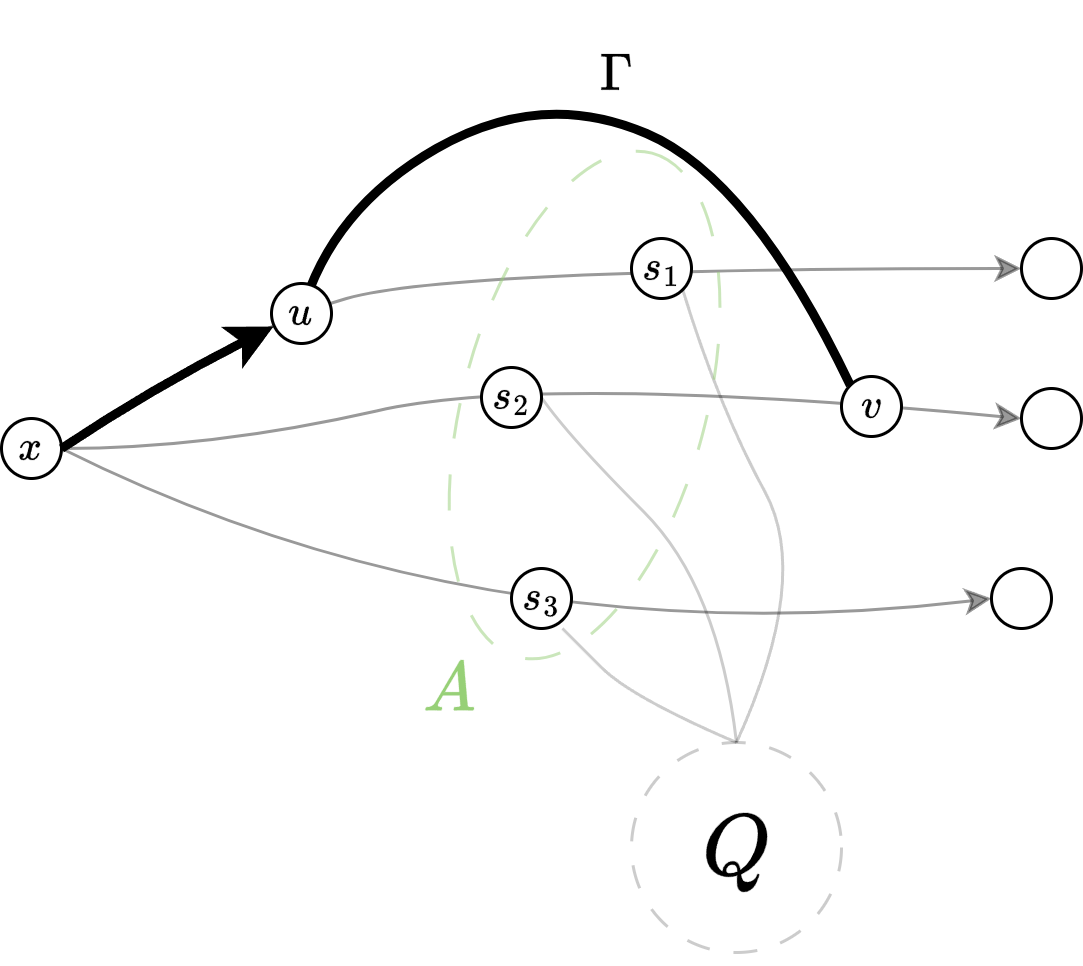}
        \includegraphics[scale=0.26]{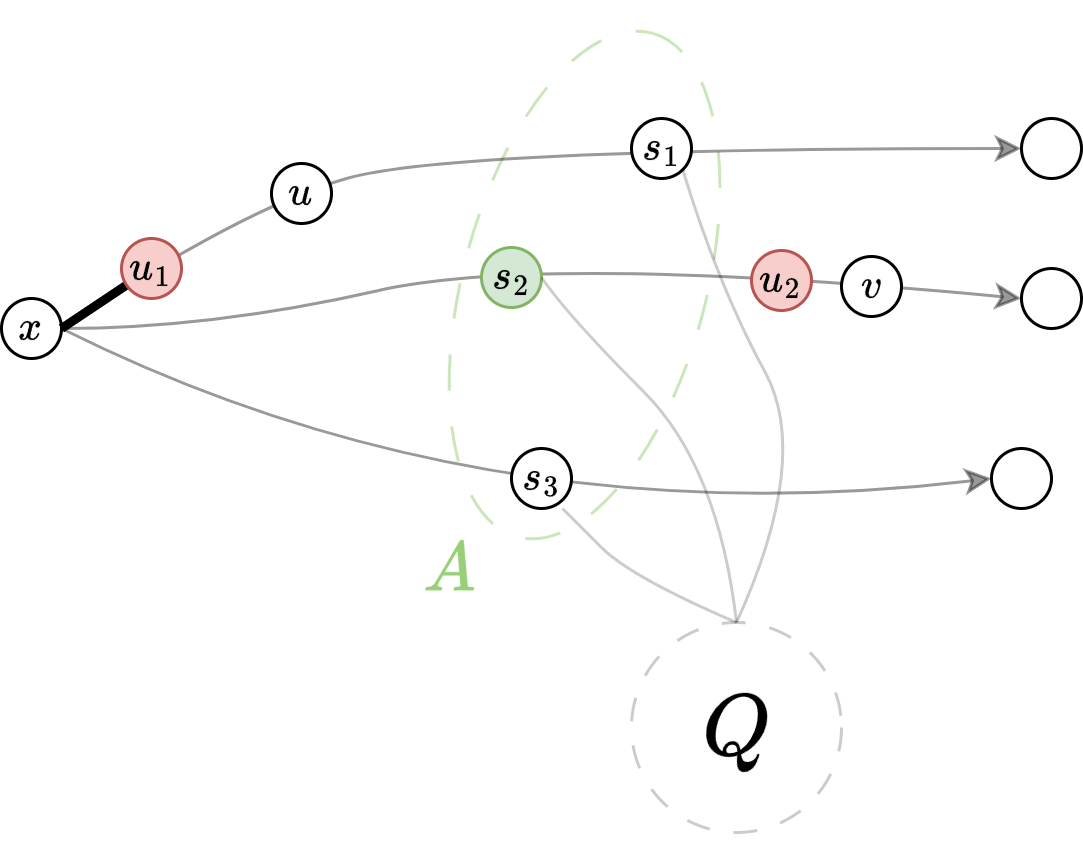}
        \caption{The bridge $Q$ has attachment set $A = S = \{s_1, s_2, s_3\}$. $S$ is straddled by $\Gamma$ with attachments $\{u, v\}$. If we explore all bridges of $\Pi$ attached to $u$, then $\Gamma$ will be naturally detected (top). Otherwise, we must have terminated early at some vertex $u_1$ before reaching $u$. \Cref{fact:local:candidate-precedes-U} implies all bridges of $\Pi$ attached to some vertex $s_2$ were explored without terminating early. So $s_2$ must precede some unverified vertex $u_2$. This implies that $S$ is straddled by the unverified vertices (bottom).}
        \label{fig:detect-straddle}
    \end{figure}

    Next, we prove the second statement. Suppose the unverified set $U$ is straddled by a bridge.  First, we may assume that $|U| = k$ as otherwise $U = \{x\}$, which is caught by line \ref{alg:local:x-in-U}. Suppose that a bridge $\Gamma$ of $\Pi$ straddles $U$ with attachment set $A$. If any vertex in $A$ has been fully explored by the algorithm, then $\Gamma$ will be detected by the BFS and $U$ will be pruned by line \ref{alg:local:prune}. Otherwise, we have $U \preceq A$, which contradicts our assumption that $\Gamma$ straddles $U$.
\end{proof}

\begin{proof}[Proof of \Cref{lem:local:validity}]
    \Cref{lem:local:no-straddle-is-shredder} shows that the sets in $\L$ which are not straddled by a bridge of $\Pi$ are $k$-shredders. \Cref{lem:local:prune-straddled} shows that sets in $\L$ (along with the unverified set) which \emph{are} straddled are removed from $\L$. This implies that at the end of \Cref{alg:local}, every set in $\L$ is a $k$-shredder.
\end{proof}

Finally, we can combine \Cref{lem:local:identify,lem:local:validity,lem:local:time} to prove \Cref{lem:local}.

\begin{proof}[Proof of \Cref{lem:local}]
    Let $S$ be a $k$-shredder captured by $(x, \nu, \Pi)$. \Cref{lem:local:identify} implies that $S \in \L$ or $S = U$ by the end of the algorithm. Furthermore, every set in $\L$ by the end of the algorithm is a $k$-shredder by \Cref{lem:local:validity}. Finally, \Cref{lem:local:time} proves the time complexity of the algorithm.
\end{proof}

\subsection{Additional Lemmas}
We provide two more lemmas that will be useful in the later sections. We defer their proofs to the appendix, as they are morally repeating the arguments used above.

\begin{restatable}{lem}{lemunverifiedset} \label{lem:unverified-set}
    Suppose that \Cref{alg:local} on input $(x, \nu, \Pi)$ returns an unverified set $U$ such that $U \neq \varnothing$. Then, $U$ is a $k$-separator. Specifically, let $z$ be the far-most endpoint of an arbitrary path $\pi \in \Pi$. Then, $x$ and $z$ are not connected in $G \sm U$.
\end{restatable}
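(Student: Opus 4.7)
The plan is to mimic the argument in \Cref{lem:local:no-straddle-is-shredder} with $U$ playing the role of the candidate $k$-shredder. First I would extract from the algorithm exactly what it means for a nonempty $U$ to be returned: lines~\ref{alg:local:x-in-U} and~\ref{alg:local:U-prune}, together with the structure of the main \textbf{for} loop (which appends exactly one vertex to $U$ per path of $\Pi$), force $|U|=k$, $x\notin U$, $U\cap F=\varnothing$, and, by the second half of \Cref{lem:local:prune-straddled}, no bridge of $\Pi$ straddles $U$. In particular, for every $\pi\in\Pi$ with far-most endpoint $z_\pi$, the vertex $\pi(U)\in U$ lies strictly between $x$ and $z_\pi$ on $\pi$, i.e.\ $\delta_\pi(x)<\delta_\pi(\pi(U))<\delta_\pi(z_\pi)$.

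Assume for contradiction that there is a simple path $P$ from $x$ to some far-most endpoint $z$ of a path $\pi\in\Pi$ lying entirely in $G\sm U$. Following the recipe of \Cref{lem:local:no-straddle-is-shredder}, I would let $u$ be the last vertex on $P$ for which some $\pi_1\in\Pi$ contains $u$ with $\delta_{\pi_1}(u)<\delta_{\pi_1}(\pi_1(U))$; this is well-defined because $x$ itself qualifies. Similarly I would let $v$ be the first vertex on $P$ strictly after $u$ for which some $\pi_2\in\Pi$ contains $v$ with $\delta_{\pi_2}(v)>\delta_{\pi_2}(\pi_2(U))$; this is well-defined because $z$ qualifies.

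By the extremal choice of $u$ and $v$, any interior vertex $w$ of the open subpath $P(u\e v)$ that lies on some $\pi_3\in\Pi$ must satisfy $\delta_{\pi_3}(w)=\delta_{\pi_3}(\pi_3(U))$, forcing $w=\pi_3(U)\in U$ and contradicting $P\subseteq G\sm U$. Consequently, either $(u,v)$ is an edge not in $\Pi$ or the interior vertices of $P(u\e v)$ form a component of $G\sm\Pi$; in either case they constitute a bridge $\Gamma$ of $\Pi$ whose attachment set contains both $u$ and $v$. By construction $\Gamma$ straddles $U$, contradicting that $U$ survived the pruning step on line~\ref{alg:local:prune}.

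The main obstacle is the boundary bookkeeping: I need the witnesses $u$ and $v$ to live strictly on either side of $U$ along their respective paths, which is exactly where $x\notin U$ (so $x$ strictly precedes $\pi(U)$ on every $\pi$) and $U\cap F=\varnothing$ (so $z$ strictly succeeds $\pi(U)$ on $\pi$) are used. Everything else reduces almost verbatim to the proof of \Cref{lem:local:no-straddle-is-shredder}, with ``no bridge straddles the candidate $S$'' replaced by ``no bridge straddles $U$'' supplied by \Cref{lem:local:prune-straddled}.
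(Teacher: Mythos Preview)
Your proposal is correct and follows essentially the same approach as the paper: extract from lines~\ref{alg:local:x-in-U} and~\ref{alg:local:U-prune} that $|U|=k$, $x\notin U$, and $U\cap F=\varnothing$, then repeat the argument of \Cref{lem:local:no-straddle-is-shredder} verbatim with $U$ in place of $S$, deriving a straddling bridge that contradicts the second half of \Cref{lem:local:prune-straddled}. The paper's proof is identical in structure and in the lemmas it invokes.
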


\begin{restatable}{lem}{lemtrackvolume} \label{lem:track-local-volume}
    Suppose that \Cref{alg:local} returns a nonempty unverified set $U$ on input $(x, \nu, \Pi)$. Let $Q_x$ denote the component of $G \sm U$ containing $x$. There exists a modified version of \Cref{alg:local} that also computes $\vol(Q_x)$. The modification requires $\O(k\nu)$ additional time, subsumed by the runtime of \Cref{alg:local}.
\end{restatable}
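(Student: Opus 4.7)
The plan is to give a structural decomposition of $Q_x$ that exactly matches the local view already maintained by \Cref{alg:local}, and then augment the algorithm with lightweight bookkeeping to add up the volume.

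First, I would prove $Q_x = P \cup V(\B_P)$, where for each $\pi \in \Pi$, $P_\pi$ denotes the set of vertices of $\pi$ strictly preceding its unverified vertex $u_\pi$, $P := \bigcup_\pi P_\pi$, and $\B_P$ is the collection of bridges of $\Pi$ with at least one attachment in $P$. The inclusion ``$\supseteq$'' is immediate: one walks from $x$ along each $\pi$ to any vertex of $P_\pi$ and from a $P$-attachment into a bridge, all inside $G \sm U$. For ``$\subseteq$'', \Cref{lem:unverified-set} is the key lever: any path vertex strictly succeeding $u_\pi$ on $\pi$, if it lay in $Q_x$, would be connected along $\pi$ to the far-most endpoint $z$ of $\pi$ without crossing $U$, contradicting the lemma; analogously a bridge $\Gamma \subseteq Q_x$ cannot have a post-unverified attachment without forcing $z \in Q_x$, and an attachment in $U$ alone cannot reach $Q_x$ in $G \sm U$, so $\Gamma \in \B_P$. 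Since path edges and bridge edges are disjoint by the definitions of \Cref{sec:review}, this decomposition yields
\[
    \vol(Q_x) = \sum_{\pi \in \Pi} |P_\pi| + \vol(\B_P)
\]
with no double counting.

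Second, I would show that every $\Gamma \in \B_P$ is fully explored at least once during \Cref{alg:local}: pick any attachment $a \in P_\pi \cap N(\Gamma)$; the BFS initiated at $a$ cannot terminate early (otherwise $a$ would itself be marked as $u_\pi$, contradicting $a \in P_\pi$), so all edges and attachments of $\Gamma$ are revealed in that call. This lets us accumulate $\vol(\Gamma)$ exactly once, in the call where $\Gamma$ is first completed. The modification to \Cref{alg:local} then tracks, in addition to what the algorithm already does: (i) the final walked index $|P_\pi|$ for each path, which is simply the position of $u_\pi$ on $\pi$; (ii) a global hash map keyed by a canonical bridge representative (e.g.\ the first non-$\Pi$ vertex of the bridge visited by the BFS), recording each fully explored bridge with its accumulated $\vol(\Gamma)$ and its attachment list. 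After the main loop, iterate over this map, test in $\O(1)$ per attachment whether it lies in $P$ using the stored $u_\pi$ indices, and sum $\vol(\Gamma)$ over bridges having at least one $P$-attachment. The final output is $\sum_\pi |P_\pi|$ plus this sum, which equals $\vol(Q_x)$ by the decomposition above.

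The main subtlety, and the place I expect to spend the most care, is handling the per-path resets: a bridge can be rediscovered across path iterations, and bridges attached only to $U$-vertices---partially explored from $u_\pi$ but not meeting $Q_x$---must be excluded from the volume sum. The global hash map ensures each bridge contributes exactly once, and the $P$-attachment filter prunes bridges outside $\B_P$. All added work is $\O(1)$ per explored bridge edge and per recorded attachment, both summing to $\O(k\nu)$ by the local BFS budget of $\nu$ edges per path over $k$ paths, plus $\O(k)$ for reading path indices---hence $\O(k\nu)$ additional time, subsumed by the $\O(k^2 \nu \log \nu)$ runtime of \Cref{alg:local}.
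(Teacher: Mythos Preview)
Your proposal is correct. The paper's own proof is much terser: it simply observes that after the main loop determines $U$, one can re-walk each path $\pi$ from $x$ up to (but not including) $u_\pi$ and explore all bridges attached along the way; since the original loop did not terminate early before $u_\pi$, this second pass touches at most $\nu$ edges per path, so $\O(k\nu)$ total, and a single fresh BFS over this region with global visited marks directly yields $\vol(Q_x)$.

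Your structural decomposition $Q_x = P \cup V(\B_P)$ and the volume identity $\vol(Q_x) = \sum_\pi |P_\pi| + \vol(\B_P)$ make explicit what the paper leaves implicit, and your two-inclusion argument via \Cref{lem:unverified-set} is sound and is really the substantive content here. Where you diverge is purely in implementation: you thread the bookkeeping through the main loop using a global bridge hash map and a post-hoc $P$-attachment filter, whereas the paper just runs a second pass. The paper's second pass is simpler---no deduplication across path iterations, no canonical keys---while your approach avoids re-exploring edges. One nitpick: your proposed hash key, ``the first non-$\Pi$ vertex of the bridge,'' fails for edge bridges, which have no non-$\Pi$ vertex; use the sorted endpoint pair for those instead.
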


\section{Resolving Unverified Sets} \label{sec:unverified}
We now address the case where \Cref{alg:local} returns an unverified set $U$ on input $(x, \nu, \Pi)$. The difference between $U$ and the list of returned $k$-shredders $\L$ is that the algorithm did not find a component $Q$ of $G \sm \Pi$ such that $N(Q) = U$. In some sense, this means that we ``lose'' a component of $G \sm U$. The danger in guessing that $U$ is a $k$-shredder is that $U$ might only be a $k$-separator. Imagine that $G \sm U$ has exactly two components: $Q_1$ and $Q_2$. If $\vol(Q_2) \gg \vol(Q_1)$, then it becomes quite challenging to determine whether there exists a third component of $G \sm U$ in $\O(\vol(Q_1))$ time. To cope with this uncertainty, we make another structural classification.

\begin{definition}[Low-Degree, High-Degree] \label{def:low-high-deg}
    Let $S$ be a $k$-shredder with partition $(C, \R)$. Let $\nu$ be the unique power of two satisfying $\frac{1}{2} \nu < \vol(\R) \leq \nu$. We say that $S$ has \textit{low-degree} if there exists a vertex $s \in S$ such that $\deg(s) \leq \nu$. Otherwise, we say $S$ has \textit{high-degree.}
\end{definition}

One aspect of this definition may seem strange: we have specifically described $\nu$ as a power of two. This is because in the later sections, we will use geometric sampling to capture $k$-shredders. The sampling parameters we use will be powers of two. Hence, we have imposed this slightly arbitrary detail on \Cref{def:low-high-deg}. For now, all that matters is that $\frac{1}{2} \nu < \vol(\R) \leq \nu$.

\paragraph{Organization.}
There are two main lemmas for this section. \Cref{lem:low-deg-aux} handles low-degree unverified sets and is presented in \Cref{sec:low-deg}. \Cref{lem:high-deg} handles high-degree unverified sets and is presented in \Cref{sec:high-deg}. The idea is that after \Cref{alg:local} returns an unverified set $U$, we will use \Cref{lem:low-deg-aux} to test whether $U$ is a low-degree $k$-shredder. If not, we will leave keep $U$ as a potential high-degree $k$-shredder. After all unverified sets have been returned, we use \Cref{alg:high-deg} to extract all high-degree $k$-shredders from the remaining unverified sets.

\subsection{Low-Degree} \label{sec:low-deg}
Suppose that $U$ is a low-degree $k$-shredder with partition $(C, \R)$. Suppose that the tuple $(x, \nu, \Pi)$ captures $U$ and \Cref{alg:local} reports $U$ as an unverified set. Our goal is to design an algorithm that confirms $U$ is a $k$-shredder in the same time complexity as \Cref{alg:local} up to $\polylog(n)$ and $\poly(k)$ factors.

\Cref{lem:unverified-set} gives us two vertices in different components of $G \sm U$: $x$ and $z$ (where $z$ is the far-most endpoint of any path in $\Pi$). Since $U$ is a $k$-shredder, there must exist a third component of $G \sm U$ and every vertex in $U$ must be adjacent to a vertex in this third component. Because $U$ is low-degree, there must exist a vertex $u \in U$ with $\deg(u) \leq \nu$. The idea is to scan through the edges adjacent to this low-degree vertex $u$ and find an edge $(u, y)$ such that $y$ is neither connected to $x$ nor $z$ in $G \sm U$. To make the scanning procedure viable, we need to efficiently answer pairwise connectivity queries in $G \sm U$. The key ingredient is a result obtained by Kosinas in \cite{Kos23}. Specifically, the following holds.

\begin{thm}[Connectivity Oracle under Vertex Failures] \label{thm:pairwise-conn-oracle}
    There exists a deterministic data structure for an undirected graph $G$ on $n$ vertices and $m$ edges with $\O(km \log n)$ preprocessing time that supports the following operations.
    \begin{enumerate}
        \item Given a set $F$ of $k$ vertices, perform a data structure update in time $\O(k^4 \log n)$.
        \item Given a pair of vertices $(x, y)$ return $\TRU$ if $x$ is connected to $y$ in $G \sm F$ in time $\O(k)$.
    \end{enumerate}
\end{thm}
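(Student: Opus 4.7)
This theorem is cited from Kosinas \cite{Kos23}, so any ``proof'' on my part really amounts to recalling the type of construction one would expect. Without having read that paper, a natural plan I would pursue follows the Duan--Pettie paradigm for fault-tolerant connectivity oracles, combined with a sparsifier that preserves vertex connectivity up to $k$ failures.

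The plan has three stages. First, at preprocessing time I would compute a sparse subgraph $H \subseteq G$ with $O(kn)$ edges such that the connected components of $G \sm F$ coincide with those of $H \sm F$ for every $F$ with $|F| \leq k$ (a Nagamochi--Ibaraki-style certificate suffices). I would then root a DFS tree $T$ of $H$ and build auxiliary structures on $T$: LCA queries, a heavy-path decomposition, and, for each vertex, a sorted list of its heavy-path position. The total preprocessing is $O(km \log n)$, dominated by the certificate and the tree augmentations.

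Second, for an update with a failure set $F$ of size $k$: removing $F$ shatters $T$ into fragments. Using the heavy-path decomposition, $F$ intersects $O(k)$ heavy paths and splits each into at most $k+1$ intervals, yielding only $O(k^2)$ tree fragments overall. I would assign each fragment an id and then, for every non-tree edge of $H$ whose endpoints belong to different fragments, perform a union in a union--find structure over the fragment ids. The delicate accounting here --- showing that only $O(k^3)$ non-tree edges are ``critical'' and that each can be located and classified in $O(k \log n)$ time via path/LCA queries on $T$ --- is what gives the claimed $O(k^4 \log n)$ bound.

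Third, for a query $(x, y)$ I would, in $O(k)$ time, locate the fragment containing each endpoint by walking up its heavy path to find the lowest ancestor inside $F$ (if any), and then report whether the two fragment ids lie in the same union--find class. The main obstacle, and the reason this stands as a separate contribution in \cite{Kos23}, is the update step: one must argue carefully which non-tree edges can merge fragments and represent fragment boundaries so that they can be produced and indexed in polynomially bounded time in $k$. I would expect the actual construction in \cite{Kos23} to differ in technical details but to follow this broad outline.
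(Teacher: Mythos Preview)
The paper does not prove this theorem; it is imported wholesale from \cite{Kos23} and used as a black box, exactly as you acknowledge in your first sentence. So there is no ``paper's own proof'' to compare against, and your recognition that any write-up here is a reconstruction is appropriate.

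That said, the paper does reveal some structural details of Kosinas's oracle later in \Cref{sec:most-shattering} (see \Cref{fact:back-edges}, \Cref{lem:T-conn}, and \Cref{lem:internal-comps}), and these differ from your sketch. The actual construction is built on a single rooted DFS tree $T$ and exploits the classical fact that every non-tree edge is a back edge. Removing $F$ from $T$ yields at most $k$ \emph{internal components} (subtrees whose root has a descendant in $F$) together with hanging subtrees; two distinct hanging subtrees can never be joined by a back edge, so connectivity in $G \sm F$ reduces to connectivity among the $O(k)$ internal components. This is what drives the $\poly(k)$ update and the $O(k)$ query. Your outline instead uses heavy-path decomposition and counts $O(k^2)$ fragments in the Duan--Pettie style; that is a plausible route to a fault-tolerant oracle, but it is not the mechanism \cite{Kos23} uses, and the bounds you state ($O(k^3)$ critical edges, $O(k \log n)$ per edge) are guesses rather than consequences of a stated lemma. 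If you want your sketch to reflect the cited result, replace the heavy-path machinery with the DFS/back-edge/internal-component structure above.
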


We defer the method of capturing $k$-shredders for later. For now, let us assume that the tuple $(x, \nu, \Pi)$ captures a $k$-shredder $U$ and $U$ is reported as an unverified set by \Cref{alg:local}. We show an auxiliary algorithm that will be helpful in listing low-degree $k$-shredders.

\begin{lem} \label{lem:low-deg-aux}
    After preprocessing the input graph using the data structure from \Cref{thm:pairwise-conn-oracle}, there exists a deterministic algorithm that takes in as input $(x, \nu, \Pi, U)$, where $U$ is an unverified set returned by \Cref{alg:local} on input $(x, \nu, \Pi)$. The algorithm outputs $\TRU$ if $U$ is a $k$-shredder and there exists a vertex $u \in U$ such that $\deg(u) \leq \nu$ in $\O(k^4 \log n + k\nu)$ time.
\end{lem}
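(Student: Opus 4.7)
The plan is to use the pairwise connectivity oracle from \Cref{thm:pairwise-conn-oracle} with the failure set $F = U$. By \Cref{lem:unverified-set}, we already know two vertices in distinct components of $G \sm U$: the input vertex $x$ and $z$, the far-most endpoint of any path $\pi \in \Pi$. To confirm that $U$ is a shredder, it suffices to exhibit a vertex that is connected to neither $x$ nor $z$ in $G \sm U$, giving a third component. Searching for such a vertex globally is too expensive, but a low-degree cut vertex gives us a very short candidate list.

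Concretely, I would proceed as follows. First, perform an oracle update with $F = U$; this costs $\O(k^4 \log n)$ by \Cref{thm:pairwise-conn-oracle}. Second, scan the (at most $k$) vertices of $U$ to find some $u \in U$ with $\deg(u) \leq \nu$; if no such vertex exists return $\FLS$. Third, iterate over the neighbors $y$ of $u$, querying the oracle for whether $y$ is connected to $x$ in $G \sm U$ and whether $y$ is connected to $z$ in $G \sm U$. If some $y$ is connected to neither, return $\TRU$; otherwise, after processing all $\deg(u) \leq \nu$ neighbors, return $\FLS$.

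For correctness, one direction is immediate: if the scan finds a neighbor $y$ of $u$ disconnected from both $x$ and $z$ in $G \sm U$, then $G \sm U$ has at least three components (the ones containing $x$, $z$, and $y$), so $U$ is a $k$-shredder. For the other direction, suppose $U$ is a $k$-shredder and $u \in U$ has $\deg(u) \leq \nu$. Because $G$ is $k$-vertex-connected and $|U| = k$, every vertex of $U$ must have at least one neighbor in every component of $G \sm U$; otherwise $U \sm \{u\}$ would still separate that component, contradicting the vertex connectivity being $k$. Hence $u$ has a neighbor $y$ in the third component of $G \sm U$, and this $y$ will be found and confirmed by the scan.

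The running time is $\O(k^4 \log n)$ for the single oracle update, plus $\O(k)$ to locate the low-degree vertex $u$, plus $2 \deg(u) \leq 2\nu$ connectivity queries, each costing $\O(k)$, for a total of $\O(k^4 \log n + k\nu)$ as required. The only nontrivial ingredient is the structural fact that each vertex of a minimum vertex cut must touch every component, which is the key point that lets us restrict the search to a single low-degree vertex's neighborhood; beyond that the lemma is essentially a black-box application of the connectivity oracle.
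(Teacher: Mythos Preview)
Your proposal is correct and follows essentially the same approach as the paper: update the oracle with failure set $U$, locate a low-degree vertex $u\in U$, and scan its neighbors for one disconnected from both $x$ and the far-most endpoint $z$ in $G\sm U$. The paper's argument is the same, with the minor cosmetic difference that it explicitly skips neighbors $y\in U$ during the scan; your added justification that every vertex of a minimum cut must neighbor every component is the same structural fact the paper uses implicitly.
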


\begin{remark} \label{rem:LS-oracle}
    In \cite{LS22}, Long and Saranurak showed the existence of a pairwise connectivity oracle subject to vertex failures with $\O(k^2n^{o(1)})$ update time and $\O(k)$ query time. Our usage of Kosinas's oracle involves making one update operation with a set of size $k$ and $\O(\nu)$ many connectivity queries. Using Long and Saranurak's version, we can improve the dependency on $k$ in the time complexity of \Cref{alg:low-deg} to $\O(k^2n^{o(1)} + k\nu)$ time. 
\end{remark}

\paragraph{Algorithm Outline.}
As mentioned above, the idea is to scan through the edges adjacent to $u$ in order to find a vertex $y$ such that $y$ is neither connected to $x$ nor $z$ in $G \sm U$ (see \Cref{fig:low-deg}). We will be utilizing \Cref{thm:pairwise-conn-oracle} to determine whether a pair of vertices $(x, y)$ are connected in $G \sm U$. Pseudocode is given in \Cref{alg:low-deg}.

\begin{figure}[!ht]
    \centering
    \includegraphics[scale=0.3]{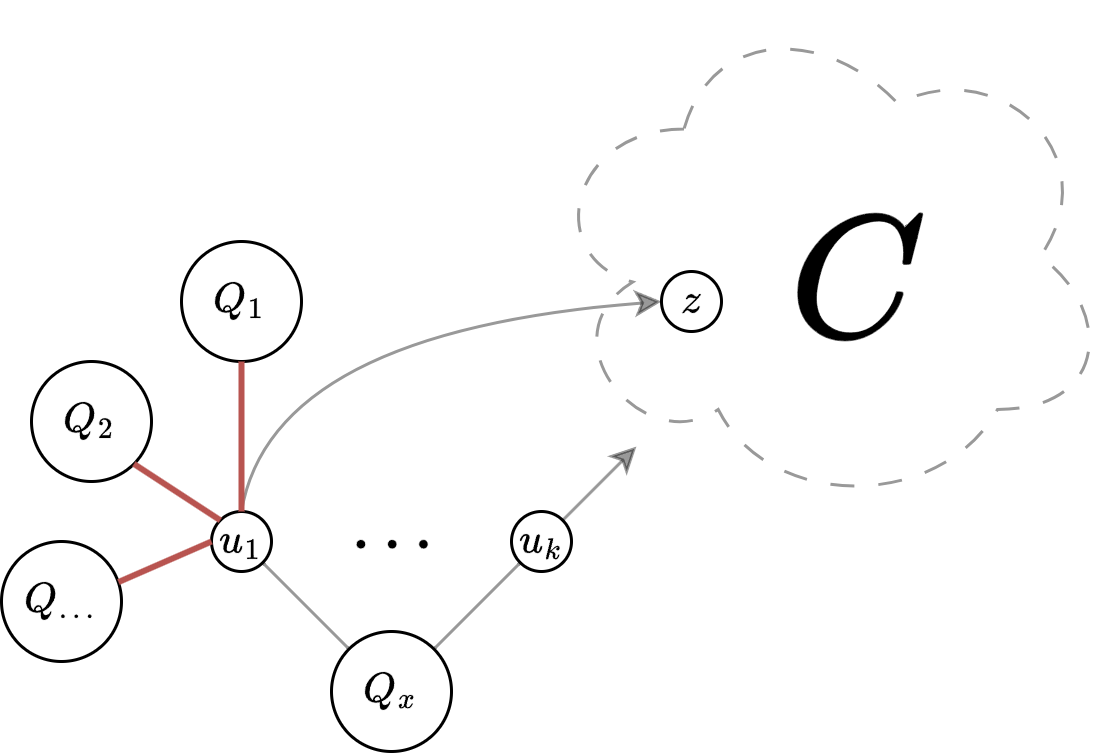}
    \caption{Here the set $U = \{u_1, \ldots, u_k\}$ was reported as an unverified set. Suppose there exists a vertex (without loss of generality) $u_1$ such that $\deg(u_1) \leq \nu$. Then, we can simply scan through the neighbors of $u_1$ in $\O(\nu)$ iterations to obtain an edge incident to a component of $G \sm U$ that is not $Q_x$ nor $C$.}
    \label{fig:low-deg}
\end{figure}

\begin{algorithm}[!ht]
\caption{Auxiliary algorithm for low-degree $k$-shredders}
\label{alg:low-deg}
\begin{algorithmic}[1]
    \Require{$x$ - a sample vertex \\
        $\nu$ - volume parameter \\
        $\Pi$ - a set of $k$ openly-disjoint paths starting from $x$ \\
        $U$ - unverified set returned by \Cref{alg:local} on $(x, \nu, \Pi)$
    }
    \Ensure{$\TRU$ if (1) and (2) are satisfied, $\FLS$ otherwise \\
        \phantom{ZZ} (1) $U$ is a $k$-shredder \\
        \phantom{ZZ} (2) there exists a vertex $u \in U$ such that $\deg(u) \leq \nu$
    }
    \If {all vertices in $U$ have degree greater than $\nu$} \label{alg:low-deg:all-high-deg}
        \State \textbf{return} $\FLS$
    \EndIf
    \State $u \leftarrow$ a vertex in $U$ with $\deg(u) \leq \nu$ \label{alg:low-deg:u}
    \State $\pi \leftarrow$ the path in $\Pi$ containing $u$ \label{alg:low-deg:pi}
    \State $z \leftarrow$ the far-most endpoint of $\pi$ \label{alg:low-deg:z}

    \item[]
    \State $\Phi \leftarrow$ the pairwise connectivity oracle from \Cref{thm:pairwise-conn-oracle} (already preprocessed)
    \State update $\Phi$ on vertex failure set $U$ \label{alg:low-deg:update-U}
    
    \For {each edge $(u, y)$ adjacent to $u$ such that $y \notin U$} \label{alg:low-deg:adj-u}
        \State $q_{x,y} \leftarrow$ connectivity query between $x$ and $y$ in $G \sm U$ \label{alg:low-deg:x-y}
        \State $q_{y,z} \leftarrow$ connectivity query between $y$ and $z$ in $G \sm U$ \label{alg:low-deg:y-z}
        
        \If {$q_{x,y}$ is $\FLS$ \textbf{and} $q_{y,z}$ is $\FLS$} \label{alg:low-deg:queries-false}
            \State \textbf{return} $\TRU$
        \EndIf
    \EndFor

    \State \textbf{return} $\FLS$
\end{algorithmic}
\end{algorithm}

\begin{lem} \label{lem:low-deg-aux:correctness}
    Let $(x, \nu, \Pi, U)$ be an input tuple to \Cref{alg:low-deg}. Then, the algorithm returns $\TRU$ if and only if $U$ is a $k$-shredder and there exists a vertex $u \in U$ such that $\deg(u) \leq \nu$.
\end{lem}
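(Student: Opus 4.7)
The plan is to prove the two directions of the ``if and only if'' separately, leaning on two facts already established in the paper: (i) Lemma~\ref{lem:unverified-set}, which guarantees that $x$ and the far-most endpoint $z$ of the path $\pi$ chosen on lines~\ref{alg:low-deg:pi}--\ref{alg:low-deg:z} lie in different components of $G\setminus U$; and (ii) the observation that $|U|=k$, since $U$ is a $k$-separator by Lemma~\ref{lem:unverified-set} and $G$ has vertex connectivity $k$.

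For the ``only if'' direction, suppose the algorithm returns $\TRU$. Then the check on line~\ref{alg:low-deg:all-high-deg} failed, so the vertex $u$ picked on line~\ref{alg:low-deg:u} already witnesses condition (2). Also, the only way $\TRU$ is returned is from inside the loop at line~\ref{alg:low-deg:adj-u}, which requires some neighbor $y\notin U$ of $u$ with both $q_{x,y}$ and $q_{y,z}$ false. Together with Lemma~\ref{lem:unverified-set}, this gives three distinct components of $G\setminus U$: the one containing $x$, the one containing $z$, and the one containing $y$. Combined with $|U|=k$, this shows $U$ is a $k$-shredder.

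For the ``if'' direction, assume $U$ is a $k$-shredder and some vertex of $U$ has degree at most $\nu$. Then line~\ref{alg:low-deg:all-high-deg} does not fire, and the algorithm picks some low-degree $u\in U$ (possibly different from the witness). The main obstacle I would spell out carefully is the claim that $u$ is adjacent to every component of $G\setminus U$. This follows from minimality: if $u$ were not adjacent to some component $Q$ of $G\setminus U$, then $U\setminus\{u\}$ would still separate $Q$ from the rest of $G\setminus U$, producing a separator of size $k-1$ and contradicting the $k$-connectivity of $G$. Since $U$ is a $k$-shredder, $G\setminus U$ has at least three components, and by Lemma~\ref{lem:unverified-set} two of them contain $x$ and $z$. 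Hence there is a third component $Q^\star$ avoiding both, and $u$ has some neighbor $y\in Q^\star$. When the loop at line~\ref{alg:low-deg:adj-u} inspects the edge $(u,y)$, the queries on lines~\ref{alg:low-deg:x-y}--\ref{alg:low-deg:y-z} both return $\FLS$ (correctly, by \Cref{thm:pairwise-conn-oracle} after the update on line~\ref{alg:low-deg:update-U}), so the algorithm returns $\TRU$, completing the proof.
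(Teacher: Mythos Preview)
Your proof is correct and follows essentially the same approach as the paper: both directions hinge on Lemma~\ref{lem:unverified-set} to place $x$ and $z$ in distinct components of $G\setminus U$, and then use a low-degree vertex of $U$ to locate a neighbor in a third component. Your write-up is in fact slightly more careful than the paper's in one place: you spell out the minimality argument showing that every vertex of $U$ is adjacent to every component of $G\setminus U$, whereas the paper simply asserts the existence of such a neighbor $y$; you also prove the ``only if'' direction directly rather than via the contrapositive, but this is a cosmetic difference.
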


\begin{proof}
    Suppose that $U$ is a $k$-shredder and that there exists a vertex $u \in U$ such that $\deg(u) \leq \nu$. Let $u$ be the vertex found by line \ref{alg:low-deg:u}. Let $\pi$ be the unique path in $\Pi$ that contains $u$ and let $z$ be the far-most endpoint of $\pi$. These are found by lines \ref{alg:low-deg:pi}-\ref{alg:low-deg:z}. By \Cref{lem:unverified-set}, we have at least two components of $G \sm U$ so far: $Q_x$ (the component containing $x$) and $Q_z$ (the component containing $z$). Since $U$ is a $k$-shredder, there exists a third component $Q_y$ of $G \sm U$. Let $y$ be a vertex in $Q_y$ adjacent to $u$. Note that $x, y, z$ are each in distinct components of $G \sm U$, which implies they are pairwise disconnected in $G \sm U$. We will find $(u, y)$ by scanning through the edges adjacent to $u$. Prior to calling the algorithm, we will initialize a global pairwise connectivity oracle subject to vertex failures as in \Cref{thm:pairwise-conn-oracle}. We update the oracle with respect to $U$ on line \ref{alg:low-deg:update-U} so we can answer pairwise connectivity queries in $G \sm U$. The rest is straightforward. The \textbf{for} loop on line \ref{alg:low-deg:adj-u} iterates through all edges adjacent to $u$. Hence, the edge $(u, y)$ described above will be found. For each edge $(u, y)$, we perform the two pairwise connectivity queries in $G \sm U$ by lines \ref{alg:low-deg:x-y}-\ref{alg:low-deg:y-z}. Lastly, we will discover that $x, y, z$ are all pairwise disconnected in $G \sm U$ and confirm that $U$ is a $k$-shredder by line \ref{alg:low-deg:queries-false}.
    
    Now suppose that $U$ is not a $k$-shredder or there does not exist a vertex $u \in U$ with $\deg(u) \leq \nu$. If $U$ is not a $k$-shredder, then there are exactly two components in $G \sm U$: the component containing $x$, and the component containing $z$ (the far-most endpoint of any path in $\Pi$). Therefore, for every edge $(u, y)$ we iterate through such that $u \in U, y \notin U$, either $x$ and $y$ are connected in $G \sm U$ or $y$ and $z$ are connected in $G \sm U$. Line \ref{alg:low-deg:queries-false} guarantees that we will not return $\TRU$. If there does not exist a vertex $u \in U$ with $\deg(u) \leq \nu$, we can return $\FLS$ immediately. This is done by line \ref{alg:low-deg:all-high-deg}.
\end{proof}

\begin{lem} \label{lem:low-deg-aux:time}
    \Cref{alg:low-deg} on input $(x, \nu, \Pi, U)$ runs in $\O(k^4 \log n + k\nu)$ time.
\end{lem}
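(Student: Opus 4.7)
The plan is to walk through Algorithm~\ref{alg:low-deg} line by line and charge each step to the budget $\O(k^4 \log n + k\nu)$, relying on Theorem~\ref{thm:pairwise-conn-oracle} for the oracle costs.

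First, I would observe that $|U| = k$, so checking whether every vertex of $U$ has degree exceeding $\nu$ on line~\ref{alg:low-deg:all-high-deg} takes $\O(k)$ time assuming constant-time access to vertex degrees. Identifying a specific low-degree witness $u$ on line~\ref{alg:low-deg:u}, recovering the unique path $\pi \in \Pi$ containing $u$ on line~\ref{alg:low-deg:pi}, and locating the far-most endpoint $z$ on line~\ref{alg:low-deg:z} all fit within $\O(k)$ time, since $|\Pi| = k$ and each path has an identifiable endpoint that can be precomputed when $\Pi$ is constructed.

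Next, I would charge the oracle update on line~\ref{alg:low-deg:update-U}. The $\O(km\log n)$ preprocessing of $\Phi$ is done once globally and is not counted in this per-call bound (this is the one subtlety worth stating explicitly). By Theorem~\ref{thm:pairwise-conn-oracle}, updating $\Phi$ with respect to the vertex failure set $U$ of size $k$ costs $\O(k^4 \log n)$ time.

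Finally, I would bound the \textbf{for} loop on line~\ref{alg:low-deg:adj-u}, which iterates over at most $\deg(u) \leq \nu$ edges incident to $u$ (this is why line~\ref{alg:low-deg:all-high-deg} safely returns $\FLS$ in the degenerate case). Each iteration performs two pairwise connectivity queries in $G \sm U$, each costing $\O(k)$ time by Theorem~\ref{thm:pairwise-conn-oracle}, for a total of $\O(k\nu)$. Summing the three contributions gives $\O(k) + \O(k^4 \log n) + \O(k\nu) = \O(k^4 \log n + k\nu)$, as claimed. There is no substantive obstacle here; the proof is a direct accounting, with the only care needed being to separate the global preprocessing cost from the per-call update and query costs of the oracle.
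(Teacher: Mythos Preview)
Your proof is correct and mirrors the paper's own argument essentially line for line: both account for the $\O(k)$ degree check, the $\O(k^4\log n)$ oracle update via \Cref{thm:pairwise-conn-oracle}, and the $\O(k\nu)$ cost of at most $\nu$ iterations each making two $\O(k)$-time connectivity queries. Your explicit remark separating the global preprocessing cost from the per-call oracle costs is a useful clarification that the paper handles only implicitly via the statement of \Cref{lem:low-deg-aux}.
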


\begin{proof} 
    Line \ref{alg:low-deg:all-high-deg} can be implemented in $\O(k)$ time by iterating over all vertices in $U$ and checking the sizes of their adjacency lists. Lines \ref{alg:low-deg:u}-\ref{alg:low-deg:z} can be done in constant time. The update on line \ref{alg:low-deg:update-U} takes $\O(k^4 \log n)$ time according to \Cref{thm:pairwise-conn-oracle}. Iterating over the edges adjacent to $u$ takes $\O(\nu)$ time because $\deg(u) \leq \nu$. For each edge, we make two pairwise connectivity queries, each of which takes $\O(k)$ time according to \Cref{thm:pairwise-conn-oracle}. This gives the total time bound of $\O(k) + \O(k^4 \log n) + \O(k\nu) = \O(k^4 \log n + k \nu)$.
\end{proof}

\begin{proof}[Proof of \Cref{lem:low-deg-aux}]
    The correctness and time complexity of \Cref{alg:low-deg} are given by \Cref{lem:low-deg-aux:correctness} and \Cref{lem:low-deg-aux:time}.
\end{proof}

\subsection{High-Degree} \label{sec:high-deg}
It is quite difficult to determine locally whether $U$ is a high-degree $k$-shredder. We can no longer hope for a low-degree vertex $u \in U$ and scan through its edges to find a third component of $G \sm U$. To tackle high-degree $k$-shredders reported as unverified sets, our idea is to ignore them as they are reported and filter them later using one subroutine. Let $U$ be a high-degree $k$-shredder with partition $(C, \R)$. Let $\nu$ be the power of two satisfying $\frac{1}{2} \nu < \vol(\R) \leq \nu$. Every vertex $x \in \R$ has $\deg(x) \leq \nu$ because $\vol(\R) \leq \nu$. Furthermore, every vertex $u \in U$ has $\deg(u) > \nu$ because $U$ is high-degree. We can exploit this structure by noticing that $U$ forms a ``wall'' of high-degree vertices. That is, if we obtain a vertex $x \in \R$, we can use BFS seeded at $x$ to explore a small area of vertices with degree at most $\nu$. The high degree vertices of $U$ would prevent the graph traversal from escaping the component of $G \sm U$ containing $x$. At the end of the traversal, the set of explored vertices would compose a component $Q$ of $G \sm U$ and we can report that $N(Q) = U$ might be a high-degree $k$-shredder. To obtain a vertex $x \in \R$, we will use a classic hitting set lemma.

\begin{restatable}[Hitting Set Lemma]{lem}{lemhittingset} \label{lem:hitting-set}
    Let $Q$ be a set of vertices. Let $\nu$ be a positive integer satisfying $\frac{1}{2}\nu < \vol(Q) \leq \nu$. If we independently sample $400 \frac{m}{\nu} \log n$ edges $(u, v)$ uniformly at random, we will obtain an edge $(u, v)$ such that $u \in Q$ with probability $1 - n^{-100}$.
\end{restatable}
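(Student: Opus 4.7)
The plan is a textbook hitting-set / Chernoff-flavored computation. The quantity $\vol(Q)$ counts edges with at least one endpoint in $Q$, so a single edge sampled uniformly at random from $E$ lands in this set with probability
\[
p \;=\; \frac{\vol(Q)}{m} \;>\; \frac{\nu}{2m},
\]
using the lower bound $\vol(Q) > \nu/2$ from the hypothesis. This is the one structural input the proof requires; everything after it is just concentration.

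Next I would use independence of the $N = 400 \frac{m}{\nu}\log n$ samples to bound the probability of \emph{missing} $Q$ entirely. The probability that no sampled edge is incident to $Q$ is at most
\[
(1-p)^{N} \;\leq\; \left(1 - \tfrac{\nu}{2m}\right)^{N} \;\leq\; \exp\!\left(-\tfrac{N\nu}{2m}\right) \;=\; \exp(-200 \log n) \;=\; n^{-200},
\]
where I apply the standard inequality $1-x \leq e^{-x}$. Since $n^{-200} \leq n^{-100}$ for $n \geq 1$, the complementary probability is at least $1 - n^{-100}$, matching the statement.

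There is essentially no obstacle here beyond being careful about two easy points: (i) $\vol$ is defined on undirected edges with at least one endpoint in $Q$, so the hit probability in a single trial is $\vol(Q)/m$ (not $\vol(Q)/(2m)$, which would be the wrong constant and could force a larger sampling factor); and (ii) the lemma phrasing "obtain an edge $(u,v)$ such that $u \in Q$" should be read as obtaining an edge with an endpoint in $Q$, which is consistent with $\vol(Q)$ and with how the lemma is used downstream to hit components of small volume. With those two conventions pinned down, the proof is one line of probability plus the $1-x \leq e^{-x}$ estimate; I would keep it to a short paragraph in the appendix.
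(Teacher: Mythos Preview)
Your proof is correct and follows essentially the same route as the paper's: bound the single-trial miss probability and exponentiate over independent samples via $1-x \le e^{-x}$ (the paper phrases the same step as $(1-1/t)^t \le 1/e$). The one point worth flagging is your remark (i): the paper in fact \emph{does} use $\vol(Q)/(2m)$, not $\vol(Q)/m$, because it reads ``$u \in Q$'' as the event that a \emph{designated} endpoint of the sampled edge lies in $Q$---the downstream algorithms sample an edge $(x,y)$ and process only $x$. Under that reading the failure probability comes out to exactly $n^{-100}$ rather than your $n^{-200}$, and the constant $400$ is calibrated for it. Your alternative reading (``some endpoint in $Q$'') still proves the stated bound with slack, so there is no gap either way.
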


Now we have a tool to obtain a vertex in $\R$. We are not finished yet, as we need to connect this idea with the unverified sets reported by \Cref{alg:local}. Suppose that $U$ was reported by \Cref{alg:local} on input $(x, \nu, \Pi)$. We have that $x$ is in some component $Q$ of $G \sm U$. With \Cref{lem:hitting-set}, our goal is to sample a vertex $y$ in a \textit{different} component $Q_y$ of $G \sm U$. As described above, we can exploit the fact that $U$ is a high-degree $k$-shredder by exploring low-degree vertices in the neighborhood of $y$ to recover $Q_y$. At this point, we have essentially recovered two components of $G \sm U$: $Q$ and $Q_y$. To make further progress, we present a short, intuitive lemma.

\begin{lem} \label{lem:volume-checksum}
    Let $U$ be a $k$-separator. Let $Q_1, Q_2$ be two distinct components of $G \sm U$. Then $U$ is a $k$-shredder if and only if $\vol(Q_1) + \vol(Q_2) + |E(U, U)| < m$.
\end{lem}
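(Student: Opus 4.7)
The plan is to establish a single edge-counting identity that partitions the $m$ edges of $G$ across the separator $U$ and the components of $G \setminus U$, from which both directions of the equivalence fall out immediately.

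First I would enumerate all the connected components of $G \setminus U$ as $Q_1, Q_2, Q_3, \ldots, Q_t$, noting that $t \geq 2$ because $U$ is a $k$-separator, and that by the definitions in \Cref{sec:prelims} the set $U$ is a $k$-shredder if and only if $t \geq 3$. The key observation is that because the $Q_i$ are the connected components of $G \setminus U$, no edge of $G$ has one endpoint in $Q_i$ and the other in $Q_j$ for $i \neq j$. Consequently, every edge of $G$ falls into exactly one of three categories: it lies in $E(U, U)$, it lies inside a single component as an element of $E(Q_i, Q_i)$, or it crosses between $U$ and exactly one component, contributing to $E(U, Q_i)$. Using \Cref{def:volume} to write $\vol(Q_i) = |E(Q_i, Q_i)| + |E(U, Q_i)|$ for each component, this decomposition gives
\[
    m \;=\; |E(U, U)| \;+\; \sum_{i=1}^{t} \vol(Q_i).
\]

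Rearranging this identity, $\vol(Q_1) + \vol(Q_2) + |E(U, U)| = m - \sum_{i=3}^{t} \vol(Q_i)$. If $t = 2$ (so $U$ is not a $k$-shredder), the trailing sum is empty and the left-hand side equals $m$ exactly. If $t \geq 3$ (so $U$ is a $k$-shredder), then each additional component $Q_i$ is non-empty and contains some vertex $v$; since $G$ has vertex connectivity $k \geq 1$, the vertex $v$ has at least one incident edge in $G$, so $\vol(Q_i) \geq 1$ and the left-hand side is strictly less than $m$. Combining the two cases gives the claimed equivalence.

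The only real obstacle is bookkeeping: making sure each edge is counted exactly once in the identity (which hinges on the fact that the $Q_i$ are genuine components, not merely a partition) and verifying that extra components beyond $Q_1, Q_2$ necessarily contribute positive volume (which uses only that $G$ has no isolated vertex, itself immediate from $k \geq 1$).
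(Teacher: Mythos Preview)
Your proof is correct and follows essentially the same edge-partitioning idea as the paper: both arguments rest on the observation that every edge of $G$ is either in $E(U,U)$ or incident to exactly one component $Q_i$. The only difference is presentational---you package the counting into the single identity $m = |E(U,U)| + \sum_{i=1}^{t} \vol(Q_i)$ and read off both directions at once, whereas the paper argues each direction separately by exhibiting (or ruling out) an uncounted edge between $U$ and a third component.
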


\begin{proof}
    We prove the forward direction first. If $U$ is a $k$-shredder, then there exists a third component $Q_3$ of $G \sm U$. Specifically, there must exist an edge $(u, v)$ where $u \in U$ and $v \in Q_3$. Notice that $(u, v) \notin E(U, U)$ because $v \notin U$. Furthermore, $(u, v)$ is not counted in $\vol(Q_1)$ nor $\vol(Q_2)$ because $(u, v)$ is not incident to any vertex in $Q_1 \cup Q_2$. This implies that $\vol(Q_1) + \vol(Q_2) + |E(U, U)| < m$.

    For the backward direction, notice that if $\vol(Q_1) + \vol(Q_2) + |E(U, U)| = m$, we can argue that any edge $(u, v)$ falls into three categories. Either $(u, v)$ is adjacent to a vertex in $Q_1$, or $(u, v)$ is adjacent to a vertex in $Q_2$, or $u \in U$ and $v \in U$. This implies that the existence of an edge from $U$ to a third component of $G \sm U$ is impossible, which means $G \sm U$ only has two components.
\end{proof}

We are finally ready to handle high-degree $k$-shredders reported as unverified sets. We present the main result.

\begin{lem} \label{lem:high-deg}
    There exists a randomized Monte Carlo algorithm that takes as input a list $\U$ of tuples of the form $(x, \nu, \Pi, U)$, where $U$ is an unverified set returned by \Cref{alg:local} on $(x, \nu, \Pi)$. The algorithm returns a list of $k$-shredders $\L$ that satisfies the following. If $(x, \nu, \Pi, U) \in \U$ is a $k$-tuple such that $U$ is a high-degree $k$-shredder, then $U \in \L$ with probability $1 - n^{-100}$. The algorithm runs in $\O(k^2 m \log^2 n)$ time.
\end{lem}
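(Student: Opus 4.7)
The plan is to use \Cref{lem:hitting-set} to sample vertices into the small side $\R$ of any candidate high-degree $k$-shredder, recover the corresponding components by a BFS restricted to the low-degree subgraph, and verify via \Cref{lem:volume-checksum}. The key structural fact is that if $U$ is a high-degree $k$-shredder with partition $(C, \R)$ and $\vol(\R) \in (\nu/2, \nu]$, then every component $Q \in \R$ has $\vol(Q) \le \nu$, every vertex of $Q$ has degree at most $\nu$, and $N_G(Q) = U$ (since $G$ is $k$-vertex-connected and $|U| = k$). Writing $L_\mu = \{v : \deg(v) \le \mu\}$ and $H_\mu = V \setminus L_\mu$, and letting $\nu_Q$ be the power of two with $\vol(Q) \in (\nu_Q/2, \nu_Q]$, the high-degree assumption $U \subseteq H_{\nu_Q}$ makes $U$ act as a ``wall'' that forces $Q$ to be exactly a component of $G[L_{\nu_Q}]$.

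Concretely, the algorithm iterates over every power of two $\nu \in [1, m]$: it samples $T_\nu = \Theta((m/\nu) \log n)$ edges uniformly at random, and for each endpoint $w$ of a sampled edge with $\deg(w) \le \nu$, runs a BFS from $w$ inside $G[L_\nu]$ that halts the moment more than $\nu$ edges are explored. Whenever the BFS terminates normally, the discovered component $Q$ has $\vol(Q) \le \nu$, and the pair $(Q, N_G(Q))$ is inserted into a global hash table keyed by the neighborhood. After processing all levels, for each tuple $(x, \nu, \Pi, U) \in \U$ the algorithm queries the table for $U$ and, if two distinct recorded components $Q_1, Q_2$ are found satisfying $\vol(Q_1) + \vol(Q_2) + |E(U,U)| < m$, it places $U$ into $\L$.

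For completeness, whenever $U$ is a high-degree $k$-shredder we have $|\R| \ge 2$, so by \Cref{lem:hitting-set} each of two fixed distinct components $Q_1, Q_2 \in \R$ is hit by the sampling at its own level $\nu_{Q_i}$ with probability $1 - n^{-100}$, after which the BFS reconstructs $Q_i$ within its budget; a union bound puts both into the hash table, and \Cref{lem:volume-checksum} guarantees $\vol(Q_1) + \vol(Q_2) + |E(U,U)| \le 2\nu + k^2 < m$. For soundness, a recorded pair with $N_G(Q) = U$ forces $U \subseteq H_\nu$ (otherwise a low-degree vertex of $U$ adjacent to $Q$ would be pulled into the same component as $Q$ in $G[L_\nu]$), so $Q$ is a genuine component of $G \sm U$; if $U$ is only a $k$-separator then its two components together exhaust all edges and give $\vol(Q_1) + \vol(Q_2) + |E(U,U)| = m$, so no false positive occurs. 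The sampling plus BFS phase costs $\sum_\nu T_\nu \cdot O(\nu) = O(m \log^2 n)$, hashing each $N_G(Q)$ and each $U$ costs $O(k)$, and computing $|E(U,U)|$ once per unverified set costs $O(k^2)$, summing to $O(k^2 m \log^2 n)$ overall.

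The main obstacle I anticipate is the case where one component of $\R$ (typically $Q_x$ itself) carries almost all of $\vol(\R)$: sampling only at level $\nu$ would overwhelmingly hit the dominant component and miss the others. The resolution is to sample at every power-of-two level simultaneously, so that each $Q \in \R$ is hit at the level matched to its own volume; since $k$-vertex-connectivity forces $\vol(Q) \ge k$, every level remains affordable under the $O((m/\nu) \log n)$ sample budget. Matching across levels through a single global hash table of neighborhoods then lets every unverified tuple reuse the shared pool of recovered components, which is what keeps the overall runtime within the stated bound.
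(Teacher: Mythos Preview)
Your proposal is correct and follows essentially the same approach as the paper: sample edges at every geometric volume scale, recover small components via a BFS confined to the low-degree subgraph $G[L_\nu]$, match the recovered neighborhood against the unverified sets, and certify via the volume checksum of \Cref{lem:volume-checksum}. The one noteworthy difference is that the paper reuses $\vol(Q_x)$---already computed inside \Cref{alg:local} via \Cref{lem:track-local-volume}---as one of the two witnessing components, so it only needs to sample a single additional component $Q_y\in\R\setminus\{Q_x\}$ and checks distinctness by the trivial test $x\notin Q_y$; your variant instead samples \emph{both} components and pools them in a global table, which buys modularity (no reliance on internals of the local algorithm) at the cost of a union bound over two hitting events and some extra bookkeeping to deduplicate repeated recoveries of the same component.
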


\paragraph{Algorithm Outline.}
We summarize the argument made above. Fix a tuple $(x, \nu, \Pi, U)$ in $\U$ such that $U$ is a high-degree $k$-shredder with partition $(C, \R)$. Let $\nu$ denote the unique power of two satisfying $\frac{1}{2}\nu < \vol(\R) \leq \nu$. Because $U$ is high-degree, we have for all $u \in U$, $\deg(u) > \nu$. Let $Q_x$ denote the component of $G \sm U$ containing $x$. Suppose we have obtained a vertex $y$ in a component $Q_y \in \R \sm \{Q_x\}$. Since all vertices in $Q_y$ have degree at most $\nu$, the idea is to explore all vertices connected to $y$ that have degree at most $\nu$. Because $N(Q_y) = U$ and each vertex in $U$ has degree greater than $\nu$, we will compute $Q_y$ as the set of explored vertices. After computing $Q_y$, we will have obtained two components of $G \sm U$: $Q_x$ and $Q_y$. In order to confirm that $U$ is a $k$-shredder, all that is left to do is to make sure that $Q_x$ and $Q_y$ are not the only components of $G \sm U$. \Cref{lem:volume-checksum} implies this can be done by verifying that $\vol(Q_x) + \vol(Q_y) + |E(U, U)| < m$. Now, the final step is to find such a vertex $y$. We use a random sampling procedure for this task. Consider \Cref{alg:high-deg}. The following two lemmas prove the correctness and time complexity of \Cref{alg:high-deg}.

\begin{algorithm}[!ht]
\caption{Extracting all high-degree $k$-shredders from unverified sets}
\label{alg:high-deg}
\begin{algorithmic}[1]
    \Require{$\U$ - a list of tuples of the form $(x, \nu, \Pi, U)$}
    \Ensure{$\L$ - a list of $k$-shredders that satisfies the following: \\ \phantom{ZZ}
        if $(x, \nu, \Pi, U) \in \U$ is such that $U$ is a high-degree $k$-shredder, \\ \phantom{ZZ} 
        then $U \in \L$ with probability $1 - n^{-100}$
    }
    \State $\L \leftarrow \varnothing$
    \For {$i \leftarrow 0$ to $\lceil \log \left(\frac{m}{k}\right) \rceil$} \label{alg:high-deg:nu-loop}
        \State $\nu \leftarrow 2^i$ \label{alg:high-deg:nu}
        \For{$400 \frac{m}{\nu} \log n$ times} \label{alg:high-deg:num-samples}
            \State independently sample an edge $(y, z)$ uniformly at random \label{alg:high-deg:sample-edge}
            \State $Q_y \leftarrow$ set of vertices explored by BFS seeded at $y$: \label{alg:high-deg:BFS}
                \INDSTATE ignore vertices $v$ such that $\deg(v) > \nu$ \label{alg:high-deg:ignore-high-deg}
                \INDSTATE terminate early the moment more than $\nu$ edges are explored \label{alg:high-deg:explore-limit}
            \If {BFS terminated early}
                \State \textbf{skip} to next sample
            \ElsIf {there exists a tuple $(x, \nu_x, \Pi, U) \in \U$ such that $N(Q_y) = U$} \label{alg:high-deg:U-exists}
                \State $Q_x \leftarrow$ component of $G \sm U$ containing $x$
                \If {$x \notin Q_y$ \textbf{and} $\vol(Q_x) + \vol(Q_y) + |E(U, U)| < m$} \label{alg:high-deg:checksum}
                    \State $\L \leftarrow \L \cup \{U\}$ \label{alg:high-deg:add-U-to-L}
                \EndIf
            \EndIf
        \EndFor
    \EndFor
    \State \textbf{return} $\L$
\end{algorithmic}
\end{algorithm}

\begin{lem} \label{lem:high-deg-correctness}
    Let $\U$ be the list of tuples given as input to \Cref{alg:high-deg} and let $\L$ be the output list. Suppose $U$ is a $k$-shredder such that there exists a $k$-tuple $(x, \nu, \Pi, U) \in \U$. Then, $U \in \L$ with probability $1 - n^{-100}$. Furthermore, every set in $\L$ is a $k$-shredder.
\end{lem}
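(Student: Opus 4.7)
The plan is to prove the two claims in \Cref{lem:high-deg-correctness} separately: completeness (a captured high-degree $U$ ends up in $\L$ with probability $1 - n^{-100}$) and soundness (every set placed in $\L$ is actually a $k$-shredder). Both reduce to carefully coordinating the outer-loop parameter $\nu$ with (i) the target volume needed for the Hitting Set Lemma (\Cref{lem:hitting-set}) to fire and (ii) the minimum degree of vertices in $U$ needed to block the BFS, followed by an application of \Cref{lem:volume-checksum}.

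For completeness, I fix a high-degree $k$-shredder $U$ with partition $(C, \R)$ and a capturing tuple $(x, \nu_{\mathrm{tup}}, \Pi, U) \in \U$, and let $Q_x \in \R$ denote the component of $G \sm U$ containing $x$. Since $U$ is a $k$-shredder we have $|\R| \geq 2$, so $\R \sm \{Q_x\}$ is nonempty; I then let $\nu^{\star} = 2^i$ be the unique power of two with $\nu^{\star}/2 < \vol(\R \sm \{Q_x\}) \leq \nu^{\star}$ and zoom in on the outer-loop iteration at $\nu = \nu^{\star}$. Applying \Cref{lem:hitting-set} to $\R \sm \{Q_x\}$, with probability at least $1 - n^{-100}$ some sampled edge $(y,z)$ has $y$ in a component $Q \in \R \sm \{Q_x\}$. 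Because $\vol(\R \sm \{Q_x\}) \leq \vol(\R)$ the powers of two satisfy $\nu^{\star} \leq \nu_{\mathrm{tup}}$, so high-degreeness gives $\deg(u) > \nu_{\mathrm{tup}} \geq \nu^{\star}$ for every $u \in U$; this prevents the BFS from crossing $U$. Simultaneously $\vol(Q) \leq \vol(\R \sm \{Q_x\}) \leq \nu^{\star}$, so the BFS does not terminate early. Hence $Q_y = Q$ and $N(Q_y) = U$; the matching tuple is located on line~\ref{alg:high-deg:U-exists}, the check $x \notin Q_y$ passes because $Q_y \neq Q_x$, and \Cref{lem:volume-checksum} guarantees the volume inequality holds. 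Thus $U$ is inserted into $\L$.

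For soundness, whenever the algorithm inserts a set $U$ into $\L$ on line~\ref{alg:high-deg:add-U-to-L}, some BFS output $Q_y$ satisfies $N(Q_y) = U$, matches a tuple $(x, \cdot, \Pi, U) \in \U$, and passes $x \notin Q_y$ together with $\vol(Q_x) + \vol(Q_y) + |E(U,U)| < m$. By \Cref{lem:unverified-set}, $U$ is a $k$-separator, so $Q_x$ is genuinely a component of $G \sm U$. I would argue that $Q_y$ is also a component of $G \sm U$: it is connected in $G$ and disjoint from $U$, and if some $v \in V \sm (Q_y \cup U)$ lay in the same $G \sm U$-component as $Q_y$ then the last vertex along a connecting path would have a neighbor in $N(Q_y) \sm U$, contradicting $N(Q_y) = U$. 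With $Q_x \neq Q_y$ established by the explicit check, the backward direction of \Cref{lem:volume-checksum} converts the volume inequality into the conclusion that $U$ is a $k$-shredder.

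The main obstacle I anticipate is the interlocked parameter choice in completeness: the algorithmic $\nu$ must be small enough that the $400 (m/\nu) \log n$-sized sample hits $\R \sm \{Q_x\}$ and that the BFS budget of $\nu$ edges suffices to finish exploring the hit component, yet the implicit degree threshold $\nu$ must be at most $\min_{u \in U} \deg(u)$ so that no vertex of $U$ leaks into the BFS. The canonical choice $\nu^{\star}$ sandwiching $\vol(\R \sm \{Q_x\})$ satisfies all three constraints simultaneously precisely because $U$ is declared high-degree against the possibly-larger canonical power of two for $\vol(\R)$; making the inequality $\nu^{\star} \leq \nu_{\mathrm{tup}}$ and the identification $Q_y = Q$ fully rigorous is where I will need to argue most carefully.
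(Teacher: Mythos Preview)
Your proof is correct and follows essentially the same route as the paper: hit a small component via \Cref{lem:hitting-set}, confine the BFS using the high-degree property, and conclude via \Cref{lem:volume-checksum}; the only variation is that the paper targets a single fixed component $Q_y \in \R\setminus\{Q_x\}$ (choosing $\nu$ to sandwich $\vol(Q_y)$) rather than the whole union $\R\setminus\{Q_x\}$ as you do, and both choices satisfy the three constraints you identify. One minor slip: the lemma does not assume the tuple \emph{captures} $U$, so your inequality $\deg(u) > \nu_{\mathrm{tup}}$ is not justified as written---replace $\nu_{\mathrm{tup}}$ by the canonical power of two $\nu_\R$ from \Cref{def:low-high-deg}, which is what high-degree actually lower-bounds and which still dominates your $\nu^\star$ since $\vol(\R\setminus\{Q_x\})\le\vol(\R)$.
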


\begin{proof}
    Fix a tuple $(x, \nu, \Pi, U)$ in $\U$ such that $U$ is a high-degree $k$-shredder with partition $(C, \R)$. We will show that $U \in \L$ with high probability. Let $Q_x$ denote the component in $\R$ containing $x$. Let $Q_y$ denote a component of $\R \sm \{Q_x\}$. We will first show that the algorithm obtains a vertex $y \in Q_y$ with high probability. Let $\nu$ denote the unique power of two satisfying $\frac{1}{2} \nu < \vol(Q_y) \leq \nu$. \Cref{lem:hitting-set} implies that if we sample $400 \frac{m}{\nu} \log n$ edges, we will obtain an edge $(y, z)$ such that $y \in Q_y$. Although we do not know the explicit value of $\nu$, all powers of two up to $\frac{m}{\nu}$ are tried by the algorithm during the loop on line \ref{alg:high-deg:nu-loop}. Hence, we will eventually obtain an edge $(y, z)$ with $y \in Q_y$ with probability $1 - n^{-100}$ during the \textbf{for} loop on line \ref{alg:high-deg:num-samples}.

    After obtaining a vertex $y \in Q_y$, we perform BFS seeded at $y$. Since $\frac{1}{2} \nu < \vol(Q_y) \leq \nu$, we know that all vertices in $Q_y$ have degree at most $\nu$. We also know that $Q_y$ can be computed by traversing at most $\nu$ edges because $\vol(Q_y) \leq \nu$. Lastly, we know that all vertices in $U$ have degree greater than $\nu$ because $U$ is a high-degree $k$-shredder. This implies $Q_y$ will be computed by line \ref{alg:high-deg:BFS}. After computing $Q_y$, we can determine whether there exists a tuple $(x, \nu_x, \Pi, U)$ in the input list such that $N(Q_y) = U$. If such a tuple exists, we should first check that $x \notin Q_y$. If so, we now have two components of $G \sm U$: $Q_x$ and $Q_y$. By \Cref{lem:volume-checksum}, the only step left to determine whether $U$ is a $k$-shredder is verifying that $\vol(Q_x) + \vol(Q_y) + |E(U, U)| < m$. Since $U$ is a $k$-shredder, this inequality must be true. We can conclude that $U$ will be added to $\L$ by lines \ref{alg:high-deg:U-exists}-\ref{alg:high-deg:add-U-to-L}. Notice that we do not need to explicitly compute $Q_x$ as shown in the pseudocode, as we only need the value $\vol(Q_x)$. \Cref{lem:track-local-volume} implies that this value can be computed and returned by \Cref{alg:local}, so we do not need to handle it here.

    To see that every set in $\L$ is a $k$-shredder, notice that we only append to $\L$ on line \ref{alg:high-deg:add-U-to-L}. When we append a set $U$ to $\L$, we must have found two components $Q_x, Q_y$ of $G \sm U$ such that $\vol(Q_x) + \vol(Q_y) + |E(U, U)| < m$. By \Cref{lem:volume-checksum} we have that $U$ is a $k$-shredder, which implies that every set in $\L$ is a $k$-shredder.
\end{proof}

\begin{lem} \label{lem:high-deg-time}
    \Cref{alg:high-deg} runs in $\O(k^2 m \log^2 n)$ time.
\end{lem}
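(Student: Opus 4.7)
The plan is to directly charge each operation in \Cref{alg:high-deg} to a single sample at some level $\nu = 2^i$, and then sum costs across the geometric schedule on $\nu$. First, I would set up two preprocessing structures: a hash set $E_H$ of the edges of $G$, built in $\O(m)$ time for $\O(1)$ edge-existence queries, and a hash table $H$ keyed by sorted $k$-tuples that maps each set $U$ appearing in some input tuple of $\U$ to its associated data. In particular, $H$ stores the corresponding vertex $x$, the precomputed volume $\vol(Q_x)$ of the component of $G \sm U$ containing $x$ (available from \Cref{lem:track-local-volume}), and an initially empty slot for caching $|E(U,U)|$.

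Next I would bound the cost of one sample at level $\nu$. The BFS on line~\ref{alg:high-deg:BFS} skips every vertex of degree exceeding $\nu$ and terminates the moment more than $\nu$ edges have been explored, so it runs in $\O(\nu)$ time; during the same traversal I would accumulate $N(Q_y)$ and $\vol(Q_y)$ at no extra asymptotic cost. A hash lookup of $N(Q_y)$ in $H$ takes $\O(k)$. If the lookup returns a matching tuple $(x, \nu_x, \Pi, U)$, the test $x \in Q_y$ is $\O(1)$ using a hash representation of $Q_y$, the value $\vol(Q_x)$ is read directly from $H$, and $|E(U,U)|$ is computed (once per distinct matched $U$, then cached) by iterating over the $\binom{k}{2}$ pairs in $U$ and running $\O(1)$ queries against $E_H$, costing $\O(k^2)$. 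Thus each sample contributes $\O(\nu + k^2)$ time.

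The summation is then routine: at level $\nu = 2^i$ the inner loop runs $400 (m/\nu) \log n$ times and contributes
\[
    \O\!\left(\frac{m}{\nu} \log n \cdot (\nu + k^2)\right) \;=\; \O(m \log n) + \O\!\left(\frac{k^2 m \log n}{\nu}\right).
\]
Summing over $i = 0, 1, \dots, \lceil \log(m/k) \rceil$ and using $\sum_{i \geq 0} 2^{-i} = \O(1)$ yields $\O(m \log^2 n) + \O(k^2 m \log n) = \O(k^2 m \log^2 n)$, matching the claim.

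The step I expect to require the most care is the $|E(U,U)|$ evaluation inside the volume check on line~\ref{alg:high-deg:checksum}: a naive scan over edges incident to $U$ would cost $\Omega(m)$ for a single high-degree $U$ and destroy the budget. The workaround outlined above, using the preprocessed $E_H$ together with $\binom{k}{2}$ pair-queries and caching the answer per distinct matched $U$, is the key trick, and once it is in place the remaining bookkeeping falls out of the geometric sum.
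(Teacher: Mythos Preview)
Your proposal is correct and takes essentially the same approach as the paper: charge each sample at level $\nu$ an $\O(\nu + k^2)$ cost (BFS plus hash lookup plus $\O(k^2)$ for $|E(U,U)|$ via pairwise edge queries), then sum across the $\O(\log n)$ geometric levels. Your summation is in fact slightly sharper than the paper's (you separate the $\O(m\log^2 n)$ and $\O(k^2 m\log n)$ contributions rather than crudely bounding each term by $\O(k^2 m\log n)$), and your caching of $|E(U,U)|$ is an extra optimization the paper does not bother with, but neither difference changes the argument.
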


\begin{proof}
    For each sampled vertex $y$ with volume parameter $\nu$, we perform a restricted BFS to explore a set of vertices $Q_y$. This takes $\O(\nu)$ time. Checking whether a tuple $(x, \nu, \Pi, U)$ exists in $\U$ such that $N(Q_y) = U$ can be done in $\O(k)$ time by hashing the tuples in $\U$ beforehand and querying the hash of $N(Q_y)$. Computing $|E(U, U)|$ can be done in $\O(k^2)$ time using adjacency sets. Finally, computing $\vol(Q_x) + \vol(Q_y) + |E(U, U)|$ can be done in $\O(1)$. In total, these steps take $\O(k^2 + \nu)$ time. Since we vary $\nu$ over all integers from 0 to $\left\lceil \log \left(\frac{m}{k}\right) \right\rceil$, there are $\O(\log m) = \O(\log n)$ values of $\nu$. Putting all the steps together, we can bound the total runtime as
    \begin{align*}
        \sum_{\nu} 400 \frac{m}{\nu} \log n \cdot \O(k^2 + \nu) 
        = & \sum_{i=0}^{\lceil \log (\frac{m}{k}) \rceil} 400 \frac{m}{2^i} \log n \cdot \O(k^2 + 2^i) \\
        = & \sum_{i=0}^{\lceil \log (\frac{m}{k}) \rceil} \O \left(k^2 \frac{m}{2^i} \log n + 400 m \log n \right) \\
        = & \sum_{i=0}^{\lceil \log (\frac{m}{k}) \rceil} \O(k^2 m \log n) \\
        = & \left( \left\lceil \log \left( \frac{m}{k} \right) \right\rceil + 1 \right) \cdot \O(k^2 m \log n) \\
        = & \: \O(\log m) \cdot \O(k^2 m \log n) \\
        = & \: \O(k^2 m \log^2 n).
    \end{align*}
\end{proof}

\begin{proof}[Proof of \Cref{lem:high-deg}]
    Correctness and time complexity of \Cref{alg:high-deg} directly follow from \Cref{lem:high-deg-correctness,lem:high-deg-time}.
\end{proof}

\section{Capturing and Listing Unbalanced \fmt{$k$}--Shredders} \label{sec:unbalanced}
In the previous sections, we showed an algorithm that takes as input a tuple $(x, \nu, \Pi)$, and lists all $k$-shredders that are captured by $(x, \nu, \Pi)$ as well as an unverified set. We then showed how to resolve unverified sets using casework on the structural properties of $k$-shredders. There is one piece of the puzzle left: the method for capturing unbalanced $k$-shredders. For this, we will leverage random sampling using \Cref{lem:hitting-set} as well as recent developments in local flow algorithms as in \cite{FNSYY20}.

\subsection{Leveraging Local Flow Algorithms}
Let $S$ be an unbalanced $k$-shredder with partition $(C, \R)$. At a high level, we use geometric sampling to obtain a seed vertex $x \in \R$ and a volume parameter $\nu$ satisfying $\frac{1}{2} \nu < \vol(\R) \leq \nu$. This obtains two items necessary for capturing $S$. We are still missing a set of $k$ openly-disjoint paths $\Pi$, each starting from $x$ and ending at a vertex in $C$, such that the sum of lengths over all paths is at most $k^2 \nu$. This is precisely a core tool developed in \cite{FNSYY20}. Below is a theorem stating the existence of a local flow algorithm that precisely constructs our desired set of paths.

\begin{definition}[Vertex Cut]
    A vertex cut $(L, S, R)$ of a graph $G = (V, E)$ is a partition of $V$ such that for all vertex pairs $(u, v) \in L \times R$, $u$ is not connected to $v$ in $G \sm S$.
\end{definition}

\begin{thm}[{\cite[implicit in Theorem 4.1]{FNSYY20}}] \label{thm:LocalVC}
    Let $G = (V, E)$ be an undirected $n$-vertex $m$-edge graph with vertex connectivity $k$. Let $(L, S, R)$ be a vertex cut of $G$ such that $\vol(R) < \frac{m}{k}$. There exists a randomized algorithm $\LVC(\cdot, \cdot)$ that takes as input a pair $(x, \nu)$ where $x$ is a vertex in $R$ and $\nu$ is a positive integer satisfying $\frac{1}{2} \nu < \vol(R) \leq \nu$. The algorithm outputs a set $\Pi$ of $k$ openly-disjoint paths such that each path satisfies the following.
    \begin{enumerate}
        \item The sum of lengths over all paths in $\Pi$ is at most $k^2 \nu$.
        \item The path starts from $x$ and ends at a vertex in $L$.
    \end{enumerate}
    The algorithm outputs $\Pi$ with probability $1/4$ in $\O(k^2 \nu)$ time.
\end{thm}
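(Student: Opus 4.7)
The plan is to apply Theorem 4.1 of \cite{FNSYY20} essentially as a black box after the standard vertex-splitting reduction, and then read off $\Pi$ via a flow decomposition. First I would construct the split graph $G'$ in which every vertex $v \in V$ becomes a pair $v^{in}, v^{out}$ joined by a unit-capacity directed edge, and each undirected edge $uv \in E$ becomes two infinite-capacity arcs $u^{out} \to v^{in}$ and $v^{out} \to u^{in}$. Edge-disjoint paths from $x^{out}$ to $L^{in}$ in $G'$ correspond bijectively to openly-disjoint paths from $x$ to $L$ in $G$, so it suffices to route $k$ units of flow out of $x^{out}$ toward $L^{in}$ inside the $\nu$-volume neighborhood of $x^{out}$.

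Next I would invoke $\LVC$ of \cite{FNSYY20} from source $x^{out}$ with volume budget proportional to $\nu$. That routine performs up to $k$ rounds of bounded-work DFS-based Ford--Fulkerson augmentation whose total residual-edge exploration is $\O(k \nu)$ by the monotone-DFS argument in \cite{FNSYY20}; in particular, each individual augmenting path has length at most $\O(k \nu)$. The algorithm's internal randomization (a sampling choice made inside the DFS) succeeds with probability at least $1/4$, and conditioned on success a flow of value $k$ out of the $\nu$-ball is certified. The hypotheses $x \in R$, $\vol(R)\leq \nu$, and $|S|=k$ together with Menger's theorem applied to the cut $(L,S,R)$ force any successfully routed flow path to cross $S$ and reach a vertex of $L$.

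I would then extract $\Pi$ by standard integral flow decomposition of the resulting unit flow. Each of the $k$ flow paths in $G'$ projects to a simple path in $G$ starting at $x$, and the $k$ paths are openly disjoint because the unit-capacity vertex-split edges cannot be reused by two distinct flow paths. Each path I would truncate the moment it first hits a vertex of $L$; such a vertex must appear within one step past $S$ because $(L,S,R)$ partitions $V$. The sum-of-lengths bound $\sum_{\pi \in \Pi} |\pi| \leq k^2 \nu$ follows from the $\O(k\nu)$ exploration budget per augmentation combined with the $k$ augmenting rounds, and the overall running time of $\O(k^2 \nu)$ is inherited directly from \cite{FNSYY20}.

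The main obstacle is verifying that the $1/4$ success probability and the volume-based exploration budget in FNSYY20 transfer cleanly to this formulation, where the ``small side'' $R$ is specified by an externally given cut rather than arising from a randomly seeded exploration as in their original application. This reduces to checking that their analysis depends only on the structural facts that $x$ sits inside a vertex set of volume at most $\nu$ which is separated from the rest of the graph by a cut of size $k$, all of which are guaranteed by the theorem's hypotheses; a small amount of care is also needed to ensure that the truncation step (stopping paths at the first vertex in $L$) does not invalidate the vertex-disjointness guarantee, which follows because truncating a prefix of a vertex-disjoint family preserves vertex-disjointness.
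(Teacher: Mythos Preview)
Your proposal takes essentially the same route as the paper's sketch: vertex-splitting reduction, $k$ rounds of DFS-based augmentation with an $\O(k\nu)$ exploration budget per round, random endpoint sampling inside each round, and flow decomposition to recover $k$ openly-disjoint paths of total length $k^2\nu$.

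One point deserves correction, though. You write that ``Menger's theorem applied to the cut $(L,S,R)$ force[s] any successfully routed flow path to cross $S$ and reach a vertex of $L$.'' This is not the mechanism. The $k$ paths produced by the augmentation go to whatever endpoints were randomly sampled, and those endpoints could in principle lie in $R$; nothing structural forces them into $L$. The paper's (and \cite{FNSYY20}'s) argument is purely probabilistic: in each round the DFS touches $k\nu$ edges, of which at most $\nu$ are incident to $R$ (since $\vol(R)\le\nu$), so a uniformly sampled endpoint lands in $L$ with probability at least $1-1/k$; over $k$ rounds all endpoints lie in $L$ with probability at least $(1-1/k)^k\ge 1/4$. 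That is both the source of the $1/4$ and the reason the paths end in $L$; Menger's theorem plays no role here. Your truncation step is consequently unnecessary (on the success event the sampled endpoints are already in $L$), though it does no harm.
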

Theorem 4.1 of \cite{FNSYY20} only states that their algorithm returns a vertex cut. But they also construct the set of paths $\Pi$. Their algorithm is simple, and we will briefly explain it here. First, we perform the standard vertex-splitting reduction and reduce the problem to finding \emph{directed edge-disjoint} paths instead.
To find the first path, we use DFS starting from a vertex $x$ and explore $k \nu$ volume. Then, we sample a random endpoint $y_1$ among all explored edges. Note that $y_1$ is in $L$ with probability at least $1-1/k$ since $\vol(R) \leq \nu$. We reverse the direction of edges on the path from $x$ to $y_1$ in the DFS tree and obtain a ``residual'' graph. Then, we repeat the process in the residual graph to construct the next path from $x$ to $y_2$. After $k$ iterations, the endpoints $y_1, \dots, y_k$ of these $k$ paths are in $L$ with probability $(1-1/k)^k \geq 1/4$. The paths in the residual graphs can be decomposed via flow decomposition into $k$ directed edge-disjoint paths in the original graph whose total length is $k^2 \nu$. Finally, these paths correspond to $k$ openly-disjoint paths by the standard reduction in the beginning.

The output of the algorithm described in \Cref{thm:LocalVC} directly corresponds to \Cref{def:capture}. Let $R$ denote the union of all components in $\R$. Notice that $(C, S, R)$ forms a vertex cut such that $\vol(R) < \frac{m}{k}$. The idea is to obtain a seed vertex $x \in R$ using a linear amount of random samples. For each sample, we can directly apply $\LVC(\cdot, \cdot)$ to obtain the desired set of paths. Furthermore, we can boost the success rate of the algorithm by repeating it a polylogarithmic number of times. In the following section, we formalize this idea.

\subsection{The Algorithm for Unbalanced \fmt{$k$}--Shredders}
The main result is stated below.

\begin{lem} \label{lem:unbalanced}
    There exists a randomized algorithm that takes as input $G = (V, E)$, an $n$-vertex $m$-edge undirected graph with vertex connectivity $k$. The algorithm correctly lists all unbalanced $k$-shredders of $G$ with probability $1 - n^{-98}$ in $\O(k^4 m \log^4 n)$ time.
\end{lem}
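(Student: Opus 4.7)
The plan is to combine the tools assembled so far: random edge sampling to locate a seed vertex $x$ in the small side of each unbalanced $k$-shredder, the local flow algorithm $\LVC(\cdot,\cdot)$ of \Cref{thm:LocalVC} to produce the openly-disjoint path bundle $\Pi$ of a captured tuple, \Cref{alg:local} to extract captured $k$-shredders and an unverified set, \Cref{lem:low-deg-aux} to immediately resolve low-degree unverified sets, and finally \Cref{alg:high-deg} to filter the remaining unverified sets for high-degree $k$-shredders. Because \Cref{def:capture} requires $\vol(\R)$ to lie in $(\tfrac{1}{2}\nu,\nu]$, we loop over guesses $\nu = 2^i$ for $i = 0, 1, \ldots, \lceil \log(m/k) \rceil$, which is safe since $S$ is unbalanced means $\vol(\R) < m/k$.

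Inside the loop for a fixed $\nu$, the algorithm first draws $\Theta((m/\nu)\log n)$ edges uniformly at random and takes each endpoint as a candidate seed $x$; by \Cref{lem:hitting-set} applied to $R = \bigcup \R$ of any fixed unbalanced $k$-shredder $S$ with $\vol(\R) \in (\nu/2,\nu]$, one of these seeds lies in $\R$ with probability $1 - n^{-100}$. For each such seed $x$ we invoke $\LVC(x,\nu)$ from \Cref{thm:LocalVC} $\Theta(\log n)$ times, boosting the per-call success probability $1/4$ to $1 - n^{-100}$; conditioned on $x\in R$, at least one trial yields a set $\Pi$ of $k$ openly-disjoint paths from $x$ to $C$ of total length at most $k^2\nu$, so the triple $(x,\nu,\Pi)$ captures $S$ in the sense of \Cref{def:capture}. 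We then feed $(x,\nu,\Pi)$ into \Cref{alg:local}; by \Cref{lem:local}, either $S$ appears in the returned list $\L$ (which we dump into the global output), or $S$ is the unverified set $U$. In the latter case we first run \Cref{alg:low-deg} on $(x,\nu,\Pi,U)$; if it returns $\TRU$ we add $U$ to the output, and otherwise we store the tuple in a global list $\U$ of candidate high-degree shredders. After all loops terminate, we invoke \Cref{alg:high-deg} on $\U$ and append its output to the final list.

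For correctness, fix any unbalanced $k$-shredder $S$ with partition $(C,\R)$ and let $\nu$ be the unique power of two in $(\tfrac{1}{2}\vol(\R),\vol(\R)] \cdot (\cdot)$ that falls within the iteration range. With probability $1 - n^{-99}$, some seed $x \in R$ is sampled and some repetition of $\LVC$ produces a $\Pi$ for which $(x,\nu,\Pi)$ captures $S$. Then \Cref{lem:local} deterministically places $S$ in $\L$ or reports $S = U$. In the former case we are done; in the latter, if $S$ is low-degree then \Cref{lem:low-deg-aux} certifies it deterministically, while if $S$ is high-degree then $(x,\nu,\Pi,U)$ enters $\U$ and \Cref{lem:high-deg} recovers $S$ with probability $1 - n^{-100}$. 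Every set we output is truly a $k$-shredder, because \Cref{lem:local}, \Cref{lem:low-deg-aux} and \Cref{lem:high-deg} each emit only verified $k$-shredders. Since there are at most $n$ unbalanced $k$-shredders \cite{Jor99}, a union bound over these and over the randomized stages yields overall success probability at least $1 - n^{-98}$.

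For the running time, fix $\nu = 2^i$. Sampling and bookkeeping cost $\tO(m/\nu)$. Each seed triggers $\O(\log n)$ calls to $\LVC$ of cost $\O(k^2\nu)$ each, one call to \Cref{alg:local} of cost $\O(k^2\nu \log \nu)$, and one call to \Cref{alg:low-deg} of cost $\O(k^4 \log n + k\nu)$. Aggregating per value of $\nu$ yields
\[
\O\!\left(\tfrac{m}{\nu}\log n\right)\cdot \O\!\left(k^2\nu \log^2 n + k^4\log n\right) \;=\; \O\!\left(k^2 m \log^3 n + k^4 \tfrac{m}{\nu}\log^2 n\right),
\]
and summing over the $\O(\log n)$ values of $\nu$ gives $\O(k^4 m \log^4 n)$. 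Finally, \Cref{lem:high-deg} contributes an additional $\O(k^2 m \log^2 n)$, which is subsumed. The main subtlety is the probability bookkeeping: because the seed sampling and the $\LVC$ boosting are two independent randomized layers per $\nu$, the constants in the repetition counts must be chosen so that both layers separately achieve $1 - n^{-100}$, after which a union bound over all at most $n$ unbalanced $k$-shredders and all $\O(\log n)$ scales absorbs safely into $1 - n^{-98}$.
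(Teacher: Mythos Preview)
Your proposal follows the same approach as the paper's proof (via \Cref{alg:unbalanced} and \Cref{lem:unbalanced-capture,lem:unbalanced-list-all,lem:unbalanced-list-only-shredders,lem:unbalanced-time}): geometric guessing of $\nu$, edge sampling for seeds, boosted calls to $\LVC$, then the local/low-degree/high-degree pipeline.

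There is one small gap in your description. You write that $\LVC(x,\nu)$ is invoked $\Theta(\log n)$ times so that ``at least one trial yields a set $\Pi$'' capturing $S$, and ``we then feed $(x,\nu,\Pi)$ into \Cref{alg:local}'' with, per your time analysis, \emph{one} call to \Cref{alg:local} per seed. But you cannot tell which of the $\Theta(\log n)$ outputs of $\LVC$ is the good one: success means all $k$ endpoints land in $C$, and $C$ is unknown. The paper therefore runs the full pipeline (\Cref{alg:local} and \Cref{alg:low-deg}) on \emph{every} $\LVC$ output, i.e., $\Theta(\log n)$ times per seed. This fix is trivial and does not change the $\O(k^4 m \log^4 n)$ bound, since the per-call cost of \Cref{alg:local} and \Cref{alg:low-deg} is of the same order as $\LVC$ itself; your final time estimate already has enough slack to absorb it.
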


We first give a high level outline for listing unbalanced $k$-shredders. Let $S$ be an unbalanced $k$-shredder with partition $(C, \R)$. With geometric sampling and \Cref{lem:hitting-set}, we will sample a vertex $x \in \R$ with a volume parameter $\nu$ that satisfies $\frac{1}{2} \nu < \vol(\R) \leq \nu$. Then, we use \Cref{thm:LocalVC} to obtain a set of $k$ openly-disjoint paths $\Pi$, each starting from $x$ and ending at a vertex in $C$ such that the sum of lengths over all paths is at most $k^2 \nu$. We now have a tuple $(x, \nu, \Pi)$ that captures $S$. After capturing $S$, we call \Cref{alg:local} to list $S$ as a $k$-shredder or an unverified set. In the latter case, we can verify whether $S$ is a low-degree $k$-shredder using \Cref{alg:low-deg}. If this verification step fails, $S$ must be a high-degree $k$-shredder. In this case, we can add $S$ to a global list of unverified sets. This list will be processed after all unbalanced $k$-shredders have been captured. Lastly, we can extract all high-degree $k$-shredders from the list of unverified sets using \Cref{alg:high-deg}. Pseudocode describing this process is given in \Cref{alg:unbalanced}.
 
\begin{algorithm}[!ht]
\caption{Listing all unbalanced $k$-shredders of a graph}
\label{alg:unbalanced}
\begin{algorithmic}[1]
    \Require{$G = (V, E)$ - an undirected $k$-vertex-connected graph}
    \Ensure{$\L$ - a list containing all unbalanced $k$-shredders of $G$}

    \State $\L \leftarrow \varnothing$
    \State $\U \leftarrow \varnothing$
    \State $\Phi \leftarrow$ initialize a pairwise connectivity oracle as in \cite{Kos23} \label{alg:unbalanced:init-oracle}

    \item[]
    \For {$i \leftarrow 0$ to $\lceil \log \left(\frac{m}{k}\right) \rceil$} \label{alg:unbalanced:nu-loop}
        \State $\nu \leftarrow 2^i$
        \For {$400 \frac{m}{\nu} \log n$ times} \label{alg:unbalanced:num-samples}
            \State independently sample an edge $(x, y)$ uniformly at random \label{alg:unbalanced:sample-edge}
            \For {$300 \log n$ times} \label{alg:unbalanced:boost-LVC}
                \State $\Pi \leftarrow \LVC(x, \nu)$ as in \cite{FNSYY20} \label{alg:unbalanced:LVC}
                \State $(\L_{local}, U) \leftarrow$ call \Cref{alg:local} on input $(x, \nu, \Pi)$ \label{alg:unbalanced:call-local}
                \State $\L \leftarrow \L \cup \L_{local}$ \label{alg:unbalanced:add-local}
                \If {$U \neq \varnothing$}
                    \State call \Cref{alg:low-deg} on input $(x, \nu, \Pi, U)$ \label{alg:unbalanced:call-low}
                    \If {\Cref{alg:low-deg} returned $\TRU$}
                        \State $\L \leftarrow \L \cup \{U\}$ \label{alg:unbalanced:add-low}
                    \Else
                        \State $\U \leftarrow \U \cup \{(x, \nu, \Pi, U)\}$ \label{alg:unbalanced:add-U}
                    \EndIf
                \EndIf
            \EndFor
        \EndFor
    \EndFor
    
    \item[]
    \State $\L_{high-degree} \leftarrow$ call \Cref{alg:high-deg} on input $\U$ \label{alg:unbalanced:call-high}
    \State $\L \leftarrow \L \cup \L_{high-degree}$ \label{alg:unbalanced:add-high}
    \State \textbf{return} $\L$
\end{algorithmic}
\end{algorithm}

\begin{lem} \label{lem:unbalanced-capture}
    Let $S$ be an unbalanced $k$-shredder. Then $S$ is captured by a tuple $(x, \nu, \Pi)$ after line \ref{alg:unbalanced:LVC} during some iteration of the loop on line \ref{alg:unbalanced:nu-loop} with probability $1 - n^{-99}$.
\end{lem}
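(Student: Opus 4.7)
The plan is to identify the correct target value of $\nu$, show that edge sampling locates a vertex $x$ lying in some component of $\R$ with high probability, and then show that the repeated invocations of $\LVC$ yield the required path set $\Pi$.

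Fix the partition $(C, \R)$ of $S$ and set $R := \bigcup_{Q \in \R} Q$. Since $S$ is unbalanced, $\vol(R) = \vol(\R) < m/k$, so there is a unique integer $i^\ast \in \{0, 1, \dots, \lceil \log(m/k) \rceil\}$ such that $\nu^\ast := 2^{i^\ast}$ satisfies $\tfrac{1}{2}\nu^\ast < \vol(R) \leq \nu^\ast$. This value $\nu^\ast$ is tried by the outer \textbf{for} loop on line~\ref{alg:unbalanced:nu-loop}, so condition~(2) of \Cref{def:capture} holds automatically. During the iteration with $\nu = \nu^\ast$, the algorithm independently samples $400 \tfrac{m}{\nu^\ast} \log n$ edges on line~\ref{alg:unbalanced:sample-edge}. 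Applying \Cref{lem:hitting-set} to the vertex set $R$, with probability $1 - n^{-100}$ at least one of these sampled edges $(x, y)$ has $x \in R$, i.e.\ $x$ lies in some component of $\R$, which gives condition~(1) of \Cref{def:capture}.

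Next I would invoke \Cref{thm:LocalVC} with the vertex cut $(L, S, R) := (C, S, R)$. Observe that this is indeed a vertex cut of $G$ because removing $S$ disconnects $C$ from every component of $\R$, and the precondition $\vol(R) < m/k$ is exactly the unbalancedness hypothesis. Moreover, the $x$ produced in the previous step lies in $R$, and $\nu^\ast$ satisfies $\tfrac{1}{2}\nu^\ast < \vol(R) \leq \nu^\ast$, so both input requirements of $\LVC$ are met. Each call to $\LVC(x, \nu^\ast)$ on line~\ref{alg:unbalanced:LVC} therefore returns, with probability at least $1/4$, a set $\Pi$ of $k$ openly-disjoint paths of total length at most $k^2 \nu^\ast$, each starting at $x$ and ending at a vertex of $L = C$. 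This is precisely condition~(3) of \Cref{def:capture}.

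Finally, since the $300 \log n$ iterations of the inner boosting loop on line~\ref{alg:unbalanced:boost-LVC} are independent, the probability that every invocation of $\LVC$ fails is at most $(3/4)^{300 \log n} \leq n^{-100}$ (adjusting the boosting constant as needed for the chosen logarithm base). A union bound over the hitting-set event and the $\LVC$-boosting event then gives overall capture probability at least $1 - 2 n^{-100} \geq 1 - n^{-99}$. The main conceptual point, rather than an obstacle, is simply recognizing that the unbalancedness of $S$ converts directly into the $\vol(R) < m/k$ precondition of \Cref{thm:LocalVC} when $R$ is taken to be the union of the small side of the cut, and that the power-of-two discretization of $\nu$ in the outer loop aligns with the ranges required by both the Hitting Set Lemma and \Cref{thm:LocalVC}; these matchings are by design.
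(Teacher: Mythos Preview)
Your proposal is correct and follows essentially the same approach as the paper: identify the power-of-two $\nu^\ast$ with $\tfrac{1}{2}\nu^\ast < \vol(\R) \leq \nu^\ast$, apply the Hitting Set Lemma to land a sampled endpoint in $R$, invoke \Cref{thm:LocalVC} on the vertex cut $(C,S,R)$ for each of the $300\log n$ boosting rounds, and combine the two $n^{-100}$ failure probabilities into $1-n^{-99}$. Your write-up is in fact slightly more explicit than the paper's in verifying that $i^\ast$ falls within the loop range and that the unbalancedness hypothesis feeds directly into the precondition of \Cref{thm:LocalVC}.
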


\begin{proof}
    Suppose $S$ has partition $(C, \R)$. Consider the iteration of the loop on line \ref{alg:unbalanced:nu-loop} where $\frac{1}{2} \nu < \vol(\R) \leq \nu$. By \Cref{lem:hitting-set}, we will sample an edge $(x, y)$ such that $x \in \R$ with probability $1 - n^{-100}$ during the \textbf{for} loop on line \ref{alg:unbalanced:num-samples}. The only missing piece of the tuple is the set of paths. This piece is effectively solved by \Cref{thm:LocalVC}. Let $R$ be the union of all components in $\R$. Notice that $(C, S, R)$ is a vertex cut such that $\vol(R) < \frac{m}{k}$. We can call $\LVC(x, \nu)$ to obtain a set $\Pi$ of $k$ openly-disjoint paths that start from $x$ and end in $C$ such that the sum of lengths over all paths is at most $k^2 \nu$ with probability $1/4$. This probability is boosted by the \textbf{for} loop on line \ref{alg:unbalanced:boost-LVC}. So the probability that we fail to obtain such a set of paths over all trials is reduced to $(3/4)^{300 \log n} \leq (1/2)^{100 \log n} = n^{-100}$. Hence, the joint probability that we sample a vertex $x \in \R$ and obtain the desired set of paths $\Pi$ is at least $(1 - n^{-100})^2 \geq 1 - 2n^{-100} \geq 1 - n^{-99}$ for all $n \geq 2$.
\end{proof}

\begin{lem} \label{lem:unbalanced-list-all}
    Let $S$ be a $k$-shredder captured by the tuple $(x, \nu, \Pi)$ after line \ref{alg:unbalanced:LVC}. If $S$ is low-degree, then $S \in \L$ after line \ref{alg:unbalanced:add-low}. Otherwise, if $S$ is high-degree, then $S \in \L$ after line \ref{alg:unbalanced:add-high} with probability $1 - n^{-100}$.
\end{lem}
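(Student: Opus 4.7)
The plan is to invoke \Cref{lem:local} on $(x,\nu,\Pi)$ and then branch on whether \Cref{alg:local} already puts $S$ into $\L_{local}$ or only reports it via the unverified set $U$. Because the tuple captures $S$, the definition of capture (\Cref{def:capture}) guarantees that $\frac{1}{2}\nu<\vol(\R)\leq \nu$ for the partition $(C,\R)$ of $S$, which is exactly the relation required by the low-degree/high-degree classification in \Cref{def:low-high-deg}. Hence the word ``low-degree'' or ``high-degree'' applied to $S$ in the statement is unambiguous and matches the hypotheses of the downstream lemmas.

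First, by \Cref{lem:local} applied to $(x,\nu,\Pi)$, either $S\in \L_{local}$ or $S=U$. In the former case, line~\ref{alg:unbalanced:add-local} gives $S\in \L$ immediately, regardless of the degree classification, so both conclusions of the lemma are satisfied deterministically. Thus I only need to handle the case $S=U$ returned by the call on line~\ref{alg:unbalanced:call-local}.

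Second, suppose $S=U$ and $S$ is low-degree. Then \Cref{alg:low-deg} is invoked on $(x,\nu,\Pi,U)$ at line~\ref{alg:unbalanced:call-low}. Since $U=S$ is a $k$-shredder and contains a vertex of degree at most $\nu$ (this is exactly what low-degree means together with the capture condition on $\nu$), \Cref{lem:low-deg-aux} says the algorithm returns $\TRU$, so line~\ref{alg:unbalanced:add-low} inserts $U=S$ into $\L$. This gives the deterministic conclusion in the low-degree case.

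Third, suppose $S=U$ and $S$ is high-degree. Then every vertex of $U$ has degree exceeding $\nu$, so by the ``only if'' direction of \Cref{lem:low-deg-aux} (contrapositive of the correctness statement) the call on line~\ref{alg:unbalanced:call-low} returns $\FLS$, and the tuple $(x,\nu,\Pi,U)$ is added to $\U$ on line~\ref{alg:unbalanced:add-U}. At line~\ref{alg:unbalanced:call-high} we run \Cref{alg:high-deg} on $\U$, and since $(x,\nu,\Pi,U)\in\U$ with $U$ a high-degree $k$-shredder, \Cref{lem:high-deg} guarantees $U\in \L_{high-degree}$ with probability $1-n^{-100}$. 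The union on line~\ref{alg:unbalanced:add-high} then puts $S=U$ into $\L$.

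The only nontrivial step is the high-degree branch, where correctness rests entirely on \Cref{lem:high-deg}; everything else is bookkeeping that combines the guarantees already established for \Cref{alg:local} and \Cref{alg:low-deg}. The main subtlety to be careful about is that the value of $\nu$ with which $(x,\nu,\Pi)$ captures $S$ coincides with the $\nu$ used in the low-degree/high-degree classification of $S$; this is immediate from \Cref{def:capture}(2) and \Cref{def:low-high-deg}, so no extra work is needed.
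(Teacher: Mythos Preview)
Your proposal is correct and follows essentially the same approach as the paper's own proof: invoke \Cref{lem:local} to split into the cases $S\in\L_{local}$ versus $S=U$, then dispatch the low-degree and high-degree subcases via \Cref{lem:low-deg-aux} and \Cref{lem:high-deg} respectively. If anything, your version is slightly more careful than the paper's in explicitly justifying why \Cref{alg:low-deg} returns $\FLS$ in the high-degree branch and why the $\nu$ from the capturing tuple coincides with the $\nu$ in \Cref{def:low-high-deg}.
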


\begin{proof}
    Suppose $S$ has partition $(C, \R)$ and is captured by $(x, \nu, \Pi)$. If \Cref{alg:local} returns $S$ as a $k$-shredder, we are immediately done by lines \ref{alg:unbalanced:call-local}-\ref{alg:unbalanced:add-local}. So let us assume that $S$ is reported as an unverified set. We have that $\frac{1}{2}\nu < \vol(\R) \leq \nu$. If $S$ is low-degree, we have that there exists a vertex $s \in S$ such that $\deg(s) \leq \nu$. By \Cref{lem:low-deg-aux}, \Cref{alg:low-deg} will return $\TRU$ on input $(x, \nu, \Pi, S)$. Then, $S$ is added to $\L$ by line \ref{alg:unbalanced:add-low}. Otherwise, suppose that $S$ is high-degree. Then, $(x, \nu, \Pi, S)$ is added to the set $\U$ by line \ref{alg:unbalanced:add-U}. By \Cref{lem:high-deg}, \Cref{alg:high-deg} will return a list of high-degree $k$-shredders containing $S$ on line \ref{alg:unbalanced:call-high} with probability $1 - n^{-100}$. Hence, $S \in \L$ by line \ref{alg:unbalanced:add-high} with probability $1 - n^{-100}$.
\end{proof}

\begin{lem} \label{lem:unbalanced-list-only-shredders}
    Every set $S \in \L$ at the end of the algorithm is a $k$-shredder.
\end{lem}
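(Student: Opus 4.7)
The plan is to observe that \Cref{alg:unbalanced} adds sets to $\L$ in exactly three places, and to appeal to the validity guarantees already established for the three subroutines involved. Concretely, the sites are: (i) line \ref{alg:unbalanced:add-local}, where $\L_{local}$ is the first component of the output of \Cref{alg:local}; (ii) line \ref{alg:unbalanced:add-low}, where the unverified set $U$ is added provided \Cref{alg:low-deg} returned $\TRU$ on input $(x,\nu,\Pi,U)$; and (iii) line \ref{alg:unbalanced:add-high}, where $\L_{high-degree}$ is the output of \Cref{alg:high-deg} on the accumulated list $\U$. It therefore suffices to verify that each of these three sources contains only $k$-shredders.

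For site (i), \Cref{lem:local:validity} (subsumed in the statement of \Cref{lem:local}) directly gives that every set in $\L_{local}$ is a $k$-shredder, so every set added to $\L$ at line \ref{alg:unbalanced:add-local} is a $k$-shredder. For site (ii), the correctness direction of \Cref{lem:low-deg-aux} (proved in \Cref{lem:low-deg-aux:correctness}) states that \Cref{alg:low-deg} returns $\TRU$ only when its input $U$ really is a $k$-shredder; hence the $U$ appended on line \ref{alg:unbalanced:add-low} is a $k$-shredder. For site (iii), the second clause of \Cref{lem:high-deg-correctness} (which is the validity half of \Cref{lem:high-deg}) asserts that every set in the list returned by \Cref{alg:high-deg} is a $k$-shredder, so every set added to $\L$ on line \ref{alg:unbalanced:add-high} is a $k$-shredder.

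Taking the union of these three observations over all iterations of the outer loops, every element ever inserted into $\L$ is a $k$-shredder, and since sets are never removed from $\L$, the same holds at termination. There is no genuine obstacle here: the lemma is a short bookkeeping argument that bundles the validity guarantees of \Cref{lem:local}, \Cref{lem:low-deg-aux}, and \Cref{lem:high-deg}, which together cover the only three code paths that modify $\L$.
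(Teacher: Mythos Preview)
Your proof is correct and matches the paper's own argument essentially line for line: both identify the three insertion sites (lines \ref{alg:unbalanced:add-local}, \ref{alg:unbalanced:add-low}, \ref{alg:unbalanced:add-high}) and invoke \Cref{lem:local}, \Cref{lem:low-deg-aux}, and \Cref{lem:high-deg} respectively to certify that each site contributes only genuine $k$-shredders. The only difference is cosmetic---you spell out the specific sub-lemmas (\Cref{lem:local:validity}, \Cref{lem:low-deg-aux:correctness}, \Cref{lem:high-deg-correctness}) where the paper cites the wrapper lemmas directly.
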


\begin{proof}
    There are only three lines of the algorithm where we add sets to $\L$: \ref{alg:unbalanced:call-local}, \ref{alg:unbalanced:add-low}, \ref{alg:unbalanced:add-high}. Line \ref{alg:unbalanced:call-local} only adds $k$-shredders due to \Cref{lem:local}. Line \ref{alg:unbalanced:add-low} only adds $k$-shredders due to \Cref{lem:low-deg-aux}. Lastly, line \ref{alg:unbalanced:add-high} only adds $k$-shredders due to \Cref{lem:high-deg}.
\end{proof}

\begin{lem} \label{lem:unbalanced-time}
    \Cref{alg:unbalanced} runs in $\O(k^4 m \log^4 n)$ time.
\end{lem}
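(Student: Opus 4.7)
The plan is a direct accounting of the costs of each line of \Cref{alg:unbalanced}, combined with two geometric-series identities over the outer parameter $\nu = 2^i$. First, I would handle the one-time initialization on line~\ref{alg:unbalanced:init-oracle}: by \Cref{thm:pairwise-conn-oracle}, setting up Kosinas's connectivity oracle costs $\O(k m \log n)$, which is already dominated by the target bound. The final call to \Cref{alg:high-deg} on line~\ref{alg:unbalanced:call-high} costs $\O(k^2 m \log^2 n)$ by \Cref{lem:high-deg}, again dominated. So the real work is to bound the triple-nested loop.

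Next I would bound the cost of one execution of the innermost body (lines~\ref{alg:unbalanced:LVC}--\ref{alg:unbalanced:add-U}) for a fixed $\nu$. The call to $\LVC(x,\nu)$ takes $\O(k^2\nu)$ time by \Cref{thm:LocalVC}; the call to \Cref{alg:local} takes $\O(k^2 \nu \log \nu)$ time by \Cref{lem:local:time}; the call to \Cref{alg:low-deg} takes $\O(k^4 \log n + k\nu)$ time by \Cref{lem:low-deg-aux:time}. Summing, one execution of the body costs $\O(k^2 \nu \log n + k^4 \log n)$. The inner $300\log n$ boost loop on line~\ref{alg:unbalanced:boost-LVC} multiplies this by $\O(\log n)$, and the sampling loop on line~\ref{alg:unbalanced:num-samples} contributes a further factor of $400(m/\nu)\log n$. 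So the total cost incurred for a single value of $\nu$ is
\begin{equation*}
    \O\!\left( \frac{m}{\nu} \log^2 n \cdot \left( k^2 \nu \log n + k^4 \log n \right) \right)
    \;=\; \O\!\left( k^2 m \log^3 n \;+\; \frac{k^4 m \log^3 n}{\nu} \right).
\end{equation*}

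Finally I would sum over $i = 0, 1, \dots, \lceil \log(m/k)\rceil$, i.e.\ over $\O(\log n)$ values of $\nu$. The first term contributes a flat $k^2 m \log^3 n$ per iteration, giving $\O(k^2 m \log^4 n)$ after summing. The second term is a geometric series in $1/\nu$, so $\sum_{\nu} k^4 m \log^3 n / \nu = \O(k^4 m \log^3 n)$ (dominated by the smallest $\nu = 1$). Both of these, as well as the initialization and the call to \Cref{alg:high-deg}, are bounded by $\O(k^4 m \log^4 n)$, which is the claimed runtime. The only mildly subtle step is correctly identifying which factor of $\log n$ and which factor of $k$ go where; the rest is a routine geometric-series calculation, so I would not expect any real obstacle.
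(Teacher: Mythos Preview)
Your proposal is correct and follows essentially the same accounting as the paper's proof: bound the per-iteration cost of the innermost body using \Cref{thm:LocalVC}, \Cref{lem:local:time}, and \Cref{lem:low-deg-aux:time}, multiply by the two nested loop counts, and sum over the $\O(\log n)$ values of $\nu$. The only cosmetic difference is that you treat the $k^4 m (\log^3 n)/\nu$ term as a geometric series (giving $\O(k^4 m \log^3 n)$) whereas the paper more crudely bounds each summand by $\O(k^4 m \log^3 n)$ before summing; both routes land at $\O(k^4 m \log^4 n)$.
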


\begin{proof}
    Firstly, line \ref{alg:unbalanced:init-oracle} runs in $\O(km \log n)$ time by \Cref{thm:pairwise-conn-oracle}. We show that lines \ref{alg:unbalanced:LVC}-\ref{alg:unbalanced:add-U} run in $\O(k^4 \log n + k^2 \nu \log \nu)$ time. Line \ref{alg:unbalanced:LVC} runs in $\O(k^2 \nu)$ time according to \Cref{thm:LocalVC}. Line \ref{alg:unbalanced:call-local} (calling \Cref{alg:local}) runs in $\O(k^2 \nu \log \nu)$ time according to \Cref{lem:local}. Line \ref{alg:unbalanced:call-low} (calling \Cref{alg:low-deg}) runs in $\O(k^4 \log n + k\nu)$ time according to \Cref{lem:low-deg-aux}. The other lines can be implemented in $\O(k)$ time using hashing. Now, we bound the total run time of lines \ref{alg:unbalanced:nu-loop}-\ref{alg:unbalanced:add-U} as
    \begin{align*}
        & \sum_{\nu} 400 \frac{m}{\nu} \log n \cdot 300 \log n \cdot \O(k^4 \log n + k^2 \nu \log \nu) \\
        = & \sum_{i=0}^{\lceil \log (\frac{m}{k}) \rceil} \O \left( k^4 \frac{m}{\nu} \log^3 n + k^2 m \log^2 n \log \nu \right) \\
        = & \sum_{i=0}^{\lceil \log (\frac{m}{k}) \rceil} \O(k^4 m \log^3 n) \\
        = & \left( \left\lceil \log \left( \frac{m}{k} \right) \right\rceil + 1 \right) \cdot \O(k^4 m \log^3 n) \\
        = & \: \O(\log m) \cdot \O(k^4 m \log^3 n) \\
        = & \: \O(k^4 m \log^4 n).
    \end{align*}
    Line \ref{alg:unbalanced:call-high} (calling \Cref{alg:high-deg}) runs in $\O(k^2 m \log^2 n)$ time according to \Cref{lem:high-deg} and does not change the overall time complexity.
\end{proof}

\begin{proof}[Proof of \Cref{lem:unbalanced}]
    Let $S$ be an unbalanced $k$-shredder. We have that $S$ is captured by \Cref{alg:unbalanced} with probability $1 - n^{-99}$ by \Cref{lem:unbalanced-capture}. By \Cref{lem:unbalanced-list-all}, if $S$ is captured and low-degree, it is put in $\L$ with probability $1$. Otherwise, if $S$ is high-degree, it is put in $\L$ with probability $1 - n^{-100}$. Therefore, $S \in \L$ with probability at least $(1 - n^{-99}) \cdot (1 - n^{-100}) \geq 1 - n^{-98}$ for all $n \geq 2$. Furthermore, every set in $\L$ is a $k$-shredder by \Cref{lem:unbalanced-list-only-shredders}. Finally, \Cref{alg:unbalanced} runs in $\O(k^4 m \log^4 n)$ time by \Cref{lem:unbalanced-time}.
\end{proof}

\section{Listing All \fmt{$k$}--Shredders} \label{sec:all-k-shredders}
At last, we are ready to present the algorithm for listing all $k$-shredders.

\begin{lem} \label{lem:aks}
    Let $G = (V, E)$ be an $n$-vertex $m$-edge undirected graph with vertex connectivity $k$. There exists a randomized algorithm that takes $G$ as input and correctly lists all $k$-shredders of $G$ with probability $1 - n^{-97}$ in $\O(k^4m\log^4n)$ time.
\end{lem}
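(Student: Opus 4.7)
The plan is to simply run the two subroutines already established and take the union of their outputs. By \Cref{def:balanced}, every $k$-shredder $S$ of $G$ is either balanced or unbalanced, so combining the algorithm from \Cref{lem:balanced} with \Cref{alg:unbalanced} (whose guarantees come from \Cref{lem:unbalanced}) is sufficient to catch every $k$-shredder. Both subroutines are guaranteed to output only $k$-shredders, so no additional filtering is required.

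More precisely, I would let $\L_b$ be the list returned by the balanced algorithm of \Cref{lem:balanced} and $\L_u$ the list returned by \Cref{alg:unbalanced}, then return $\L = \L_b \cup \L_u$. For correctness, fix an arbitrary $k$-shredder $S$. If $S$ is balanced, then $S \in \L_b$ with probability at least $1 - n^{-100}$; if $S$ is unbalanced, then $S \in \L_u$ with probability at least $1 - n^{-98}$. Either way, $S \in \L$ with probability at least $1 - n^{-98}$. By the result of \cite{Jor99} cited in the introduction, the number of $k$-shredders of $G$ is at most $n$, so a union bound over all of them gives that every $k$-shredder appears in $\L$ with probability at least $1 - n \cdot n^{-98} = 1 - n^{-97}$. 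Since $\L_b$ consists only of $k$-shredders (\Cref{lem:balanced}) and $\L_u$ consists only of $k$-shredders (\Cref{lem:unbalanced-list-only-shredders}), the union $\L$ contains only $k$-shredders.

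For the runtime, the balanced subroutine takes $\O(k^2 m \log n)$ time by \Cref{lem:balanced} and the unbalanced subroutine takes $\O(k^4 m \log^4 n)$ time by \Cref{lem:unbalanced}. The former is dominated by the latter, yielding the claimed $\O(k^4 m \log^4 n)$ total runtime.

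There is no real obstacle here; this lemma is essentially a bookkeeping step that combines the two cases whose hard work has already been done in \Cref{sec:balanced,sec:unbalanced}. The only subtlety worth being explicit about is that the union bound over shredders is valid because the bound of $n$ on the total number of $k$-shredders is unconditional, so the $1 - n^{-97}$ success probability holds simultaneously for all of them.
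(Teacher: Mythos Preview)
Your proposal is correct and follows essentially the same approach as the paper: run the balanced and unbalanced subroutines, take the union, argue per-shredder success probability via casework, apply the union bound using Jord\'an's bound of at most $n$ shredders, and sum the two running times. The paper organizes this into three short sub-lemmas (per-shredder correctness, union bound, runtime) but the content is identical to what you wrote.
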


Any $k$-shredder can be classified as either balanced or unbalanced. We have not explicitly presented an algorithm for listing balanced $k$-shredders, but it can be found in the appendix (see \Cref{alg:balanced}). Combining the two algorithms, we can lists balanced and unbalanced $k$-shredders by calling them sequentially.

\begin{algorithm}[!ht]
\caption{Listing all $k$-shredders of a graph} \label{alg:all}
\begin{algorithmic}[1]
    \Require{$G = (V, E)$ - an undirected $k$-vertex-connected graph}
    \Ensure{$\L$ - a list of all $k$-shredders of $G$}

    \State $\L \leftarrow \varnothing$
    \item[]
    
    \State $\L_{balanced} \leftarrow$ call \Cref{alg:balanced} on input $G$
    \State $\L \leftarrow \L \cup \L_{balanced}$ \label{alg:all:add-balanced}

    \item[]
    \State $\L_{unbalanced} \leftarrow$ call \Cref{alg:unbalanced} on input $G$
    \State $\L \leftarrow \L \cup \L_{unbalanced}$ \label{alg:all:add-unbalanced}

    \item[]
    \State \textbf{return} $\L$
\end{algorithmic}
\end{algorithm}

\begin{lem} \label{lem:aks-correctness}
    Let $S$ be a $k$-shredder. Then, $S$ is returned by \Cref{alg:all} with probability $1 - n^{-98}$. Furthermore, \Cref{alg:all} only returns $k$-shredders.
\end{lem}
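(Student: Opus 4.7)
The proof is essentially a direct case analysis combining the two main lemmas assembled in the previous sections, since \Cref{alg:all} is nothing more than the concatenation of the outputs of \Cref{alg:balanced} and \Cref{alg:unbalanced}. My plan is to split into the balanced and unbalanced cases as defined by \Cref{def:balanced}, apply the corresponding guarantee to each, and then take a union bound (which here is really just taking the worse of the two failure probabilities, since each fixed $k$-shredder falls into exactly one case).

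First, observe that every $k$-shredder $S$ of $G$ admits a partition $(C, \R)$, and by \Cref{def:balanced} it is either balanced ($\vol(\R) \geq m/k$) or unbalanced ($\vol(\R) < m/k$). If $S$ is balanced, then \Cref{lem:balanced} applied to the call on line of \Cref{alg:all} that produces $\L_{balanced}$ tells us that $S \in \L_{balanced}$ with probability at least $1 - n^{-100}$. If instead $S$ is unbalanced, then \Cref{lem:unbalanced} applied to the call producing $\L_{unbalanced}$ tells us that $S \in \L_{unbalanced}$ with probability at least $1 - n^{-98}$. Since line \ref{alg:all:add-balanced} and line \ref{alg:all:add-unbalanced} ensure $\L_{balanced} \cup \L_{unbalanced} \subseteq \L$, we conclude that $S \in \L$ with probability at least $\min(1 - n^{-100}, 1 - n^{-98}) = 1 - n^{-98}$. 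This takes care of the first claim.

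For the second claim, note that $\L$ is populated only through lines \ref{alg:all:add-balanced} and \ref{alg:all:add-unbalanced}. By \Cref{lem:balanced}, every set in $\L_{balanced}$ is a $k$-shredder of $G$, and by \Cref{lem:unbalanced} (via \Cref{lem:unbalanced-list-only-shredders}), every set in $\L_{unbalanced}$ is a $k$-shredder of $G$. Therefore every set in $\L$ is a $k$-shredder, as desired.

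There is no genuine obstacle in this proof; the real work was done in \Cref{sec:balanced}, \Cref{sec:local}, \Cref{sec:unverified}, and \Cref{sec:unbalanced}. The only subtlety worth flagging is making sure to state the probability bound per \emph{fixed} $k$-shredder, rather than attempting a union bound over all $k$-shredders; the latter would require a factor-of-$n$ loss, which is already absorbed in the $n^{-98}$ success bound inherited from \Cref{lem:unbalanced}.
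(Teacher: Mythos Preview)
Your proof is correct and follows essentially the same approach as the paper: a two-case split on balanced versus unbalanced (via \Cref{def:balanced}), invoking \Cref{lem:balanced} and \Cref{lem:unbalanced} respectively, and taking the weaker of the two success probabilities. You also make explicit the ``only returns $k$-shredders'' clause, which the paper's proof leaves implicit; that addition is correct and welcome.
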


\begin{proof}
    We use proof by casework. If $S$ is balanced, then \Cref{lem:balanced} implies $S$ is listed by \Cref{alg:balanced} with probability $1 - n^{-100}$ (line \ref{alg:all:add-balanced}). Otherwise, \Cref{lem:unbalanced} implies $S$ is listed by \Cref{alg:unbalanced} with probability $1 - n^{-98}$ (line \ref{alg:all:add-unbalanced}). Hence, with probability $1 - n^{-98}$, $S$ is returned by \Cref{alg:all}.
\end{proof}

\begin{lem} \label{lem:aks-union-bound}
    All $k$-shredders are listed by \Cref{alg:all} with probability $1 - n^{-97}$.
\end{lem}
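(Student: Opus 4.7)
The plan is to apply a union bound on top of \Cref{lem:aks-correctness}. First, I would recall the structural fact cited in the introduction (due to Jordán \cite{Jor99}) that any $k$-vertex-connected graph contains at most $n$ many $k$-shredders. Let $\mathcal{S}^{*}$ denote the set of all $k$-shredders of $G$; so $|\mathcal{S}^{*}| \leq n$.

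Next, for each individual $k$-shredder $S \in \mathcal{S}^{*}$, \Cref{lem:aks-correctness} guarantees that $S$ is returned by \Cref{alg:all} with probability at least $1 - n^{-98}$. Applying a union bound over all $S \in \mathcal{S}^{*}$, the probability that some $k$-shredder is missed is at most $|\mathcal{S}^{*}| \cdot n^{-98} \leq n \cdot n^{-98} = n^{-97}$. Equivalently, with probability at least $1 - n^{-97}$, every $k$-shredder of $G$ appears in the returned list $\mathcal{L}$. Combined with the second half of \Cref{lem:aks-correctness}, which ensures that every element of $\mathcal{L}$ is genuinely a $k$-shredder, this yields the claimed bound.

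There is no real obstacle here beyond invoking the correct cardinality bound on the number of $k$-shredders; the whole argument is a one-line union bound. The only subtlety worth double-checking is that the failure events across different shredders need not be independent, but a union bound does not require independence, so the calculation goes through as stated.
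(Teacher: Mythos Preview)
Your proof is correct and follows essentially the same approach as the paper: invoke Jordán's bound that there are at most $n$ many $k$-shredders, apply \Cref{lem:aks-correctness} to get a per-shredder failure probability of $n^{-98}$, and union-bound to obtain $n^{-97}$. The paper's proof is the same one-line argument (citing \cite{Jor95} rather than \cite{Jor99} for the cardinality bound), and your remark about independence being unnecessary is a harmless elaboration.
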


\begin{proof}
    Jord\'an proved that there are at most $n$ $k$-shredders in \cite{Jor95}. \Cref{lem:aks-correctness} implies that each $k$-shredder is listed with probability $1 - n^{-98}$. This means that the probability of any one $k$-shredder fails to be listed is at most $n^{-98}$. Using the union bound, we have that the probability that any $k$-shredder fails to be listed is at most $n \cdot n^{-98} = n^{-97}$. Therefore, with probability $1 - n^{-97}$, all $k$-shredders are listed.
\end{proof}

\begin{lem} \label{lem:aks-time}
    \Cref{alg:all} runs in $\O(k^4 m \log^4 n)$ time.
\end{lem}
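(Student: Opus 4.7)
The plan is to observe that Algorithm \ref{alg:all} is essentially a two-line wrapper: it calls Algorithm \ref{alg:balanced} to list balanced $k$-shredders and then calls Algorithm \ref{alg:unbalanced} to list unbalanced $k$-shredders, taking the union of the two outputs. So the total running time decomposes as the sum of the two subroutine runtimes plus an $\O(n)$ overhead for merging the two lists (using hashing on the shredder representations, since there are at most $n$ shredders total by Jord\'an's bound).

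First, I would invoke \Cref{lem:balanced}, which bounds the cost of Algorithm \ref{alg:balanced} by $\O(k^2 m \log n)$. Second, I would invoke \Cref{lem:unbalanced}, which bounds the cost of Algorithm \ref{alg:unbalanced} by $\O(k^4 m \log^4 n)$. The union and return steps on lines \ref{alg:all:add-balanced} and \ref{alg:all:add-unbalanced} can each be implemented in $\O(kn)$ time by hashing the $k$-element sets, which is dominated by the $\O(k^4 m \log^4 n)$ cost of the unbalanced subroutine since $m \geq n-1$ in any connected graph with $k \geq 1$.

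Adding these contributions gives
\begin{equation*}
    \O(k^2 m \log n) + \O(k^4 m \log^4 n) + \O(kn) = \O(k^4 m \log^4 n),
\end{equation*}
as claimed. There is no real obstacle here, since the heavy lifting (both the analysis of local flow sampling and of the high-degree/low-degree casework) has already been absorbed into \Cref{lem:unbalanced}; this lemma is purely bookkeeping.
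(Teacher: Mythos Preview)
Your proposal is correct and matches the paper's own proof essentially line for line: invoke \Cref{lem:balanced} for the $\O(k^2 m \log n)$ balanced cost, invoke \Cref{lem:unbalanced} for the $\O(k^4 m \log^4 n)$ unbalanced cost, and observe the latter dominates. The only addition you make is the explicit $\O(kn)$ bookkeeping for the union step, which the paper leaves implicit.
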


\begin{proof}
    Line \ref{alg:all:add-balanced} runs in $\O(k^2 m \log n)$ time by \Cref{lem:balanced}. Line \ref{alg:all:add-unbalanced} runs in $\O(k^4 m \log^4 n)$ by \Cref{lem:unbalanced}. The running time of the last step dominates.
\end{proof}

\begin{proof}[Proof of \Cref{lem:aks}]
    The correctness of \Cref{alg:all} is given by \Cref{lem:aks-correctness,lem:aks-union-bound}. The time complexity is given by \Cref{lem:aks-time}.
\end{proof}

We have now shown and proved the existence of a randomized algorithm that improves upon the runtime of $\AKS(\cdot)$. The final step is to include the preprocessing step of sparsifying the graph.

\thmaks*

\begin{proof}
    As stated earlier, the algorithm developed in \cite{NI92} can be used to preprocess $G$ such that all $k$-shredders are preserved and $m$ is reduced to at most $\O(kn)$ edges. This follows from \cite[Proposition 3.3]{CT99}. We can call the sparsifying algorithm and then \Cref{alg:all}. The correctness of this procedure directly follows from \cite[Proposition 3.3]{CT99} and \Cref{lem:aks}.
    For time complexity, the sparsifying algorithm takes $\O(m)$ time. Afterwards, we can use \Cref{alg:all} to list all $k$-shredders. This takes $\O(k^4m\log^4n) = \O(k^4(kn)\log^4n) = \O(k^5n\log^4n)$ time. In total, both steps take $\O(m + k^5n\log^4n)$ time. 
\end{proof}

\section{Most Shattering Minimum Vertex Cut} \label{sec:most-shattering}
We have presented a randomized near-linear time algorithm that lists all $k$-shredders with high probability, but we have not yet shown how to count the number of components each $k$-shredder separates. Counting components proves to be a bit trickier than it was in \cite{CT99}. One may hope for the existence of an oracle that takes a set $S$ of $k$ vertices as input and counts the number of components in $G \sm S$ in $\tO(\poly(k))$ time. If there exists such an oracle, we are immediately finished because there are at most $n$ $k$-shredders. We can simply query such an oracle for each $k$-shredder to spend $\tO(\poly(k) \cdot n)$ time in total. Unfortunately, such a data structure cannot exist assuming SETH, as shown in \cite[Theorem 8.9]{LS22}.

What we will do is modify our algorithms such that whenever we list a $k$-shredder $S$, we also return the number of components of $G \sm S$. We will do this for \Cref{alg:local}, \Cref{alg:low-deg}, and \Cref{alg:high-deg}. We do not need to modify \Cref{alg:balanced} because it lists $k$-shredders only by calling $\SHR(\cdot, \cdot)$ as a subroutine, and $\SHR(\cdot, \cdot)$ already supports counting the number of components as proved in \cite{CT99}. Note that if there are no $k$-shredders of $G$, we can simply return a $k$-separator of $G$. This can be done in $\tO(m + k^3n)$ time as shown in \cite{FNSYY20}.

\subsection{Counting Components in the Local Algorithm}
\begin{lem} \label{lem:local-mod}
    Suppose that \Cref{alg:local} returns $(\L, U)$ on input $(x, \nu, \Pi)$. There exists a modified version of \Cref{alg:local} such that for each $k$-shredder $S \in \L$ that is captured by $(x, \nu, \Pi)$, the modified version also computes the number of components of $G \sm S$. The modification requires $\O(k^2 \nu)$ additional time, subsumed by the running time of \Cref{alg:local}.
\end{lem}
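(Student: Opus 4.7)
The plan is to modify Algorithm~\ref{alg:local} so that, for each candidate attachment set $A$ it encounters, it maintains a counter $c_A$ recording the number of distinct bridges $\Gamma$ of $\Pi$ with $\Gamma$ a component of $G \sm \Pi$ (not an edge bridge) and $N(\Gamma) = A$. The key structural claim is that for any $k$-shredder $S \in \L$ captured by $(x,\nu,\Pi)$ with partition $(C,\R)$ and $x \in Q_x \in \R$, the number of components of $G \sm S$ equals $c_S + 2$, where the extra two accounts for $C$ and $Q_x$.

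This reduces to exhibiting a bijection between the components of $\R \sm \{Q_x\}$ and the bridges $\Gamma$ of $\Pi$ that are components of $G \sm \Pi$ with $N(\Gamma) = S$. For the forward direction, \Cref{lem:preserve-components} gives that each $R \in \R \sm \{Q_x\}$ is a component of $G \sm \Pi$, hence itself a bridge; a standard minimum vertex cut argument then forces $N(R) = S$, since if some $s \in S$ were not adjacent to $R$ then $S \sm \{s\}$ would still separate $R$ from the rest of $G$, contradicting $|S| = k$. For the reverse direction, any bridge $\Gamma$ with $N(\Gamma) = S$ has all its external neighbors inside $S$, so upon removing $S$ it becomes an isolated connected set and is therefore a component of $G \sm S$; since $\Gamma \cap \Pi = \emptyset$, $\Gamma$ contains neither $x$ nor any path endpoint, which places $\Gamma$ in $\R \sm \{Q_x\}$. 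Distinct bridges are disjoint by definition, so the bijection is clean.

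Next I need to verify that Algorithm~\ref{alg:local} actually discovers every such bridge whenever $S \in \L$. The proof of \Cref{lem:local:identify} shows that in this case there is some path $\pi \in \Pi$ for which $\delta_\pi(\pi(S)) < \delta_\pi(\pi(U))$, meaning the BFS initiated at vertex $\pi(S)$ finished without triggering early termination and thus explored every bridge adjacent to $\pi(S)$ that had not been explored already. Since every bridge $\Gamma$ with $N(\Gamma) = S$ is adjacent to $\pi(S) \in S$, it is either explored during this completed BFS or was already explored in an earlier BFS initiated at some other vertex of $S$ on a previously processed path. Either way, each such bridge is discovered in exactly one BFS (the first one to reach it), so $c_S$ increments exactly $|\R \sm \{Q_x\}|$ times and the reported value $c_S + 2$ is correct.

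For the actual modification, I add a single increment of $c_A$ on line~\ref{alg:local:mark-candidate}, restricted to bridges that are vertex components rather than single edges (this restriction is automatic for $k \ge 3$ since edge bridges have attachment size $2$, and is an explicit filter when $k = 2$). Each counter update takes $O(k)$ time to hash the $k$-element attachment set, and the total number of bridges explored is at most $O(k\nu)$ because every bridge accounts for at least one explored edge and at most $k\nu$ edges are explored across all paths. This yields an $O(k^2 \nu)$ overhead, which is subsumed by the $O(k^2 \nu \log \nu)$ runtime of Algorithm~\ref{alg:local}. The main obstacle is the bijection argument, in particular the min-cut adjacency step that rules out components of $\R \sm \{Q_x\}$ escaping the count; once that structural claim is pinned down, the implementation is pure bookkeeping.
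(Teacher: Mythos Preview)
Your approach is essentially the paper's: maintain a per-candidate counter, argue that the components of $\R \sm \{Q_x\}$ are exactly the bridges $\Gamma$ of $\Pi$ that are components of $G\sm\Pi$ with $N(\Gamma)=S$, and output $c_S+2$. Your bijection argument is in fact more careful than the paper's, which asserts $M[S]=|\R\sm\{Q_x\}|$ without spelling out the reverse inclusion.

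There is one genuine slip in your implementation. You claim each bridge with $N(\Gamma)=S$ is ``discovered in exactly one BFS (the first one to reach it)'', but Algorithm~\ref{alg:local} resets the explored-edge/vertex state at the start of every path iteration (the line immediately following line~\ref{alg:local:all-paths}). A bridge $\Gamma$ with $N(\Gamma)=S$ is attached to $\pi(S)$ on \emph{every} path $\pi\in\Pi$, so it will be rediscovered on each path for which the walk reaches $\pi(S)$ without early termination; incrementing $c_S$ inline on line~\ref{alg:local:mark-candidate} therefore overcounts by exactly the number of such paths. The paper sidesteps this by, at the moment $S$ is identified via some vertex $s\in S$, counting all bridges attached to $s$ with $N(\Gamma)=S$ in one shot (this count is independent of which $s$ is chosen, so overwriting rather than accumulating is safe). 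An equally easy fix on your side is to keep a global hash set of already-counted bridge identifiers, or to increment $c_S$ only on the first path iteration in which $S$ appears. Either way the $\O(k^2\nu)$ bound stands.
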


\begin{proof}
    Suppose \Cref{alg:local} identifies a candidate $k$-shredder $S$ on line \ref{alg:local:mark-candidate}. By line \ref{alg:local:not-term-early}, all bridges of $\Pi$ attached to some vertex $s \in S$ must have been explored. The modification is as follows. We can keep a dictionary $M$ whose keys are candidate $k$-shredders (in $k$-tuple form) mapped to nonnegative integers initialized to zero. For each bridge $\Gamma$ of $\Pi$ attached to $s$, if $\Gamma$ is a component of $G \sm \Pi$ such that $N(\Gamma) = S$, then we increment $M[S]$. We prove that if $S \in \L$ is a $k$-shredder captured by $(x, \nu, \Pi)$, then $M[S] + 2$ is the number of components of $G \sm S$.

    Let $S \in \L$ be a $k$-shredder with partition $(C, \R)$ captured by $(x, \nu, \Pi)$. Let $Q_x$ denote the component in $\R$ containing $x$. From \Cref{lem:preserve-components}, we have that every component in $\R \sm \{Q_x\}$ is a component of $G \sm \Pi$. This means every component in $\R \sm \{Q_x\}$ is a bridge of $\Pi$. Furthermore, if we fix a vertex $s \in S$, notice that all components in $\R \sm \{Q_x\}$ are bridges of $\Pi$ attached to $s$. These components are special; each component of $R \sm \{Q_x\}$ is a bridge $\Gamma$ of $\Pi$ that satisfies $N(\Gamma) = S$. What our modification does is to count the number of bridges $\Gamma$ attached to $s$ that satisfy $N(\Gamma) = S$. This means that after processing all bridges attached to $s$, we have $M[S] = |\R \sm \{Q_x\}|$. There are only two components of $G \sm S$ that were not counted: $Q_x$ and $C$. Hence, there are $M[S] + 2$ components of $G \sm S$.

    The time complexity for this operation is straightforward. For each candidate $k$-shredder $S$, we spend $\O(k)$ time to hash and store it in $M$. For each bridge $\Gamma$ of $\Pi$, we need to check whether $N(\Gamma)$ is a key in the dictionary. This can be done by hashing $N(\Gamma)$ in $\O(k)$ time. As proved in \Cref{lem:local:time}, there are at most $\O(\nu)$ candidate $k$-shredders, and at most $\O(k \nu)$ explored bridges. Hence, hashing all candidate $k$-shredders and bridges can be done in $\O(k) \cdot \O(\nu + k \nu) = \O(k^2 \nu)$ time. We conclude that these modifications impose an additional time cost of $\O(k^2 \nu)$, which is subsumed by the runtime of the original algorithm.
\end{proof}

\subsection{Counting Components for Low-Degree Unverified Sets}
If \Cref{alg:low-deg} returned $\TRU$ on input $(x, \nu, \Pi, U)$, then $U$ is a $k$-shredder with a vertex $u \in U$ such that $\deg(u) \leq \nu$. Notice that all components of $G \sm U$ must contain a vertex that is adjacent to $u$. Immediately, we have that the number of components of $G \sm U$ is upper bounded by $\nu$, as $\deg(u) \leq \nu$. However, looking at two arbitrary edges $(u, v_1), (u, v_2)$ adjacent to $u$, it is unclear whether $v_1$ and $v_2$ are in the same component of $G \sm U$. Although we can query whether $v_1$ and $v_2$ are connected in $G \sm U$ in $\O(k)$ time, we cannot afford to make these queries for all pairs of vertices in $N(u)$. Such a procedure would require us to make $\O(\nu^2)$ many queries, which is too costly. 

To solve this issue, we must explain some context regarding the pairwise connectivity oracle developed in \cite{Kos23}. Essentially, the oracle is a tree $T$ obtained by running a DFS traversal of $G$. It makes use of the following useful property of DFS-trees.

\begin{fact}[\cite{Tar72}] \label{fact:back-edges}
    Let $T$ be a DFS-tree of $G$, rooted arbitrarily. Then, for every edge $(u, v)$ of $G$, $u$ and $v$ are related as ancestor and descendant in $T$.
\end{fact}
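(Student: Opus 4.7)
The plan is to argue using DFS timestamps. For each vertex $w$, let $d(w)$ denote its discovery time and $f(w)$ its finishing time in the traversal that produced $T$. The standard parenthesis property of DFS says that the intervals $\{[d(w), f(w)]\}_{w \in V}$ are laminar, and that $w_2$ is a descendant of $w_1$ in $T$ if and only if $[d(w_2), f(w_2)] \subseteq [d(w_1), f(w_1)]$, equivalently $d(w_1) \le d(w_2) < f(w_2) \le f(w_1)$.

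I would then fix an arbitrary edge $(u, v) \in E$ and assume without loss of generality that $d(u) < d(v)$. The goal becomes showing that $v$ is a descendant of $u$ in $T$, which by the parenthesis property reduces to proving the single inequality $d(v) < f(u)$, i.e., that $v$ is discovered before $u$'s recursive DFS call returns. Suppose for contradiction that $d(v) > f(u)$. Since $u$'s DFS call examines every edge incident to $u$ before returning, the edge $(u, v)$ is considered at some moment in the time interval $[d(u), f(u)]$. At that moment $v$ is still undiscovered (as $d(v) > f(u)$), so the algorithm would immediately recurse on $v$ along this edge, making $v$ a tree child of $u$ and forcing $d(v) < f(u)$, which contradicts the assumption. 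Therefore $v$ lies in the subtree of $T$ rooted at $u$, so $u$ and $v$ are related as ancestor and descendant.

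No serious obstacle is expected: the parenthesis property of DFS is classical, and the remaining case analysis is a short, one-line contradiction. The only subtlety is that this is an undirected DFS, where each edge is examined from both of its endpoints, but this is handled cleanly by always reasoning from the endpoint with the earlier discovery time, which is why the WLOG reduction to $d(u) < d(v)$ is the right first move.
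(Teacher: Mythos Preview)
Your argument is correct and is the standard textbook proof via DFS timestamps and the parenthesis theorem. Note that the paper does not give its own proof of this fact; it simply cites Tarjan~\cite{Tar72} and uses the statement as a black box, so there is nothing in the paper to compare your approach against.
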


Edges that are omitted from $T$ are called \textit{back edges}, as its endpoints must be a vertex and one of its ancestors discovered earlier in the traversal. Let $U$ be a set of $k$ vertices. Given a component $C$ of $T \sm U$, we define $r_C$ as the root vertex of $C$. A component $C$ of $T \sm U$ is called an \textit{internal component} if $r_C$ is the ancestor of a vertex in $U$. Otherwise, $C$ is called a \textit{hanging subtree}. An important observation is that because $|U| = k$, there can be at most $k$ internal components. The oracle exploits this fact by reducing pairwise connectivity queries in $G \sm U$ to connectivity queries between the internal components of $T \sm U$. To be concrete, we prove two auxiliary lemmas to prepare us for counting the number of components of $G \sm U$.

\begin{lem} \label{lem:T-conn}
    Let $T$ be a DFS-tree of $G$ and $U$ be a vertex set. For a vertex $v \notin U$, let $C_v$ denote the component of $T \sm U$ containing $v$. Two vertices $x$ and $y$ in $V \sm U$ are connected in $G\setminus U$ if and only if one of the following is true.
    \begin{enumerate}
        \item $C_x = C_y$.
        \item There exists an internal component $C$ of $T \sm U$ that both $x$ and $y$ are connected to in $G \sm U$.
    \end{enumerate}
\end{lem}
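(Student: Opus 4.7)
The plan is to split into the two implications and reduce the interesting direction to a single structural statement about hanging subtrees. The backward direction is routine: if $C_x = C_y$ then $x$ and $y$ are joined by tree edges inside a single component of $T \sm U$, all of which survive in $G \sm U$; if both reach a common internal component $C$ in $G \sm U$, then transitivity, together with the fact that any two vertices of $C$ are tree-connected inside $G \sm U$, gives an $x \ue y$ path in $G \sm U$.

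For the forward direction I would first isolate the following key structural lemma: if $C'$ is a hanging subtree with root $r_{C'}$ and $(a, b) \in E$ satisfies $a \in C'$, $b \notin C'$, and $b \notin U$, then the component of $T \sm U$ containing $b$ is internal. \Cref{fact:back-edges} forces $a$ and $b$ to be in ancestor/descendant relation in $T$. Because $C'$ hangs, $r_{C'}$ has no descendant in $U$, so the entire subtree of $T$ rooted at $r_{C'}$ coincides with $C'$ as a vertex set; this rules out $b$ being a descendant of $a$ (else $b \in C'$), leaving $b$ as a strict ancestor of $r_{C'}$. Assuming $U$ is nonempty, $r_{C'}$ cannot be the root of $T$ (otherwise every $U$-vertex would be a descendant of $r_{C'}$, contradicting the hanging condition), so the parent $p$ of $r_{C'}$ in $T$ lies in $U$; since $b \notin U$, $b$ must be a strict ancestor of $p$, and therefore the root of $b$'s component is itself an ancestor of the $U$-vertex $p$, which by definition makes that component internal.

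With this lemma in hand, the forward direction follows by cases. Suppose $x, y$ are connected in $G \sm U$ with $C_x \ne C_y$. If $C_x$ is internal, take $C = C_x$: then $x \in C$ trivially reaches $C$, and $y$ reaches $C$ via its $G \sm U$ path to $x$; the case $C_y$ internal is symmetric. Otherwise both $C_x$ and $C_y$ are hanging; fix any $x \ue y$ path $P$ in $G \sm U$ and let $(a, b)$ be the first edge along $P$ with $a \in C_x$ and $b \notin C_x$. Since $b \notin U$, the structural lemma places $b$ in some internal component $C_b$, and then $x$ reaches $C_b$ through the prefix of $P$ ending at $b$ while $y$ reaches $C_b$ by concatenating the remainder of $P$ with a $G \sm U$ path back through $x$, so $C = C_b$ works. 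The main place I would be careful is the boundary conditions inside the structural lemma, in particular the degenerate cases $U = \varnothing$ (where $T \sm U = T$ is connected and only clause (1) applies) and $r_{C'}$ coinciding with the root of $T$.
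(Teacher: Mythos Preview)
Your proposal is correct and follows essentially the same route as the paper. The paper's proof is terser: it observes directly that no edge of $G$ can join two distinct hanging subtrees (since their vertices are never in ancestor/descendant relation), and concludes in one line that any $x\e y$ path in $G\setminus U$ with $C_x\ne C_y$ must visit an internal component. Your explicit structural lemma—that every non-$U$ edge leaving a hanging subtree lands in an internal component—is exactly the contrapositive of that observation, just unpacked more carefully with the parent-of-root argument; the casework you perform afterwards is the natural completion that the paper leaves implicit.
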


\begin{proof}
    We prove the forward direction (the backward direction is trivial). Suppose we are given vertices $x$ and $y$ and wish to return whether $x$ and $y$ are connected in $G \sm U$. If $C_x = C_y$, then $x$ and $y$ are trivially connected in $G \sm U$. Otherwise, there must be a path from $x$ to $y$ in $G \sm U$ using back edges. Since the endpoints of back edges are always related as ancestor and descendant, this rules out the existence of edges between distinct hanging subtrees of $T \sm U$. This is because two vertices in two distinct hanging subtrees cannot be related as ancestor and descendant. Hence, $x$ and $y$ must both be connected to an internal component of $T \sm U$.
\end{proof}

\begin{lem} \label{lem:internal-comps}
    Let $T$ be a DFS tree and let $U$ be a set of $k$ vertices. There exists an algorithm that takes $U$ as input and returns a list of vertices $\I$ that satisfies the following. For each internal component $C$ of $T \sm U$, there is exactly one vertex $v \in \I$ such that $v \in C$. Additionally, every vertex in $\I$ is in an internal component of $T \sm U$. The algorithm runs in time $\O(k^2)$.
\end{lem}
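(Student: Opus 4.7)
The plan is to return, for each internal component $C$ of $T \sm U$, its tree-root $r_C$ as the designated representative. Recall that $r_C \notin U$, that either $r_C$ is the root of $T$ or $\mathrm{parent}_T(r_C) \in U$, and that internality means some $u \in U$ is a strict descendant of $r_C$ in $T$. The crucial structural observation is that if we take the $U$-descendant of $r_C$ closest to $r_C$, then the $T$-path from $r_C$ down to it avoids $U$ except at the bottom endpoint, so the parent of this closest $U$-vertex already lies in $C$. Hence every internal component can be discovered by looking at a single $u \in U$ and jumping up to its nearest $U$-ancestor.

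Concretely, I would iterate over $u \in U$ and do the following. Skip $u$ if it is the root of $T$ or if $p := \mathrm{parent}_T(u) \in U$. Otherwise compute $a(p)$, the deepest $U$-ancestor of $p$, by testing each $u' \in U$ via the identity $\mathrm{LCA}(u',p) = u'$ and keeping the one of maximum depth, or setting $a(p) = \bot$ if none exists. Let $r$ be the root of $T$ when $a(p) = \bot$, and otherwise the ancestor of $p$ at depth $\mathrm{depth}(a(p)) + 1$, obtained by a single level-ancestor query. Insert $r$ into $\I$, then deduplicate $\I$ by hashing at the end. Each of the $k$ outer iterations does $O(k)$ constant-time LCA queries plus $O(1)$ extra work, giving $\O(k^2)$ total and dominating the $\O(k\log k)$ dedup.

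For correctness I would argue two halves. Soundness: any inserted $r$ lies on the $T$-path from $p \in T \sm U$ to its lowest $U$-ancestor (exclusive of that ancestor), so $r \notin U$, and since $u \in U$ is a descendant of $r$, the component of $T \sm U$ containing $r$ is internal with root $r$. Completeness: for any internal component $C$, pick the $U$-vertex $u_C$ in the $T$-subtree rooted at $r_C$ whose tree-distance to $r_C$ is smallest; by minimality the $T$-path from $r_C$ to $u_C$ lies entirely in $C$, so $p_C := \mathrm{parent}_T(u_C) \in C$. One then checks that $a(p_C) = \mathrm{parent}_T(r_C)$, because any strictly deeper $U$-ancestor of $p_C$ would have to sit on the $r_C$-to-$u_C$ path in $C$, contradicting $C \cap U = \emptyset$; hence the level-ancestor step recovers $r = r_C$. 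After dedup each internal component contributes exactly one vertex.

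The only real prerequisite — and the mildest of obstacles — is that $T$ be preprocessed for constant-time LCA and level-ancestor queries. Both are standard in $\O(n)$ preprocessing and can be folded into the setup of the oracle of \cite{Kos23}, so they do not affect the $\O(k^2)$ query bound claimed by the lemma.
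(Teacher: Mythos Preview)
Your proposal is correct and follows essentially the same idea as the paper's sketch (which defers to \cite[Section~3.3]{Kos23}): for each $u \in U$, locate its nearest $U$-ancestor via $O(k)$ pairwise ancestry tests, so that the internal component sandwiched between them can be read off, giving $\O(k^2)$ overall. The only difference is in the primitives --- the paper uses DFS pre-/post-order timestamps for constant-time ancestry tests and leaves the choice of representative implicit, whereas you use LCA queries for the ancestry tests and an additional level-ancestor query to output the canonical root $r_C$; both toolkits have $\O(n)$ preprocessing and $\O(1)$ queries, so the choices are interchangeable within the stated bound.
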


This lemma is proved and follows from \cite[Section 3.3]{Kos23}. The idea is to query for each pair $(u, v) \in U \times U$ whether $u$ is an ancestor or descendant of $v$. This can be done using a global pre-ordering and post-ordering of the vertices as a preprocessing step in $\O(n + m)$ time to support $\O(1)$ queries. If $u$ is an ancestor of $v$ and no other vertex in $U$ is a descendant of $u$ and an ancestor of $v$, then we say $u$ ($v$) is an immediate ancestor (descendant) of $v$ ($u$). The key observation is that the internal components of $T \sm U$ must live between direct ancestor/descendant pairs of $U$. Now we are ready to prove the following lemma.

\begin{lem} \label{lem:low-deg-mod}
    Suppose that \Cref{alg:low-deg} returns $\TRU$ on input $(x, \nu, \Pi, U)$. There exists a modified version of \Cref{alg:low-deg} that also computes the number of components of $G \sm U$. The modification requires $O(k^2\nu)$ additional time, subsumed by the running time of \Cref{alg:low-deg}.
\end{lem}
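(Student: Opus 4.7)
The core observation is that because $U$ has size $k$ in a graph of vertex connectivity $k$, it is a \emph{minimum} vertex cut; hence every vertex of $U$ is adjacent to every component of $G \sm U$, since dropping a non-adjacent vertex would give a cut of size $k-1$. In particular the low-degree vertex $u \in U$ has at least one neighbor in every component of $G \sm U$, so the number of components of $G \sm U$ equals the number of equivalence classes, under connectivity in $G \sm U$, among the at most $\deg(u) \le \nu$ neighbors of $u$ in $V \sm U$. The plan is to compute this count by piggybacking on the DFS-tree decomposition from \cite{Kos23} that already underlies the oracle in \Cref{thm:pairwise-conn-oracle}.

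First I would use \Cref{lem:internal-comps} to obtain a list $\I$ of representatives of the internal components of $T \sm U$ in $O(k^2)$ time (with $|\I| \le k$), and initialize a union-find whose elements will be these representatives together with labels for hanging subtrees as they are discovered. Then, for each neighbor $v \in N(u) \sm U$, I would issue $|\I| \le k$ connectivity queries $(v, r)$ in $G \sm U$ at $O(k)$ each, producing the ``signature'' of $v$: the set of internal components it reaches in $G \sm U$. By \Cref{lem:T-conn}, if the signature is nonempty then all of those internal components lie in a single component of $G \sm U$ and are unioned; if the signature is empty, then $v$'s component of $G \sm U$ is exactly the hanging subtree of $T \sm U$ containing $v$, which I identify by its root---the child of $v$'s deepest ancestor in $U$ on the path to $v$---computed in $O(k)$ time by testing each vertex of $U$ against $v$ with constant-time preorder/postorder ancestor queries (combined with a level-ancestor table built on $T$ during the oracle's preprocessing). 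That root is added to the union-find as a fresh label, or unioned with itself if already present. At the end, the number of equivalence classes in the union-find equals the number of components of $G \sm U$.

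The main obstacle is the correct and efficient bookkeeping for hanging subtrees, since two such neighbors must be identified precisely when they lie in the same subtree while neighbors in distinct hanging subtrees not sharing any internal component must remain apart. This is exactly where the structural fact from \cite{Kos23} encoded in \Cref{lem:T-conn} is crucial: hanging subtrees can communicate in $G \sm U$ only through internal components, so the hanging-subtree root is a valid canonical identifier for any neighbor with empty signature, and \Cref{lem:T-conn} then guarantees correctness of the union-find count directly. The total extra cost is $O(k^2)$ for computing $\I$, plus $O(k^2)$ per neighbor for the signature queries and $O(k)$ for the ancestor search, summing over the at most $\nu$ neighbors of $u$ to $O(k^2 \nu)$ additional time as claimed.
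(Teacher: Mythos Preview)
Your proposal is correct and takes essentially the same approach as the paper: both iterate over the at most $\nu$ neighbors of the low-degree vertex $u\in U$, invoke \Cref{lem:internal-comps} to get at most $k$ internal-component representatives, and use \Cref{lem:T-conn} together with $O(k)$ connectivity queries per neighbor to decide which neighbors lie in the same component of $G\setminus U$. The only difference is bookkeeping: the paper uses a direct mark-and-count scheme (mark the root $r_y$ of $y$'s $T\setminus U$ component and the internal representatives reachable from $y$, incrementing a counter whenever none are already marked), whereas you maintain a union-find over internal representatives plus hanging-subtree roots and read off the number of classes at the end; both yield the same $O(k^2\nu)$ bound.
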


\begin{proof}
    We can remodel the problem of counting the components of $G \sm U$ as the following. Fix a vertex $u \in U$, and suppose that $N(u) = \{y_1, \ldots, y_{|N(u)|}\}$. For each vertex $y_i$, we wish to determine whether any vertex in $\{y_1, \ldots, y_{i-1}\}$ is connected to $y_i$. Let $q$ denote the number of vertices $y_i$ such that $y_i$ is not connected to any vertex in $\{y_1, \ldots, y_{i-1}\}$. It is straightforward to see that $q$ is the number of components of $G \sm U$. The idea is to use a DFS-tree $T$ and calls to the pairwise connectivity oracle to count $q$. Consider \Cref{alg:count-low-deg-comps}.

    \begin{algorithm}[!ht]
    \caption{Counting components for low-degree unverified sets}
    \label{alg:count-low-deg-comps}
    \begin{algorithmic}[1]
        \Require{$(x, \nu, \Pi, U)$ - a tuple that \Cref{alg:low-deg} returned $\TRU$ on}
        \Ensure{the number of components of $G \sm U$}

        \State $T \leftarrow$ a global DFS-tree initialized prior to calling this algorithm
        \State $\I \leftarrow$ representatives for internal components of $T \sm U$ (from \Cref{lem:internal-comps}) \label{alg:count:get-internals}
        \State $u \leftarrow$ a vertex in $U$ with $\deg(u) \leq \nu$ \label{alg:count:get-low-deg}

        \item[]
        \State $\tt{counter} \leftarrow 0$
        \For {each edge $(u, y)$ adjacent to $u$ such that $y \notin U$} \label{alg:count:edge}
            \State $r_y \leftarrow$ the root of the component of $T \sm U$ containing $y$ \label{alg:count:root}
            \If {$r_y$ is marked as seen} \label{alg:count:seen-root}
                \State \textbf{skip} to next edge
            \EndIf
            \State mark $r_y$ as seen \label{alg:count:mark-root}

            \item[]
            \State $\tt{flag} \leftarrow \TRU$
            \For {each vertex $v \in \I$} \label{alg:count:internals}
                \If {$y$ and $v$ are not connected in $G \sm U$} \label{alg:count:y-v-conn}
                    \State \textbf{skip} to next vertex
                \EndIf
                \If {$v$ is marked as seen} \label{alg:count:seen-internal}
                    \State $\tt{flag} \leftarrow \FLS$
                    \State \textbf{break}
                \Else
                    \State mark $v$ as seen \label{alg:count:mark-internal}
                \EndIf
            \EndFor

            \item[]
            \If {$\tt{flag}$ is $\TRU$}
                \State $\tt{counter} \leftarrow \tt{counter} + 1$ \label{alg:count:increment}
            \EndIf
        \EndFor
    
        \State \textbf{return} $\tt{counter}$
    \end{algorithmic}
    \end{algorithm}

    Now we prove the correctness of \Cref{alg:count-low-deg-comps}. Suppose we are processing an edge $(u, y)$ during the \textbf{for} loop on line \ref{alg:count:edge}. We wish to determine whether there exists an edge $(u, y')$ processed in a previous iteration such that $y$ and $y'$ are connected in $G \sm U$. Assume that such an edge exists. Let $(u, y')$ be the earliest edge processed in a previous iteration such that $y$ and $y'$ are connected in $G \sm U$. Let $C_y, C_{y'}$ denote the components containing $y, y'$ in $T \sm U$, respectively. \Cref{lem:T-conn} implies that either $C_y = C_{y'}$, or $y$ and $y'$ are both connected to an internal component of $T \sm U$. If $C_y = C_{y'}$, then we would have marked the root of $C_y$ as seen when processing $(u, y')$ on line \ref{alg:count:mark-root}. Otherwise, $y$ and $y'$ are both connected to an internal component $C$ of $T \sm U$. Let $v$ be the unique vertex in $\I \cap C$, which we know must exist by \Cref{lem:internal-comps} and line \ref{alg:count:get-internals}. During the \textbf{for} loop on line \ref{alg:count:internals}, we will see that $v$ was marked as seen on line \ref{alg:count:seen-internal}. This is because when processing $(u, y')$, we must have eventually marked $v$ as seen on line \ref{alg:count:mark-internal}. Conversely, suppose that there does not exist a prior edge $(u, y')$ such that $y$ and $y'$ are connected in $G \sm U$. Again, \Cref{lem:T-conn} implies that $C_y$ should not be marked as seen and no internal component connected to $y$ in $G \sm U$ should be marked as seen. Otherwise, it would contradict our assumption. Therefore, $\tt{flag}$ is $\TRU$ by line \ref{alg:count:increment} and we correctly increment the component counter.

    Now we prove the time complexity of the algorithm. Line \ref{alg:count:get-internals} runs in $\O(k^2)$ time by \Cref{lem:internal-comps}. The main \textbf{for} loop on line \ref{alg:count:edge} runs for $\O(\nu)$ iterations. Line \ref{alg:count:root} can be implemented in $\O(k)$ time as seen in \cite{Kos23}. The \textbf{for} loop on line \ref{alg:count:internals} runs for $\O(k)$ iterations. For each iteration, we need to query whether $y$ and $v$ are connected in $G \sm U$ or whether $v$ has been marked as seen. This can be done in $\O(k)$ by querying the pairwise connectivity oracle in \Cref{thm:pairwise-conn-oracle}. Hence, lines \ref{alg:count:y-v-conn}-\ref{alg:count:mark-internal} can be implemented in $\O(k)$. This means lines \ref{alg:count:internals}-\ref{alg:count:mark-internal} runs in $\O(k^2)$ time. Tallying all the steps up, the entire modification runs in $\O(k^2 + k + \nu(k + k^2)) = \O(k^2 \nu)$ time additional to the time takes by \Cref{lem:low-deg-aux:time}.
\end{proof}

\subsection{Counting Components for High-Degree Unverified Sets}
\begin{lem} \label{lem:high-deg-mod}
    Suppose that \Cref{alg:high-deg} returns a list $\L$ of $k$-shredders on input $\U$. There exists a modified version of \Cref{alg:high-deg} such that for each $k$-shredder $S \in \L$, the modified version also computes the number of components of $G \sm S$. The modification requires $\O(m \log^2 n)$ additional time, subsumed by the time complexity of \Cref{alg:high-deg}.
\end{lem}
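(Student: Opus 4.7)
The plan is to attach a global preprocessing phase to \Cref{alg:high-deg}, independent of its sampling loop. For each level $i = 0, 1, \ldots, \lceil \log(m/k) \rceil$, let $V_i := \{v \in V : \deg(v) \leq 2^i\}$ and $G_i := G[V_i]$. First I would enumerate the connected components of $G_i$ with a single BFS per level. Then, for each component $X$ of $G_i$, I would compute its neighborhood $N_G(X) \subseteq V \setminus V_i$ by scanning the incident edges of $X$, and if $|N_G(X)| = k$, insert $X$ into a dictionary $H_i$ keyed by the sorted tuple $N_G(X)$, maintaining the count of components with each key. I would also precompute $|V \setminus V_i|$ for every level.

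The key structural claim is that, for a high-degree $k$-shredder $U$ captured at level $\nu = 2^i$ with partition $(C, \R)$, the components of $\R$ are exactly those $G_i$-components $X$ with $N_G(X) = U$. Each $Q \in \R$ satisfies $\vol(Q) \leq \vol(\R) \leq \nu$, so every vertex of $Q$ has degree at most $\nu$ and $Q \subseteq V_i$; since $U \cap V_i = \varnothing$ by the high-degree assumption, a short maximality argument shows $Q$ is itself a $G_i$-component. Conversely, any $G_i$-component $X$ with $N_G(X) = U$ is an entirely low-degree component of $G \setminus U$, so $X \in \R$ unless $X = C$. The identity $N_G(Q) = U$ for every component $Q$ of $G \setminus U$ follows from minimality of the vertex cut $U$: if $N_G(Q) \subsetneq U$ for some $Q$, then $N_G(Q)$ would itself be a vertex cut of size below $k$ separating $Q$ from the other components, contradicting vertex connectivity $k$.

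After \Cref{alg:high-deg} identifies a $k$-shredder $U$ at level $\nu = 2^i$, the count of components of $G \setminus U$ is therefore $H_i[U]$, except for a potential off-by-one contributed by $C$. The component $C$ is itself a $G_i$-component with $N_G(C) = U$ exactly when $C \subseteq V_i$, i.e., when $V \setminus V_i \subseteq U$; since $U \subseteq V \setminus V_i$ for high-degree $U$, this reduces to the easily tested condition $|V \setminus V_i| = k$. Thus the final answer is $H_i[U]$ if $|V \setminus V_i| = k$ and $H_i[U] + 1$ otherwise, computable in $\O(k)$ per shredder.

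The main obstacle I anticipate is the running-time bookkeeping rather than correctness. The BFS and the companion neighborhood scan at level $i$ each take $\O(m)$, giving $\O(m \log n)$ across the $\O(\log n)$ levels. Hashing each $N_G(X)$ costs $\O(k)$; since $\sum_i |V_i| = \O(n \log n)$ in the worst case, total hashing is $\O(k n \log n) = \O(m \log n)$ after the sparsification reducing $m$ to $\O(kn)$. The per-shredder query cost of $\O(k)$ is absorbed by the existing $\O(|\L| \cdot k) = \O(kn)$ bookkeeping in \Cref{alg:high-deg}. Altogether, the added work is $\O(m \log^2 n)$, subsumed by the $\O(k^2 m \log^2 n)$ budget of \Cref{alg:high-deg}. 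The edge case of $C$ being counted in $H_i$ is the one subtle piece; the $|V \setminus V_i| = k$ test resolves it cleanly.
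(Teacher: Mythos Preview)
Your approach is genuinely different from the paper's: the paper piggybacks on the existing sampling loop, maintaining a single global dictionary $M$ that maps each unverified $U$ to a set of component representatives, adding $y$ to $M[U]$ whenever a freshly recovered $Q_y$ with $N(Q_y)=U$ contains no vertex already in $M[U]$; the final count is $|M[U]|+1$, correct with high probability because every $Q\in\R$ is hit at its own volume level. Your deterministic precomputation of the level graphs $G_i$ and their component-neighborhood dictionaries $H_i$ is cleaner and introduces no extra randomness.

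There is, however, a real gap in how you link the precomputation to \Cref{alg:high-deg}. You read off $H_i[U]$ at the level $i$ where \Cref{alg:high-deg} adds $U$ to $\L$, but that level is determined by the single component $Q_y$ that happened to be recovered, i.e.\ $2^{i-1}<\vol(Q_y)\le 2^i$. This can be strictly smaller than the level $i^*$ with $\vol(\R)\le 2^{i^*}$ that your structural claim requires. At such a smaller level, another component $Q'\in\R$ may contain a vertex of degree in $(2^i,2^{i^*}]$, so $Q'\not\subseteq V_i$ and $H_i[U]$ undercounts. For instance, if $\R$ has components of volumes $4$ and $100$, \Cref{alg:high-deg} may certify $U$ already at level $i=2$, where $H_2[U]=1$.

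The repair is cheap and stays within your budget: once $U$ is certified, set $i_{\max}$ to be the largest $i$ with $\min_{u\in U}\deg(u)>2^i$ (computable in $\O(k)$) and return $H_{i_{\max}}[U]$ together with your $|V\setminus V_{i_{\max}}|=k$ correction for $C$. Since $U$ is high-degree we have $i_{\max}\ge i^*$, so every $Q\in\R$ lies in $V_{i_{\max}}$ and $U\cap V_{i_{\max}}=\varnothing$, making your structural argument go through verbatim.
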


\begin{proof}
    Recall that \Cref{alg:high-deg} uses sampling to list high-degree $k$-shredders. Fix a high-degree $k$-shredder $S$ with partition $(C, \R)$. Recall from \Cref{sec:high-deg} that the key observation is that $S$ forms a wall of high-degree vertices. We exploited this by sampling vertices in $\R$ and exploring regions of vertices with low degree. The idea is that $S$ will restrict the graph exploration within one component of $\R$, and we can recover that component. We can actually extend this idea a bit further to count the number of components in $\R$. Let $Q$ be an arbitrary component of $\R$. Let $\nu$ be the power of two satisfying $\frac{1}{2}\nu < \vol(Q) \leq \nu$. By \Cref{lem:hitting-set}, we can sample $400 \frac{m}{\nu} \log n$ edges to obtain an edge $(y, z)$ such that $y \in Q$ with high probability. Since $\vol(\R) < \frac{m}{k}$, we have $\vol(Q) < \frac{m}{k}$. So, if we vary $\nu$ over all powers of two up to $\frac{m}{k}$ and repeat the sampling scheme, we will eventually obtain a vertex $y \in Q$ for all components $Q \in \R$ with high probability. Notice that we have not modified \Cref{alg:high-deg} yet, the probability analysis already holds for the algorithm as it currently stands.

    The modification is as follows. We can keep a dictionary $M$ whose keys are the unverified sets in $\U$ (in $k$-tuple form) mapped to sets of vertices initialized to empty sets. We can think of $M[U]$ as a set of vertices that are each in a distinct component of $G \sm U$. The idea is to add vertex samples to $M[U]$ one by one, collecting representatives in distinct components of $G \sm U$. We add a few steps after line \ref{alg:high-deg:U-exists}. First, we check whether any vertices in $Q$ are in the set $M[U]$. If so, this means that the component of $G \sm U$ containing $y$ was already explored during a previous sample. Otherwise, we append $y$ to $M[U]$. We claim that if $U$ is a high-degree $k$-shredder in $\L$ at the end of \Cref{alg:high-deg}, then the number of components of $G \sm U$ is $|M[U]| + 1$. The $+1$ term is due to the fact that the large component $C$ of $G \sm U$ will not be counted because $\vol(C) \gg \frac{m}{k}$.

    To see why this modification works, notice that every time we enter the \textbf{if} branch on line \ref{alg:high-deg:U-exists}, we have found a component $Q$ of $G \sm U$ because $N(Q) = U$. The set $M[U]$ stores a list of vertices each in distinct components of $G \sm U$ that have been explored. Whenever we explore a component $Q$, we simply need to check whether any vertices in $Q$ are in $M[U]$. If not, then it means $Q$ has not been explored before. Therefore, we add $y$ to $M[U]$ as a representative of the component $Q$. We have already showed that at the end of the entire algorithm, we will have sampled a vertex $x \in Q$ for all components $Q \in \R$ with high probability. This means all components in $\R$ will contain one unique representative in $M[U]$. Hence, $|M[U]| = |\R|$, so the number of components of $G \sm U$ is $|M[U]| + 1$.
    
    As for time complexity, we can use hashing to retrieve the set $M[U]$ given $U = N(Q)$ in $\O(k)$ time, as each unverified set must be exactly $k$ vertices. We can check whether a vertex is in $M[U]$ in $\O(1)$ time. Adding new vertices to a hash set can be done in $\O(1)$. Checking whether any vertex in $Q$ is in $M[U]$ can be done in $\O(\vol(Q))$ time by iterating over all vertices in $Q$ and performing a lookup in $M[U]$. Furthermore, the amount of insertion operations into $M[U]$ over all unverified sets $U$ is upper bounded by the number of vertex samples we use, which is at most $\O(m \log n)$ (recall \Cref{alg:high-deg}). Since we spend $\O(\vol(Q)) = \O(\nu)$ additional time for each sample, the additional cost of this modification is $\O(m \log^2 n)$. We conclude that the cost is subsumed by the running time of the original algorithm.
\end{proof}

\corshatter*
\begin{proof}
    We can keep track of the maximum number of components separated by all $k$-shredders listed so far. Using \Cref{lem:local-mod,lem:low-deg-mod,lem:high-deg-mod}, we can combine all three modifications so that for every $k$-shredder we list, we also keep track of the one maximizing the number of separated components. The proof of time complexity is identical to that of \Cref{lem:aks-time}. The only difference is that \Cref{alg:low-deg} now runs in $\O(k^4 \log n + k^2 \nu)$. However, this does not change the overall time complexity of the algorithm.
\end{proof}

\section{Conclusion and Open Problems} \label{sec:conclusion}
We have presented a Monte Carlo algorithm for finding all the $k$-shredders of a $k$-vertex-connected graph in almost-linear time for fixed values of $k$ and for all $k = \O(\polylog(n))$. This greatly improves upon the strongly-quadratic running time of the previously best known algorithm in \cite{CT99}. In our work, we have shown how to take advantage of recent developments in local flow algorithms \cite{FNSYY20} and connectivity oracles subject to vertex failures \cite{Kos23}. One can hope to extend our algorithms to improve the time bound of the dynamic algorithm that maintains $k$-shredders under edge insertions and deletions in \cite{CT99}. More importantly, it would be interesting if these results could be extended to improve the vertex-connectivity augmentation algorithm suggested by \cite{CT99} and \cite{Jor95}. 

One may also investigate the Pfaffian orientation problem for bipartite graphs. As mentioned earlier, Hegde observed that one of the bottlenecks in the Pfaffian orientation algorithm shown by \cite{RST99} is determining whether a special type of bipartite graph known as a \emph{brace} contains a 4-shredder. However, while a brace on $n \geq 5$ vertices is guaranteed to be 3-connected by \cite[3.2]{RST99}, it is not necessarily 4-connected. It is unlikely that our algorithm can be modified to accomplish this task, as our algorithms for listing $k$-shredders heavily rely on the fact that between any two vertices, there exist $k$ openly-disjoint paths. This is no longer the case in $(k-1)$-connected graphs. Hence, whether there is a near-linear time algorithm for finding a 4-shredder in a brace remains an interesting open problem.

More generally, when $k$ is not a fixed constant, we are unaware of any polynomial time algorithm that finds a $k$-shredder in $(k-1)$-connected graphs. Note that it is possible for a $(k-1)$-connected graph to contain an exponential number of $k$-shredders. Specifically, one can construct $(k-1)$-connected graphs that admit $\Omega(2^{k-1})$ $(k-1)$-separators such that each $(k-1)$-separator can be extended to a $k$-shredder.
\begin{figure}[!ht]
    \centering
    \includegraphics[scale=0.18]{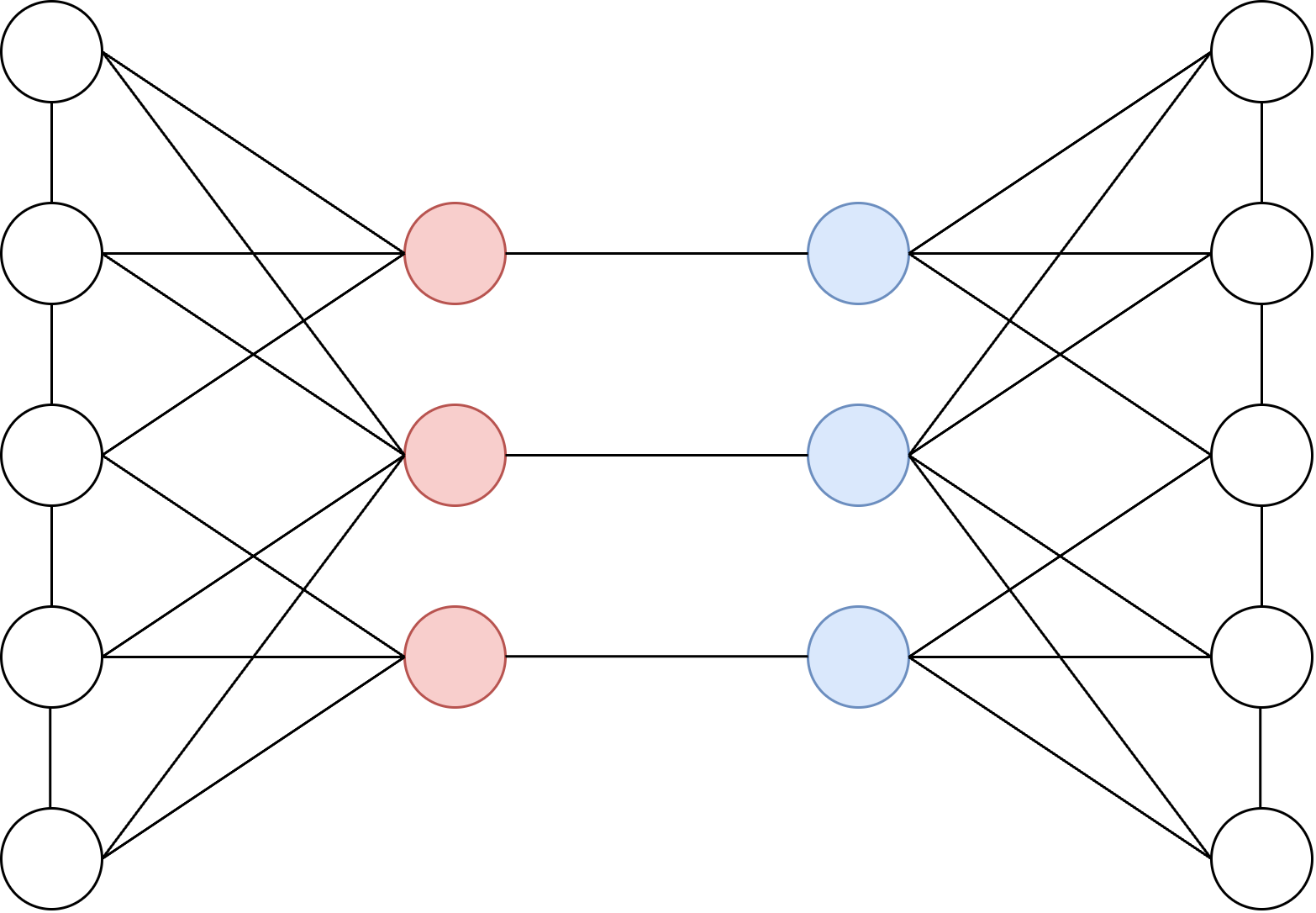}
    \caption{A 3-connected graph that admits $2^3$ 4-shredders.}
    \label{fig:dumbbell}
\end{figure}
For example, the 3-connected graph shown in \Cref{fig:dumbbell} admits at least $2^3$ many 3-separators by picking one red or blue vertex for each of the three middle edges. It can be shown by casework that each 3-separator of this form can be extended to a 4-shredder. While this construction seems to rule out the possibility of listing all $k$-shredders in general $(k-1)$-connected graphs in polynomial time, it is possible that the bipartiteness of braces can be exploited for a lower upper bound and faster algorithm. We leave the problem of removing this bottleneck to the reader.

\newpage
\bibliographystyle{alpha}
\bibliography{citations}

\newpage
\appendix
\section{Proof of Lemma \ref{lem:balanced}} \label{sec:omitted-proofs:2}
\lembalanced*
Fix a balanced $k$-shredder $S$ of $G$ with partition $(C, \R)$. Suppose we sample two edges $(x, x')$ and $(y, y')$ such that $x \in \R$ and $y$ is in a different component from $x$ in $G \sm S$. Then, $\SHR(x, y)$ will list $S$. Consider \Cref{alg:balanced}.

\begin{algorithm}[!ht]
\caption{Listing balanced $k$-shredders}
\label{alg:balanced}
\begin{algorithmic}[1]
    \Require{$G = (V, E)$ - an $n$-vertex $m$-edge undirected graph with vertex connectivity $k$}
    \Ensure{$\L$ - a list containing all balanced $k$-shredders of $G$}
    \For {$800k \log n$ times} \label{alg:balanced:num-samples}
        \State independently sample two edges $(x, x'), (y, y')$ uniformly at random \label{alg:balanced:sample-edge}
        \State $\L \leftarrow \L \cup \SHR(x, y)$ \label{alg:balanced:call-SHR}
    \EndFor
    \State \textbf{return} $\L$
\end{algorithmic}
\end{algorithm}
\begin{proof}[Proof of \Cref{lem:balanced}]    
    Firstly, we can assume that $m \geq 2k^2$. Otherwise, the algorithms given by \cite{CT99} already run in $\O(k^4 n)$ time, which is near-linear for $k = \O(\polylog (n))$. Therefore, if we sample an edge $(x, x')$ uniformly at random, the probability that at least one of $x$ or $x'$ is not in $S$ is at least $1/2$. This follows from the fact that $|E(S, S)| \leq k^2$. Since we randomly pick an endpoint of an edge sampled uniformly at random, $\Pr[x \notin S]$ is at least $1/2 \cdot 1/2 = 1/4$.

    Suppose that we have sampled an edge $(x, x')$ such that $x \notin S$. Let $Q_x$ denote the component of $G \sm S$ containing $x$. For $y$ to be in a different component of $G \sm S$, we need to analyze the probability that $y \in V \sm (Q_x \cup S)$. Notice that because $S$ is balanced, we have that $\vol(V \sm (Q_x \cup S)) \geq \vol(\R)$. This follows from the fact that if $Q_x = C$, then we obtain equality. Otherwise, because $C$ is the largest component by volume, the LHS can only be greater. Hence, we have
    \begin{align*}
        \Pr[y \in V \sm (Q_x \cup S)] & \geq \frac{1}{2} \cdot \frac{\vol(\R)}{m} \\
        & \geq \frac{1}{2k}.
    \end{align*}
    Therefore, the joint probability that $x \notin S$ and $y$ is in a different component from $x$ in $G \sm S$ is at least $1/4 \cdot 1/2k = 1/8k$. Let a success be the event that $x \notin S$ and $y$ is in a different component from $x$. If we repeat the process of sampling two edges for $800k \log n$ times, the probability that no rounds succeed is at most
    \begin{align*}
        \left( 1 - \frac{1}{8k} \right)^{800k \log n} & \leq \left(\left( 1 - \frac{1}{8k} \right)^{8k} \right)^{100 \log n} \\
        & \leq \left( \frac{1}{e} \right)^{100 \log n} \\
        & \leq \left( \frac{1}{2} \right)^{100 \log n} \\
        & \leq n^{-100}.
    \end{align*}
    Hence, with probability $1 - n^{-100}$, we will eventually sample vertices $x$ and $y$ such that $x$ and $y$ are in different components of $G \sm S$. $\SHR(x, y)$ will list $S$ on line \ref{alg:balanced:call-SHR}. Furthermore, the algorithm only returns $k$-shredders because it only lists sets returned by $\SHR(\cdot, \cdot)$, which are guaranteed to be $k$-shredders.
    
    Now we calculate the time complexity. We sample two edges repeatedly for a total of $800k\log n$ rounds by line \ref{alg:balanced:num-samples}. For each round of sampling, we call $\SHR(\cdot, \cdot)$ once, which runs in $\O(km)$ time. We conclude that the entire algorithm runs in $\O(k \log n) \cdot \O(km) = \O(k^2 m \log n)$ time.
\end{proof}

\section{Omitted Proofs from Section \ref{sec:review}}
\label{sec:omitted-proofs:4}

\lemxystraddle*

\begin{proof}
    Let $S$ be a candidate $k$-shredder with respect to $\Pi$. Suppose that there exists a bridge $\Gamma$ of $\Pi$ that straddles $S$. We will show that there exists a path $P$ from $x$ to $y$ in $G \sm S$, which proves $S$ does not separate $x$ and $y$. Since $\Gamma$ straddles $S$, there exist paths $\pi_1, \pi_2$ in $\Pi$ and vertices $u \in \pi_1(\Gamma), v \in \pi_2(\Gamma)$ such that $\delta_{\pi_1}(u) < \delta_{\pi_1}(\pi_1(S))$ and $\delta_{\pi_2}(v) > \delta_{\pi_2}(\pi_2(S))$. First, we show that there exists a path $P$ from $u$ to $v$ in $G \sm S$. If $\Gamma$ is the edge $(u, v)$, the path $P = (u, v)$ exists in $G \sm S$ as $u \notin S$ and $v \notin S$. Otherwise, $\Gamma$ is a component $Q$ of $G \sm \Pi$ and both $u$ and $v$ are in $N(Q)$. This means there exist edges $(u, u')$ and $(v, v')$ in $G \sm S$ such that $u'$ and $v'$ are in $Q$. Because $Q$ is a component of $G \sm \Pi$, $v$ and $v'$ must be connected in $G \sm S$, as $S \subseteq \Pi$. Therefore, $u$ and $v$ are connected in $G \sm S$ via the vertices $u'$ and $v'$. Using $P$ as a subpath, we can construct a $x \e z$ path in $G \sm S$ as follows. Consider the path obtained by joining the intervals: $\pi_1[x \e u] \cup P \cup \pi_2[v \e z]$. Note that $\pi_1[x \e u]$ exists in $G \sm S$ because $\delta_{\pi_1}(u) < \delta_{\pi_1}(\pi_1(S))$, so no vertex of $S$ lives in the interval $\pi_1[x \e u]$. Similarly, $\pi_2[v \e z]$ exists in $G \sm S$ because $\delta_{\pi_2}(v) > \delta_{\pi_2}(\pi_2(S))$. Finally, as $P$ exists in $G \sm S$, the joined path from $x$ to $z$ exists in $G \sm S$.
    
    Conversely, suppose that no bridge of $\Gamma$ straddles $S$. We first show that there does not exist a path from $x$ to $y$ in $G \sm S$. Suppose for the sake of contradiction that there exists a path $P$ from $x$ to $y$ in $G \sm S$. Let $u$ denote the last vertex along $P$ such that $u \in \pi_1$ for some path $\pi_1 \in \Pi$ and $\delta_{\pi_1}(u) < \delta_{\pi_1}(\pi_1(S))$. Such a vertex must exist because $x \in P$ and satisfies $\delta_{\pi}(x) < \delta_{\pi}(\pi(S))$ for all paths $\pi \in \Pi$. Similarly, let $v$ denote the first vertex along $P$ after $u$ such that $v \in \pi_3$ for some path $\pi_3 \in \Pi$ and $\delta_{\pi_3}(v) > \delta_{\pi_3}(\pi_3(S))$. Such a vertex must exist because $y \in P$ and satisfies $\delta_{\pi}(\pi(S)) < \delta_{\pi}(y)$ for all paths $\pi \in \Pi$. Consider the open subpath $P(u \e v)$. We claim that every vertex in $P(u \e v)$ must not be in $\Pi$. To see this, suppose that there exists a vertex $w \in P(u \e v)$ such that $w \in \pi_3$ for a path $\pi_3 \in \Pi$. Then, $\delta_{\pi_3}(w) = \delta_{\pi_3}(\pi_3(S))$ as otherwise, this would contradict the definition of $u$ or $v$. This implies $w = \pi_3(S)$ because the vertex distance $\delta_{\pi_3}(\pi_3(S))$ from $x$ along $\pi_3$ is unique. However, this contradicts the definition of $P$ because $P$ is a path in $G \sm S$: it cannot contain $\pi_3(S)$. Hence, the inner vertices in the open subpath $P(u \e v)$ exist as a bridge $\Gamma$ of $\Pi$ with attachment set $\{u, v, \ldots \}$. If there are no vertices in the subpath, then $\Gamma$ is the edge $(u, v)$. Most importantly, we have that $\delta_{\pi_1}(u) < \delta_{\pi_1}(\pi_1(S))$ and $\delta_{\pi_3}(v) > \delta_{\pi_3}(\pi_3(S))$. Therefore, $\Gamma$ straddles $S$, arriving at our desired contradiction.
\end{proof}

\lemprune*

To prove \Cref{lem:prune}, we first show the following slightly different lemma. This lemma gives a workable data structure needed to prune candidate $k$-shredders efficiently.

\begin{lem} \label{lem:candidate-candidate}
    Let $x$ be a vertex in $G$ and let $\Pi$ denote a set of $k$ openly-disjoint paths starting from $x$ whose total length over all paths is at most $\ell$. Let $\C$ be a set of candidates along $\Pi$. There exists a deterministic algorithm that, given $(\Pi, \C)$ as input, lists all candidates in $\C$ which are not straddled by any other candidate in $\C$ in $\O(k |\C| \log |\C| + \ell)$ time.
\end{lem}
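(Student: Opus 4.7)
My plan is to rephrase ``not straddled by another candidate'' as a statement about componentwise comparability of integer $k$-tuples, and then solve that by a single lexicographic sort combined with prefix/suffix aggregation.

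\textbf{Preprocessing ($\O(\ell)$).} Walk each $\pi \in \Pi$ from $x$ to its far-most endpoint, assigning to every vertex $v \in \pi$ its distance $\delta_\pi(v)$. Because every candidate $S \in \C$ is a $k$-tuple (one vertex on each path), the distance vector
$$
v(S) \;:=\; \bigl(\delta_{\pi_1}(\pi_1(S)),\, \delta_{\pi_2}(\pi_2(S)),\, \ldots,\, \delta_{\pi_k}(\pi_k(S))\bigr) \in \mathbb{Z}^k
$$
is well-defined, and all $|\C|$ such vectors can be built in $\O(k|\C|)$ time.

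\textbf{Reformulation.} Unpacking \Cref{def:order,def:straddle} for $k$-tuple candidates, two candidates $S, S'$ straddle each other iff $v(S)$ and $v(S')$ are \emph{incomparable} in the componentwise partial order on $\mathbb{Z}^k$; equivalently, they do not straddle iff $v(S) \preceq v(S')$ or $v(S) \succeq v(S')$ componentwise. Hence $S$ is not straddled by any other candidate iff $v(S)$ is comparable with $v(S')$ for every $S' \in \C \setminus \{S\}$.

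\textbf{Sort and scan ($\O(k|\C|\log|\C|)$).} Sort $\C$ lexicographically by $v(\cdot)$, producing an order $S_1, \ldots, S_{|\C|}$. Key observation: for any $j < i$ in lex order, at the first coordinate $c$ where the two vectors disagree we have $v(S_j)_c < v(S_i)_c$, so $v(S_j) \not\succeq v(S_i)$; comparability therefore forces $v(S_j) \preceq v(S_i)$ componentwise. Symmetrically, $j > i$ forces $v(S_j) \succeq v(S_i)$. Consequently, $S_i$ is not straddled by any other candidate iff for every coordinate $c \in \{1, \ldots, k\}$,
$$
\max_{j < i} v(S_j)_c \;\leq\; v(S_i)_c \;\leq\; \min_{j > i} v(S_j)_c.
$$
Compute per-coordinate prefix maxima and suffix minima with $k$ linear scans in $\O(k|\C|)$ total time, then test the $2k$ inequalities for each $S_i$ in $\O(k)$ and output every candidate passing all tests.

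\textbf{Main obstacle.} The conceptual crux is observing that lexicographic sorting collapses the two-sided ``comparable-to-everyone'' requirement into one-sided componentwise prefix/suffix bounds; once this is established, the rest is standard aggregation. One subtlety worth verifying is ties: when two distinct candidates share a vertex on some path, their vectors agree on that coordinate so both $\leq$ and $\geq$ hold there, and the prefix-max / suffix-min test remains correct because any genuine straddling pair is exposed on a coordinate where the two vectors strictly disagree in the ``wrong'' direction. With these observations, correctness is immediate and the total running time is $\O(\ell + k|\C|\log|\C|)$, as required.
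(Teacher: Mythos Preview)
Your proposal is correct and follows essentially the same approach as the paper: both reduce to lexicographic sorting of the distance $k$-tuples followed by per-column prefix-maximum and suffix-minimum scans (the paper phrases the scans as ``top-to-bottom tracking the max'' and ``bottom-to-top tracking the min,'' which is exactly your prefix/suffix aggregation). Your reformulation of straddling as componentwise incomparability and the one-sided reduction after lex sort are stated a bit more crisply, but the algorithm and analysis coincide with the paper's.
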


\begin{proof}
    The goal is to exploit the definition of straddling via sorting. First, order the $k$ paths in $\Pi$ arbitrarily. For each candidate $k$-shredder $S$ in $\C$, define the $k$-tuple $$(\delta_{\pi_1}(s_1), \delta_{\pi_2}(s_2), \dots, \delta_{\pi_k}(s_k))$$ where $s_i = \pi_i(S)$ is the vertex in $S$ that lies along the $i$-th path $\pi_i$ in $\Pi$. This $k$-tuple can be represented as a $k$-digit number in base $\ell$, inducing the lexicographical order between candidate $k$-shredders in $\C$. We sort the candidates by their $k$-tuple and organize the sorted list in a $|\C| \times k$ table (see \Cref{fig:sorted-tuples}) by non-decreasing order from top to bottom. This can be done via any stable sorting algorithm to sort each $k$-tuple for each of the $k$ digits, starting from the least significant digit to the most.
    \begin{figure}[!ht]
        \centering
        \includegraphics[scale=0.25]{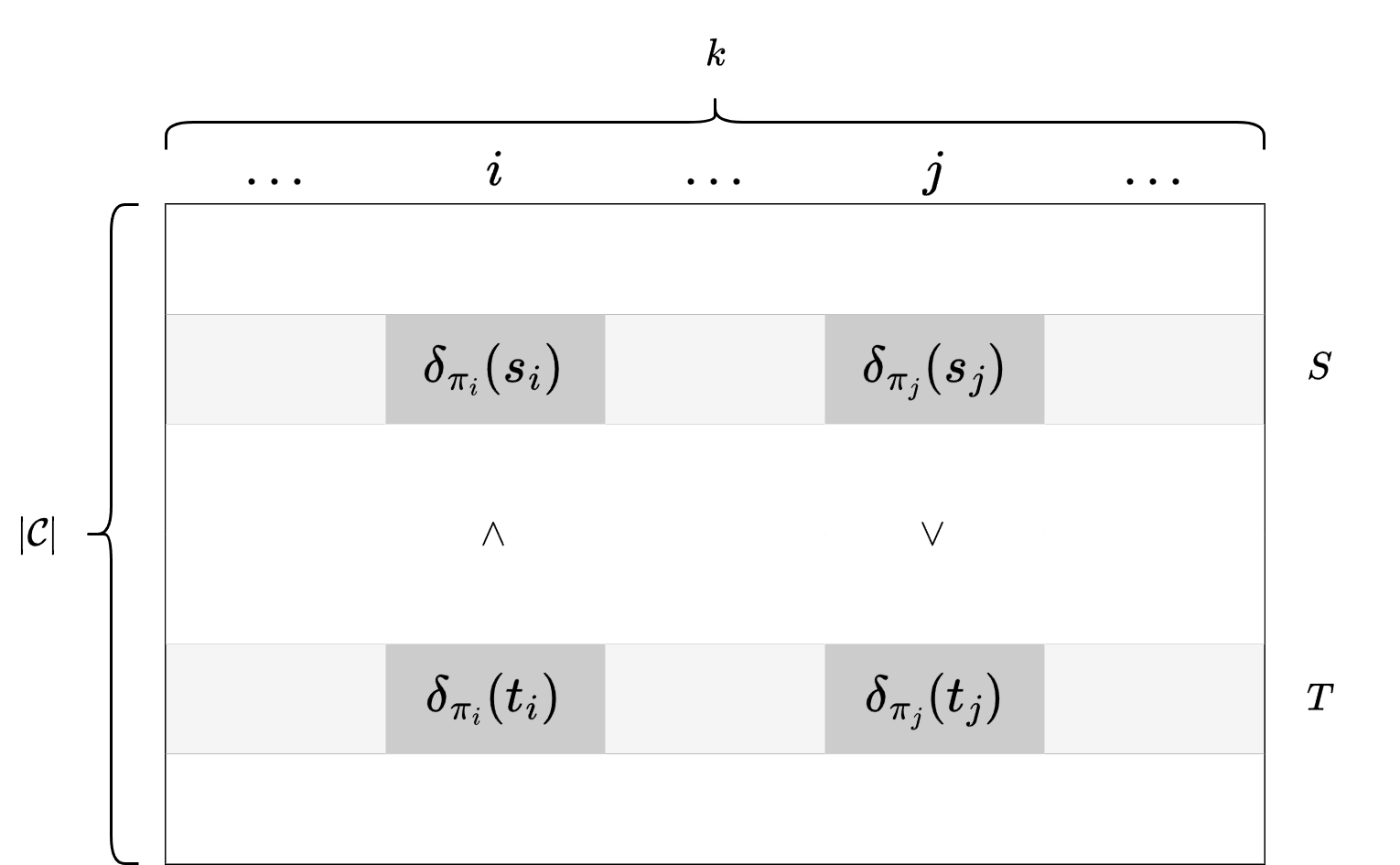}
        \caption{Candidate $k$-shredders sorted by their $k$-tuple numbers. Each row represents the $k$-tuple number $T_S$ for some candidate $k$-shredder $S \in \C$. Candidates are sorted in non-decreasing order from top to bottom. Here we have fixed two rows for two candidate $k$-shredders $S, T$ that straddle each other.}
        \label{fig:sorted-tuples}
    \end{figure}
    Using this sorted table, we can detect candidate $k$-shredders that are not straddled by others. The main idea is as follows. Suppose that a candidate $k$-shredder $S$ straddles another candidate $k$-shredder $T$. Let $s_i = \pi_i(S)$ and $ t_i = \pi_i(T)$. By \Cref{def:straddle}, there exist two paths $\pi_i, \pi_j \in \Pi$ (with $i < j$) such that $\delta_{\pi_i}(s_i) < \delta_{\pi_i}(t_i)$ and $\delta_{\pi_j}(s_j) > \delta_{\pi_j}(t_j)$. Without loss of generality, let $i$ be the smallest path index such that the above holds. Observe that because $i < j$ and $\delta_{\pi_i}(s_i) < \delta_{\pi_i}(t_i)$, the row for $S$ must be sorted above the row for $T$. On the other hand, in the $j$-th column of this table, we will see that the value in the row for $S$ ($\delta_{\pi_j}(s_j)$) is greater than the value in the row for $T$ ($\delta_{\pi_j}(t_j)$). This \textit{inversion} is the essential marker that we use to detect straddling. To formalize, a candidate $k$-shredder on row $s$ is straddled by another candidate if and only if there exists a column of the table with top-to-bottom-ordered entries $(e_1, \ldots, e_{|\C|})$ that contains an entry $e_t$ with $t < s$ and $e_t > e_s$, or there exists an entry $e_t$ with $t > s$ and $e_t < e_s$.

    Detecting which rows are involved in an inversion can be done as follows. For each column of the table, we scan through the column twice. In the first run, we scan from top to bottom, keeping track of the maximum value seen so far. If at any point, the entry on row $s$ is less than the maximum value seen so far, then we can infer that the $s$-th candidate $k$-shredder in the table was straddled. In the second run, we scan from bottom to top, keeping track of the minimum value seen so far. If the next entry on row $s$ is greater than the minimum, we can infer that the $s$-th candidate $k$-shredder in the table was straddled.

    We will prove why this scanning process suffices. Consider a candidate $k$-shredder sorted on row $s$. Suppose there exists a column of the table with top-to-bottom-ordered entries $(e_1, \ldots, e_{|\C|})$ such that there exists an entry $e_t$ with $t < s$ and $e_t > e_s$, or there exists an entry $e_t$ with $t > s$ and $e_t < e_s$. If $t < s$ and $e_t > e_s$, this implies that $e_s < \max\{e_1, \ldots, e_{s-1}\}$. We will detect this when scanning the column from top to bottom. Similarly, if $t > s$ and $e_t < e_s$, then $e_s > \min\{e_{s+1}, \ldots, e_{|\C|}\}$, which is detected when scanning the column from bottom to top. Conversely, suppose that when scanning a column from top to bottom, we detect that row $s$ is less than the maximum entry seen so far. Then, by definition, we have detected an inversion involving row $s$. When scanning from bottom to top, if we detect that row $s$ is greater than the minimum entry seen so far, we again have detected an inversion involving row $s$. Hence, this scanning process precisely captures any inversions if they occur. 
    
    Now, we calculate the time complexity. Firstly, we preprocess $\Pi$ so that for a vertex $v \in \Pi$, we can query the distance of $v$ from $x$ in constant time. This can be done in $\O(\ell)$ time by traversing each path in $\Pi$ starting from $x$ and marking the distance from $x$ for each traversed vertex. Next, we build the $k$-tuple numbers for each candidate in $\C$. This takes $\O(k|\C|)$ time. Since there are $|\C|$ $k$-tuple numbers, each of which are $k$ digits in base $\ell$, lexicographically sorting them by their digits takes $\O(k|\C| \log |\C|)$ time. The column scanning process can be completed in $\O(k|\C|)$ operations by iterating through the table twice. Hence, the entire algorithm runs in $\O(k|\C| \log |\C| + \ell)$ time.
\end{proof}

\begin{proof}[Proof of \Cref{lem:prune}]
    Firstly, we apply \Cref{lem:candidate-candidate} to eliminate candidate $k$-shredders that are straddled by other candidates in $\O(k|\C| \log |\C| + \ell)$ time. We are left with a set of bridges $\B$ and a set $\C$ of candidate $k$-shredders sorted by their $k$-tuple numbers in nondecreasing order. Because we have eliminated pairwise candidate straddling, the sorted list of remaining candidate $k$-shredders is well-ordered under $\preceq, \succeq$. That is, for any pair of candidate $k$-shredders $S_i, S_j$ ($i < j$), we have $S_i \preceq S_j$. Hence, if $S_j \preceq \Gamma$ for a bridge $\Gamma$ of $\Pi$, then $S_i \preceq \Gamma$ for all $i < j$. The goal is to develop a procedure that takes a bridge $\Gamma$ and determines all candidate $k$-shredders straddled by $\Gamma$ in $\O(\vol(\Gamma))$ time.

    Let $\pi \in \Pi$. Given a vertex $z \in \pi$, we can query for the greatest index $lo$ such that $S_{lo}$ satisfies $\delta_\pi(\pi(S_{lo})) \leq \delta_\pi(z)$. If no such index exists, then we set $lo = 0$. Similarly, we can query the smallest index $hi$ such that $S_{hi}$ satisfies $\delta_\pi(\pi(S_{hi})) \geq \delta_\pi(z)$. If no such index exists, we set $hi = |\C| + 1$. Using the sorted list of candidate $k$-shredders, we can preprocess the vertices of $\Pi$ in $\O(k |\C| + \ell)$ time to return $lo$ and $hi$ in constant time. To extend this a bit further, for each bridge $\Gamma \in \B$ we can compute the greatest index $lo$ and the least index $hi$ (if they exist) such that $S_{lo} \preceq \Gamma$ and $S_{hi} \succeq \Gamma$ in $\O(\vol(\Gamma))$ time. This can be done by simply scanning through the attachment set of $\Gamma$ and performing the queries described above. This is especially useful because every candidate $k$-shredder $S_i$ with $i \in (lo, hi)$ satisfies $S_i \not \preceq \Gamma$ and $S_i \not \succeq \Gamma$, which means $\Gamma$ straddles $S_i$. With these steps, we can compute an open interval $(lo, hi)$ containing the indices of all candidate $k$-shredders which are straddled by $\Gamma$. To find all candidate $k$-shredders which are straddled by any bridge in $\B$, we can simply merge all the $(lo, hi)$ intervals for each bridge $\Gamma \in \B$. Since $lo$ and $hi$ are nonnegative values that are on the order of $|\C|$, we can merge all intervals in $\O(|\C|)$ time. Tallying up all the steps, we have that the entire algorithm runs in $\O(k|\C| \log |\C| + \ell) + \O(\vol(\B)) + \O(|\C|) = \O(k|\C| \log |\C| + \ell + \vol(\B))$ time.
\end{proof}

\thmpairwise*

\begin{proof}[Proof of \Cref{thm:pairwise-shredders}]
    Given a pair of vertices $(x, y)$, obtaining a set $\Pi$ of $k$ openly-disjoint simple paths from $x$ to $y$ can be done in $\O(km)$ time. Finding all the bridges of $\Pi$ and obtaining the list of candidate $k$-shredders can be done in $\O(m+n)$ time using DFS/BFS. Note that the number of candidate $k$-shredders is $\O(n)$ because there are at most $\O(n)$ components of $G \sm \Pi$. 
    
    In \cite{CT99}, Cheriyan and Thurimella actually used radix sort to obtain the sorted table described in figure \Cref{fig:sorted-tuples}. Specifically, they noted that each digit in a $k$-tuple is bounded by $n$ because all simple paths from $x$ to $y$ are of length at most $n$. Hence, sorting at most $n$ $k$-digit numbers in base $n$ can be done in $\O(k (n + n)) = \O(kn)$ time using radix sort. After obtaining the sorted table, they followed the remaining steps in \Cref{lem:prune}. This entire pruning process therefore takes $\O(k(n + n) + m) = \O(kn + m)$ time. Hence, by \Cref{lem:xy-straddle}, we have obtained all $k$-shredders separating $x$ and $y$. In total, the algorithm takes $\O(km) + \O(m+n) + \O(kn + m) = \O(km)$ time.
\end{proof}

\section{Omitted Proofs from Section \ref{sec:local}}
\lemunverifiedset*
\begin{proof}
    We can assume that $|U| = k$ as otherwise $U = \{x\}$, which is caught by line \ref{alg:local:x-in-U}. Firstly, we have $\delta_{\pi}(\pi(U)) < \delta_{\pi}(z)$ because $U$ does not contain the endpoints of any path by line \ref{alg:local:U-prune}. The proof now becomes identical to that of \Cref{lem:local:no-straddle-is-shredder}.

    Assume for the sake of contradiction that $x$ and $z$ are connected in $G \sm U$. Then, there exists a path $P$ from $x$ to $z$ in $G \sm U$. Let $u$ denote the last vertex along $P$ such that $u \in \pi_1$ for some path $\pi_1 \in \Pi$ and $\delta_{\pi_1}(u) < \delta_{\pi_1}(\pi_1(U))$. Such a vertex must exist because $x \in P$ and satisfies $\delta_{\pi}(x) < \delta_{\pi}(\pi(U))$. Similarly, let $v$ denote the first vertex along $P$ after $u$ such that $v \in \pi_2$ for some path $\pi_2 \in \Pi$ and $\delta_{\pi_2}(v) > \delta_{\pi_2}(\pi_2(U))$. Such a vertex must exist because $z \in P$ and satisfies $\delta_{\pi}(\pi(U)) < \delta_{\pi}(z)$.

    Consider the open interval $P(u \e v)$ of $P$. We claim that every inner vertex $w$ in the open interval $P(u \e v)$ must not be in $\Pi$. To see this, suppose that there exists a vertex $w \in P(u \e v)$ such that $w \in \pi_3$ for a path $\pi_3 \in \Pi$. Then, $\delta_{\pi_3}(w) = \delta_{\pi_3}(\pi_3(U))$ as otherwise, we contradict the definition of $u$ or $v$. This implies $w = \pi_3(U)$ because the vertex distance $\delta_{\pi_3}(\pi_3(U))$ from $x$ along $\pi_3$ is unique. But this contradicts the definition of $P$ as $P$ is a path in $G \sm U$.

    Hence, the vertices of $P(u \e v)$ exist as a bridge $\Gamma$ of $\Pi$ with attachment set $\{u, v, \ldots \}$. If there are no vertices in $P(u \e v)$, then $\Gamma$ is the edge $(u, v)$. Furthermore, we have that $\delta_{\pi_1}(u) < \delta_{\pi_1}(\pi_1(U))$ and $\delta_{\pi_2}(v) > \delta_{\pi_2}(\pi_2(U))$. Therefore, $\Gamma$ straddles $U$. \Cref{lem:local:prune-straddled} implies $U$ would have been pruned. This is checked by line \ref{alg:local:U-prune}, which implies the algorithm would have returned $\varnothing$ for $U$, giving our desired contradiction.
\end{proof}

\lemtrackvolume*
\begin{proof}
    This was omitted from the pseudocode for simplicity. However, we can easily compute $\vol(Q_x)$ at the end of the algorithm by iterating through each path and exploring the bridges of $\Pi$ connected to the vertices preceding the unverified vertex. This takes $\O(\nu)$ time for each path because the unverified vertex $u$ is defined such the bridges attached to vertices preceding $u$ have total volume at most $\nu$. Hence, over all paths, this can be done in $\O(k \nu)$ time.
\end{proof}

\section{Proof of Lemma \ref{lem:hitting-set}}
\lemhittingset*
\begin{proof}
    If we sample an edge $(u, v)$ uniformly at random, we have
    \begin{alignat*}{2}
        & \Pr[u \in Q] && \geq \frac{1}{2} \cdot \frac{\vol(Q)}{m} \\
        \rightarrow & \Pr[u \notin Q] && \leq 1 - \frac{\vol(Q)}{2m}.
    \end{alignat*}
    If we independently sample $T$ edges $(u, v)$, we have
        $$
            \bigcap_{(u, v)} \Pr[u \notin Q] \leq \left(1 - \frac{\vol(Q)}{2m}\right)^T.
        $$
    If we set $T \geq \frac{2m}{\vol(Q)}$, then
        $$
            \left(1 - \frac{\vol(Q)}{2m}\right)^T \leq \left(1 - \frac{\vol(Q)}{2m}\right)^{\frac{2m}{\vol(Q)}} \leq \frac{1}{e}.
        $$
    Notice that the failure probability is now a constant which we can boost by scaling $T$ by a polylogarithmic factor. Because $\frac{1}{2} \nu < \vol(Q)$, we have
        $$\frac{2m}{\vol(Q)} \leq \frac{2m}{\frac{1}{2} \nu} \leq \frac{2m}{\frac{1}{2} \nu} \cdot 100 \log n = 400 \frac{m}{\nu} \log n.$$
    Setting $T = 400 \frac{m}{\nu} \log n$, we get
    \begin{align*}
        \bigcap_{(u, v)} \Pr[u \notin Q] \leq \left(1 - \frac{\vol(Q)}{2m}\right)^T = \left(1 - \frac{\vol(Q)}{2m}\right)^{\frac{2m}{\frac{1}{2} \nu} \cdot 100 \log n} & \leq \left(\frac{1}{e}\right)^{100 \log n} \\
        & \leq \left(\frac{1}{2}\right)^{100 \log n} \\
        & \leq n^{-100}.
    \end{align*}
    Therefore, with probability $1 - n^{-100}$, we will sample an edge $(u, v)$ such that $u \in Q$.
\end{proof}

\end{document}